\numberwithin{equation}{section}
\DeclareMathOperator{\sgn}{sgn}
\DeclareMathOperator{\arcosh}{arcosh}
\DeclareMathOperator{\Tr}{Tr}
\DeclareMathOperator{\im}{Im}
\DeclareMathOperator{\re}{Re}
\DeclareMathOperator{\dist}{dist}
\DeclareMathOperator{\Id}{Id}
\DeclareMathOperator{\Str}{Str}
\DeclareMathOperator{\Sdet}{Sdet}
\newcommand*\diff{\;\!\mathrm{d}}
\newcommand*\eto{\;\!\mathrm{e}}
\newtheorem{theo}{Theorem}[section]
\newtheorem{lemma}[theo]{Lemma}
\theoremstyle{definition}
\newtheorem{defi}[theo]{Definition}
\theoremstyle{remark}
\newtheorem*{rem}{Remark}
\newtheorem*{notation}{Notation}
\renewenvironment{proof}[1][\proofname]{%
  \par\pushQED{\qed}\normalfont%
  \topsep6\p@\@plus6\p@\relax
  \trivlist\item[\hskip\labelsep\bfseries#1\@addpunct{.}]%
  \ignorespaces
}{
  \popQED\endtrivlist\@endpefalse
}
\title{Density of States for Random Band Matrices \\
 in two dimensions
\footnote{Key words: random matrices, supersymmetric approach, cluster expansion, 
complex Gaussian measure; MSC 2010: 82B44 (primary), 82B20 (secondary)
}
}
\newglossaryentry{lambda}{name=\ensuremath{\Lambda}, description={$\subset \mathbb{Z}^2$, discrete cube},sort=1a}
\newglossaryentry{W}{name=\ensuremath{W}, description={band width},sort=1c}
\newglossaryentry{H}{name=\ensuremath{H}, description={$:\Lambda\times\Lambda\to\mathbb{C}$ random band matrix}, sort=1b}
\newglossaryentry{rho}{name=\ensuremath{\bar\rho_\Lambda(E) }, description={averaged density of states in finite volume $\Lambda$},sort=1d}
\newglossaryentry{G}{name=\ensuremath{G^+_{\Lambda}(z)}, description={Green's function, $z\in\mathbb{C}$},sort=1e}
\newglossaryentry{Eeps}{name=\ensuremath{E_\varepsilon}, description={$=E+i\varepsilon$ energy with imaginary part},sort=1g}
\newglossaryentry{sc}{name=\ensuremath{\rho_{SC}(E) }, description={Wigner's semicircle law},sort=1f}
\newglossaryentry{J}{name=\ensuremath{J}, description={initial covariance},sort=1h}
\newglossaryentry{I}{name=\ensuremath{\mathcal{I}}, description={energy interval},sort=1i}
\newglossaryentry{alpha}{name=\ensuremath{\alpha}, description={$\in (0,1)$, parameter entering in the definition of the reference volume in the cluster expansion},sort=1j}
\newglossaryentry{a,b}{name=\ensuremath{a,b}, description={$\in \mathbb{R}^\Lambda$ integration variables},sort=2a}
\newglossaryentry{mu}{name=\ensuremath{\diff\mu_J(a,b)}, description={Gaussian measure with covariance $J$},sort=2f}
\newglossaryentry{as,bs}{name=\ensuremath{a_s^\pm,b_s^\pm}, description={saddle points},sort=2b}
\newglossaryentry{mathcalE}{name=\ensuremath{\mathcal{E}}, description={$=\mathcal{E}_r-\mathcal{E}_i = \tfrac{E}{2}-i\sqrt{1-\tfrac{E^2}{4}}$, value of saddle point $a_s^+$},sort=2c}
\newglossaryentry{B}{name=\ensuremath{B}, description={new complex covariance, obtained after contour deformation},sort=2d}
\newglossaryentry{R(a,b)}{name=\ensuremath{\mathcal{R}(a,b)}, description={remainder in the functional integral after contour deformation},sort=2g}
\newglossaryentry{D}{name=\ensuremath{D}, description={diagonal matrix depending on $a,b$},sort=2h}
\newglossaryentry{V(a,b)}{name=\ensuremath{\mathcal{V}(a,b)}, description={effective potential after contour deformation},sort=2i}
\newglossaryentry{V(x)}{name=\ensuremath{V(x)}, description={cubic Taylor remainder},sort=2j}
\newglossaryentry{O(a,b)}{name=\ensuremath{\mathcal{O}(a,b)}, description={local observable, later $\mathcal{O}_{m,n}(a,b)$},sort=2k}
\newglossaryentry{C}{name=\ensuremath{C}, description={new real covariance},sort=2e}
\newglossaryentry{mr,mi}{name=\ensuremath{m_r,m_i}, description={real and imaginary part of complex mass term $1-\mathcal{E}^2$ of $C$},sort=2l}
\newglossaryentry{Is}{name=\ensuremath{I^s}, description={$\subset \mathbb{R}^\Lambda \times \mathbb{R}^\Lambda$, partition of integration domain, $s=1,...5$},sort=2m}
\newglossaryentry{Fmns}{name=\ensuremath{F^{m,n}_s}, description={functional integral with local observable $\mathcal{O}_{m.m}$ restricted to $I^s$},sort=2n}
\newglossaryentry{M}{name=\ensuremath{M}, description={$=(M_j)_{j\in\Lambda}$ set of $2\times 2$ supermatrices},sort=3a}
\newglossaryentry{rhoj}{name=\ensuremath{(\bar\rho_j,\rho_j)_{j\in\Lambda}}, description={set of Grassmann variables}, sort=3b}
\newglossaryentry{muM}{name=\ensuremath{\diff\mu_B(M)}, description={Gaussian measure in both complex and Grassmann variables}, sort=3c}
\newglossaryentry{V(M)}{name=\ensuremath{\mathcal{V}(M)}, description={effective potential depending on the supermatrix $M$}, sort=3d}
\newglossaryentry{Ytilde}{name=\ensuremath{\tilde Y}, description={$(\triangle_0,\tilde\triangle_1,\dots,\tilde\triangle_r)$ generalized polymer}, sort=3i}
\newglossaryentry{T}{name=\ensuremath{T}, description={ordered tree on generalized polymer $\tilde Y$}, sort=3j}
\newglossaryentry{triangle}{name=\ensuremath{\triangle}, description={cube in $\mathbb{Z}^2$ of size $W^2(\ln W)^\alpha$}, sort=3k}
\newglossaryentry{triangle0}{name=\ensuremath{\triangle_0}, description={root cube containing $0$}, sort=3l}
\newglossaryentry{tildetriangle}{name=\ensuremath{\tilde\triangle}, description={$=(\triangle,\triangle',\triangle'')$ generalized cube}, sort=3m}
\newglossaryentry{ijkk'}{name=\ensuremath{i,k,k',j}, description={$\in\mathbb{Z}^2$ indices summed over  $i\in\triangle',j\in\triangle'',k'\in\triangle,k\in$ ``old'' cubes}, sort=3n}
\newglossaryentry{Gq}{name=\ensuremath{G_q(s)}, description={propagator depending only on $s_1,\dots,s_q$}, sort=3h}
\newglossaryentry{C(s)}{name=\ensuremath{C(s)}, description={interpolated real covariance $C(s)_{ij}=s_{ij}C_{ij}$}, sort=3f}
\newglossaryentry{B(s)}{name=\ensuremath{B(s)}, description={interpolated complex covariance $(C(s)^{-1}+i\sigma_Em_i^2)^{-1}$}, sort=3g}
\newglossaryentry{sp}{name=\ensuremath{s_p}, description={inductively introduced interpolation parameters}, sort=3e}
\begin{document}
\author{Margherita Disertori\footnote{Institute for Applied Mathematics
\& Hausdorff Center for Mathematics, 
University of Bonn, Endenicher Allee 60, 53115 Bonn, Germany
\newline
E-mails: disertori@iam.uni-bonn.de,  lager@iam.uni-bonn.de}
 \ and \ 
Mareike Lager\footnotemark[2]}

\maketitle

\begin{abstract}
We consider a two dimensional random band matrix ensemble,
in the limit of infinite volume and fixed but large band width $W$. For this model we rigorously prove 
smoothness  of the averaged density of states. We also prove that the resulting expression
coincides with Wigner's semicircle law with a precision $W^{-2+\delta },$ where $\delta\to 0$
when $W\to \infty.$ The proof uses the
supersymmetric approach and extends  results by Disertori, Pinson and Spencer 
\cite{disertori-pinson-spencer} from three to two  dimensions. 
\end{abstract}

\section{Introduction and main result}
\paragraph{General setting.}
It is a well known fact  that conducting properties
of disordered materials can be related, in the context of quantum mechanics, 
to the statistics of eigenvalues and eigenvectors of certain random matrix ensembles 
\cite{anderson-1958}.
The most famous example are random Schr\"odinger operators, whose lattice version 
is characterized by the random matrix $H_{\Lambda }:\Lambda \times \Lambda \to \mathbb{R}$, 
on  a subset $\Lambda$ of $\mathbb{Z}^{d},$ 
defined by $H_{\Lambda } =-\Delta+\lambda V,$
where $-\Delta$ is the discrete Laplacian, $V$ is a diagonal matrix with random diagonal
entries  and $\lambda >0$ is a parameter encoding the strength of the disorder.
The entries $V_{j}$ are generally assumed to be independent identically distributed 
(for instance Gaussian). 
As $\Lambda\uparrow\mathbb{Z}^{d}$, this model exhibits a localized phase for all $\lambda $ 
in $d=1$ and at large disorder $\lambda\gg 1$
in $d\geq 2$. The localized phase is conjectured to hold also at weak disorder $\lambda\ll 1$
in $d=2$, while a phase transition 
is conjectured in $d\geq 3$.
Though the localized phase is well understood, the weak disorder regime in $d\geq 2$
remains an open problem. For a review of definitions and results see for instance 
\cite{kirsch-2008}.

\paragraph{}
Another relevant model in this context is  the random band matrix (RBM) ensemble, characterized
by   a self-adjoint matrix $\gls{H}_{\Lambda }:\Lambda \times \Lambda \to \mathbb{K}$, $\mathbb{K}=\mathbb{R},\mathbb{C}$,
whose entries are \emph{all}  independent (up to self-adjointness) random variables \emph{not} 
identically distributed with negligible entries outside a band of width $W\geq 0$, i.e. 
$|H_{ij}|\ll 1$, with large probability, when $|i-j|> W$.  
As in the case of random Schr\"odinger, when  $\Lambda\uparrow\mathbb{Z}^{d}$, band matrices are believed 
to exhibit a phase transition in $d\geq  3$ between a localized phase at small $W$ 
and an extended phase at large $W$, while the localized phase is conjectured to hold
for all $W$ in $d= 1,2$.

Two important examples of RBM are  the 'smooth Gaussian'  and the $N$-orbital model.
In the first case, the matrix elements are Gaussian: 
\begin{align*}
H_{ii}\sim \mathcal{N}_{\mathbb{R}} (0,J_{ii}),
\qquad H_{ij}\sim \mathcal{N}_{\mathbb{C}} (0,J_{ij}), \quad \text{for }
i<j,
\end{align*}
where $<$ denotes an order relation on $\mathbb{Z}^{d},$ and the band structure is
  encoded in  the covariance  
$J_{ij}=J_{ji}=f(|i-j|)$  decaying to zero when
$|i-j|\gg W$.  In the second case the covariance $J_{ij}$ is short range, for instance $J= \Id+a\Delta$ 
for some $a>0$, but each matrix element $H_{ij}$ is itself a $N\times N$ matrix with i.i.d. entries 
$(H_{ij})_{\alpha \beta }\sim  \mathcal{N}_{\mathbb{C}} (\frac{1}{N}J_{ij})$  $\forall i<j$, or $i=j$ and 
$\alpha<\beta$ and $(H_{ii})_{\alpha \alpha  }\sim  \mathcal{N}_{\mathbb{R}} (\frac{1}{N}J_{ii})$. 
The band width in this case is $W=2N$. 
These models are difficult to analyse with standard random matrix tools, since
the probability distribution is not invariant under unitary rotations.
At the moment, most  results available
deal  with the one dimensional case 
(cf. \cite{CMI-1990,CCGI-1993,Schenker2009,sodin-2010,georges-peche,Shcherbina2014,bao-erdoes-2015,
Pchelin2015}). Recently localization
at strong disorder (i.e. small band width) in any dimension was proved for a large class of $N-$orbital models by
Peled, Schenker, Shamis and Sodin \cite{SchenkerPeledShamisSodin2016}.

\paragraph{}
In this paper we consider the density of states
$\rho_\Lambda(E):= \frac{1}{|\Lambda|}\sum_{j} \delta_{\lambda_{j }} (E),$ where  $E\in\mathbb{R}$ is the energy
and  $\lambda_{j}$ are the (random) eigenvalues of $H.$
Since the probability distribution is translation invariant and the bandwidth $W$ is
fixed, by standard ergodicity arguments (see \cite{PasturFigotin1992}) this measure is non random in the
thermodynamic limit.  Similar results hold in $d=1$ also for the (non ergodic) case when $W$ diverges together with the matrix size
\cite{MPK-199}.
We will therefore concentrate on the  \emph{averaged density of states} (DOS) defined by
\begin{align}\label{eq:model:averagedos}
\gls{rho} \coloneqq \mathbb{E}[ \rho_\Lambda(E)] 
\coloneqq \frac{1}{|\Lambda|} \mathbb{E}\left [ \sum_{j} \delta_{\lambda_{j }} (E) \right ]
= -\frac{1}{\pi|\Lambda|}\lim_{\varepsilon\to 0}  \mathbb{E}[\im \Tr G^+_{ \Lambda } (E_{\varepsilon})],
\end{align}
where $E_{\varepsilon }:= E+i\varepsilon,$ with $\varepsilon>0,$
$\mathbb{E}$ denotes the average with respect 
to the probability distribution of $H$  and,
 for any $z\in \mathbb{C},$ the Green's function (or resolvent) is defined by
\begin{align*}
\gls{G}  \coloneqq  (z\cdot \mathds{1}-H)^{-1} =(z-H)^{-1}.
\end{align*}
By standard analyticity arguments, the limit $\varepsilon\to 0$ in \eqref{eq:model:averagedos} above
exists and is finite for Lebesgue a.e. $E\in \mathbb{R}$ (see for ex. \cite[App. B]{AizenmanWarzel2015}).

In dimension larger than one rigorous results on the density of states for RBM were obtained by
\cite{disertori-pinson-spencer,CFGK87}, and more recently by  \cite{SchenkerPeledShamisSodin2016}, 
based on seminal work by Wegner \cite{Wegner1981}.
In the case of the classical GUE ensemble, corresponding to $d=1,$ $\Lambda=1,\dots ,N$
and $J_{ij}=1/N$ $\forall i,j$,  the density of states, in the limit $N\uparrow\infty$
is given a.s. by Wigner's famous semicircle law
\begin{align}
\label{eq:intro:semicirclelaw} 
\gls{sc} =
\begin{cases}
\frac{1}{\pi} \sqrt{1-\frac{E^2}{4}} & \text{if } |E|\leq 2, \\ 
0 &  \text{if } |E| > 2.
\end{cases} 
\end{align}

\paragraph{The model.}
We consider a Gaussian complex RBM ensemble defined on a \emph{two dimensional} discrete cube 
$\gls{lambda}\subset\mathbb{Z}^2,$ centered at the origin,
with covariance 
\begin{align}
\label{eq:model:J}
\gls{J}_{ij}\coloneqq (-\gls{W}^2\Delta +\mathds{1})^{-1}_{ij},
\end{align}
where $-\Delta\in\mathbb{R}^{\Lambda\times\Lambda}$ is the discrete Laplace operator on $\Lambda$
with periodic boundary conditions, i.e.  $(x,-\Delta x) = \sum_{i \sim j} (x_i-x_j)^2$
for any $x\in\mathbb{R}^\Lambda$  and we write $i\sim j$ when $i$ and $j$ are nearest neighbors
in the torus $\mathbb{Z}^2/(\Lambda\mathbb{Z}^2)$.
The parameter $W\in\mathbb{R}$ is large but fixed.
Note that  $J_{ij}$ is exponentially small for distances 
$|i-j|\geq W.$ Hence all matrix elements outside a (two-dimensional) band of width $W$
centered around the diagonal are small with high probability and
this model describes a ``smoothed-out'' version of a band matrix ensemble with band width $W$.

For this model, the averaged density of states \eqref{eq:model:averagedos} exists
and takes a finite value for all $E\in \mathbb{R}.$
In the \emph{three dimensional} case,  Disertori, Pinson and Spencer 
\cite{disertori-pinson-spencer} derived an explicit representation of the function $\bar \rho_{\Lambda }(E),$
in the bulk of the spectrum, in terms of a convergent sum of certain integrals. 
Using this representation they obtained  detailed information 
on the function  $E\mapsto \bar \rho_{\Lambda }(E)$ and its derivatives, in the limit $\Lambda \to\mathbb{Z}^{3}$
for fixed but large band width $W$ (weak disorder regime). In particular they proved that the limit expression
coincides with Wigner's semicircle law with a precision
$1/W^{2}.$ Similar results were obtained for the $N-$orbital model  in \cite{CFGK87} 
in the case of dominant diagonal disorder.  In this paper we construct an extension of
the representation derived in \cite{disertori-pinson-spencer} to the two dimensional case, and
use it to derive precise information (such as smoothness) on the function
$E\mapsto \bar \rho_{\Lambda }(E)$ for energies in the bulk of the spectrum,
in the limit $\Lambda \to\mathbb{Z}^{2}$.

The proof in  \cite{disertori-pinson-spencer} used the so-called \emph{supersymmetric approach} (SUSY), 
pioneered by K. Efetov  \cite{efetov-adv}
based on seminal work by Sch\"afer and Wegner  \cite{schafer-wegner-1980,wegner-1979} 
and further developed (among others)  by Y. Fyodorov, A. Mirlin and M. Zirnbauer
\cite{mirlin1991,Mirlin-Fyodorov,littelmann-sommrs-zirnbauer}.
A good introduction to random matrix theory and SUSY can be found in \cite{haake}.
This is a duality transformation 
that allows one to write averages in $H$ as new integrals where a saddle approximation
may be justified: $\mathbb{E} [f (H)]= \int  \tilde{f} (M) \eto^{-F (M)} \prod_{j\in \Lambda } \diff M_{j},$
where $M_{j}$ is a small matrix  containing both complex
and Grassmann (odd) elements, $F$ can be seen as the free energy functional in some statistical 
mechanical model, and $\tilde{f}$ is the new observable.
The Grassmann variables can be always integrated out
exactly  (though the combinatorics involved may be quite difficult) and the resulting measure 
is complex but {\em normalized}. The measure $\exp(-F (M))\diff M$ depends on the probability measure $P$
but also on the observable $f$, and has internal (odd) symmetries, 
{\em inherited by the observable only}. 
Different rigorous versions of the SUSY approach have been tested on the 
standard matrix ensembles GUE and GOE (where other techniques also apply) 
\cite{disertori2004,shamis2013}. 
When  $\Lambda\uparrow \mathbb{Z}^{d}$, the integral is expected to concentrate
near the saddle manifold determined by $\partial_{M}F (M)=0$.
This reduces to   \emph{isolated points} in the case of the DOS, and the main difficulty 
is to obtain estimates on the fluctuations that are uniform in the volume.
In   \cite{disertori-pinson-spencer}  the dual representation is 
studied via a complex translation coupled with a  cluster expansion, which
effectively factorizes the integral over regions of volume $W^{3}.$
A key feature of the dual representation, in three dimensions, is the presence of 
a \emph{double well} structure, with one well  suppressed  
by an exponential factor $\exp(-W),$ inside each region. 
In the two dimensional case, the second well is only weakly suppressed, and
the arguments used in   \cite{disertori-pinson-spencer} do not apply.

\paragraph{Main result.}
This article is devoted to prove  the following result which extends 
\cite[Theorem 1]{disertori-pinson-spencer} to the two dimensional case.

Remember that in the case of GUE the spectrum of $H,$ in the thermodynamic limit,
is concentrated on $\{|E|\leq 2\}$  (see \eqref{eq:intro:semicirclelaw}).
In the case of a band matrix, non rigorous arguments
suggest that most of the spectrum remains in the same interval. Here,
we restrict to energies $E$ in the bulk and avoid $E=0$ for technical reasons.
Precisely, for $\eta > 0$ small but fixed, we consider the interval
\begin{align}
\label{eq:model:interval}
\gls{I}= \{E: \eta < |E| \leq 1.8 \}.
\end{align}

\begin{theo}
\label{theo:model:mainresult}
For $d=2$ and each fixed $\gls{alpha}\in(0,1)$, there exists a value $W_0(\alpha)$ such that
for all $W\geq W_0(\alpha)$ and $E\in\mathcal{I}$
\begin{align}
\label{eq:model:semicircle}
|\bar\rho_\Lambda(E) - \rho_{SC}(E) | &\leq W^{-2}\eto^{K(\ln W)^\alpha},\\
\label{eq:model:derivative}
|\partial_E^n \bar\rho_\Lambda(E)|&\leq  C_n W^{-1}  (\ln W)^{n (\alpha+1) } \eto^{K  (\ln W)^{\alpha }}
 \qquad\forall n\geq 0,
\end{align}
where $\rho_{SC}$ is Wigner's semicircle law defined in \eqref{eq:intro:semicirclelaw}.
The constants $C_n$  and $K$ are independent of $\Lambda$ and $W.$
Both estimates hold uniformly in $\Lambda$ and hence also in the infinite volume limit $\Lambda\to\mathbb{Z}^2$.
\end{theo}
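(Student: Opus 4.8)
The plan is to follow the supersymmetric route of \cite{disertori-pinson-spencer}, replacing its three--dimensional combinatorial core by a two--dimensional cluster expansion robust enough to survive the much weaker suppression of the spurious saddle. First I would apply the Sch\"afer--Wegner/Efetov duality to the right-hand side of \eqref{eq:model:averagedos}: for each fixed $\varepsilon>0$ the standard algebra yields
\begin{align*}
-\frac{1}{\pi|\Lambda|}\,\mathbb{E}\,\bigl[\,\im\Tr G^+_\Lambda(E_\varepsilon)\,\bigr]\;=\;\frac{1}{\pi|\Lambda|}\sum_{j\in\Lambda}\int \mathcal{O}_{j,j}(M)\,\eto^{-\mathcal{V}(M)}\,\diff\mu(M),
\end{align*}
where $M=(M_j)_{j\in\Lambda}$ are $2\times2$ supermatrices, $\diff\mu$ a \emph{normalized} Gaussian super-measure with covariance built from $J$, $\mathcal{V}$ the effective potential and $\mathcal{O}_{j,j}$ a local observable (all objects on the right depending on $E_\varepsilon$). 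Integrating out the Grassmann pairs $(\bar\rho_j,\rho_j)$ reduces this to a complex but normalized integral over $a,b\in\mathbb{R}^\Lambda$ against $\diff\mu_J(a,b)$, with local observable $\mathcal{O}(a,b)$ and effective potential $\mathcal{V}(a,b)$.

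Second, the saddle analysis. I would determine the saddle points $a_s^\pm,b_s^\pm$ of the effective action and deform the $a,b$ contours through the physically relevant one, whose $a$--component is $a_s^+=\mathcal{E}=\tfrac{E}{2}-i\sqrt{1-\tfrac{E^2}{4}}$; note that $\rho_{SC}(E)=\tfrac1\pi|\im\mathcal{E}|$ is precisely the value this saddle assigns to the observable, which is how the semicircle law enters. The translation replaces $J$ by a real covariance $C=(-W^2\Delta+m_r)^{-1}$, where $1-\mathcal{E}^2=m_r-im_i$ and both $m_r$ and $m_i$ stay bounded away from $0$ for $E\in\mathcal{I}$ (it is $m_i\to0$ as $E\to0$ that forces us to exclude the centre of the band), and in the supermatrix sector by the complex covariance $B=(C^{-1}+i\sigma_E m_i^2)^{-1}$; it also produces a remainder $\mathcal{R}(a,b)$ carrying the cubic Taylor remainder $V$ of $\mathcal{V}$. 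The key quantitative input is that in $d=2$ the kernel $C$ decays on the scale $W$ while its on-site value is only $C_{jj}=O(W^{-2}\ln W)$, so the fluctuations about the saddle are small and \emph{every} contribution beyond the bare saddle value carries at least one power of $C_{jj}$; the leading such term is the one-loop correction $O(W^{-2}\ln W)$, consistent with the $W^{-2}$ accuracy of \eqref{eq:model:semicircle} once the logarithms are absorbed into $\eto^{K(\ln W)^\alpha}$.

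The core is the cluster expansion. I would tile $\Lambda$ by cubes $\triangle$ of volume $W^2(\ln W)^\alpha$ --- the extra factor $(\ln W)^\alpha$ enlarging the natural scale $W^2$ so that the exponentially decaying kernels $C,B$ couple distinct cubes only by a subpolynomially small amount --- partition the integration domain into regions $I^1,\dots,I^5$ that isolate a neighbourhood $I^1$ of the good saddle from the regions carrying the spurious saddle and the large fields, and show $F^{m,n}_s$ is negligible for $s\geq2$. On $I^1$ I would interpolate the off-diagonal entries, $C(s)_{ij}=s_{ij}C_{ij}$ and $B(s)=(C(s)^{-1}+i\sigma_E m_i^2)^{-1}$, introduce the inductive parameters $s_p$ in Brydges--Battle--Federbush style, and organize the outcome as a sum over ordered trees $T$ on generalized polymers $\tilde Y=(\triangle_0,\tilde\triangle_1,\dots,\tilde\triangle_r)$ built from generalized cubes $\tilde\triangle=(\triangle,\triangle',\triangle'')$ with root cube $\triangle_0\ni0$, the index sums $i\in\triangle'$, $j\in\triangle''$, $k'\in\triangle$, $k\in$ the ``old'' cubes being controlled by the decay of the propagators $G_q(s)$ and $B(s)$. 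The expansion converges uniformly in $\varepsilon$ for fixed $\Lambda$ (so the $\varepsilon\to0$ limit in \eqref{eq:model:averagedos} is taken term by term) and uniformly in $\Lambda$ (so the bounds survive $\Lambda\to\mathbb{Z}^2$); the single--root--cube term at zeroth order in the fields reproduces $\rho_{SC}(E)$, while every remaining term is smaller by at least one factor $C_{jj}$, the cumulative sub-polynomial cost of the resummation being absorbed into $\eto^{K(\ln W)^\alpha}$, which gives \eqref{eq:model:semicircle}. For \eqref{eq:model:derivative} one uses that $E$ enters only through $\mathcal{E}(E)$, hence real-analytically through $C$, $B$, $\mathcal{V}$, $\mathcal{R}$ and $\mathcal{O}$; each $\partial_E$ costs at most a factor $(\ln W)^{\alpha+1}$ (from differentiating quantities localized on cubes of volume $W^2(\ln W)^\alpha$), so the convergent expansion may be differentiated term by term.

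\paragraph{Main obstacle.}
The delicate point, and the reason the $d=3$ argument does not transfer, is the control of the spurious saddle $a_s^-$: in three dimensions its contribution is killed by a factor $\eto^{-W}$ per reference cube, so an essentially naive polymer expansion converges, whereas in two dimensions the suppression degrades to merely logarithmic in $W$. Recovering enough smallness --- and simultaneously keeping the inter-cube expansion summable --- is exactly what forces the enlargement of the reference volume to $W^2(\ln W)^\alpha$ and the generalized-polymer bookkeeping; it is also why one must stay away from $E=0$ (where $a_s^+$ and $a_s^-$ are related by symmetry, equivalently $m_i=0$, and neither well dominates) and from the spectral edge $|E|=2$ (where $\im\mathcal{E}\to0$ and the saddle degenerates).
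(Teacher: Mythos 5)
Your overall architecture coincides with the paper's: SUSY dualization, translation to the saddle $a_s^+=\mathcal{E}$ (so that $\rho_{SC}=\tfrac1\pi\im\mathcal{E}$ is the leading term), partition of the integration domain into $I^1,\dots,I^5$, and a tree/generalized-polymer cluster expansion on cubes of volume $W^2(\ln W)^\alpha$ to beat the weakly suppressed second saddle. But there is a genuine gap in how you reach the stated precision. For \eqref{eq:model:semicircle} you assert that every contribution beyond the bare saddle value carries at least one power of $C_{jj}$. This is not what the estimates give: after the translation the correction is $\int\diff\mu_B\,\mathcal{R}(a,b)\,a_0$, a \emph{single} field factor, and inserting absolute values plus Brascamp--Lieb/Cauchy--Schwarz yields only $C_{00}^{1/2}\sim W^{-1}(\ln W)^{1/2}$, not $C_{00}\sim W^{-2}\ln W$ (the paper records exactly this failed naive attempt in Section 3.3.3). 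To gain the second power of $W^{-1}$ one must exploit the near-oddness of the measure around the saddle, which the paper does by a preliminary integration by parts, $a_0\mapsto\sum_{l}B_{0l}\partial_{a_l}\mathcal{R}$, where $\partial_{a_l}\mathcal{R}$ produces either a quadratic field factor $|a_l|^2$ or an explicit factor $(B^{-1}+D)^{-1}_{ll}=O(W^{-2}\ln W)$. Without this step your argument proves only $W^{-1+\delta}$.

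The same mechanism is missing for \eqref{eq:model:derivative}. Each $\partial_E$ applied to the dual representation produces $\sum_{j\in\Lambda}(a_j-ib_j)$, a sum over the \emph{whole} volume, so the claim that each derivative costs $(\ln W)^{\alpha+1}$ is not a priori true: naively each derivative costs $|\Lambda|\cdot W^{-1}(\ln W)^{1/2}$, which diverges as $\Lambda\uparrow\mathbb{Z}^2$. The paper again integrates by parts, converting each such sum into $\sum_{j,l}B_{jl}\,\Str\partial_{M_l}$ so that the $B$-factor performs the volume sum, and when indices coincide the resulting linear terms $\Str M_l$ force yet further integrations by parts \emph{before} the cluster expansion (Section 4.6) in order to extract enough fine structure to sum over index positions inside each cube. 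Your alternative of differentiating the converged expansion term by term in $E$ would have to control the action of $\partial_E$ on $B^{-1}$ and on the normalization over the complement of the polymer, which you do not address; as written, the derivative bounds are not established.
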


\begin{rem} Note that we obtain the semicircle law 
with a precision 
\begin{align*}
\bar\rho_\Lambda(E)=\rho_{SC}(E)  +
 \mathcal{O}\left(W^{-2+\delta} \right) 
\end{align*}
for small $\delta > 0$ depending on $W_0(\alpha)$,
while in $d=3$ one obtains $ \mathcal{O}\left(W^{-2} \right)$  \cite[eq. (2.7)]{disertori-pinson-spencer}.
Moreover, \eqref{eq:model:derivative} implies a $W$-independent estimate on the derivatives up to a certain order $n_0(W)$
\begin{align*}
|\partial_E^n \bar\rho_\Lambda(E)|&\leq C_{n}\quad  \forall n\leq n_{0} (W),
\end{align*}
with $\lim_{W\to\infty} n_0(W) = \infty$.
\end{rem}

\paragraph{Strategy.} The strategy is similar to the one in  \cite{disertori-pinson-spencer}. 
We establish a dual representation for the averaged DOS via the supersymmetric approach and
apply a complex translation into the saddle points.
To overcome the second well problem, we
modify the factorization procedure, using slightly larger blocks (of size $W^{2} (\ln W)^{\alpha}$ instead
of the natural $W^{2}$) for our cluster expansion.
This yields better estimates for configurations near the second saddle, but creates new problems
for the 'good' configurations, near the main saddle. 
An equilibrium between these two conflicting effects is possible 
because $d=2$ is a 'limit' case. 
Finally, as in  \cite{disertori-pinson-spencer}, we apply a non standard cluster expansion,
extracting at each step a 'multi-link' consisting of three instead of one connection.
Here, in contrast to \cite{disertori-pinson-spencer}, we use the (super-) symmetric structure of the dual representation to reformulate 
the cluster expansion in a more compact and transparent way.
We also use the symmetry to simplify the dual representation and a number of
other equalities.

\begin{rem}
Note that Wegner-type estimates on the integrated density of states can be obtained, in any dimension,
by softer methods (see  \cite{SchenkerPeledShamisSodin2016}). Here, the dual representation plus
cluster expansion give an explicit representation of the function $\frac{1}{|\Lambda|}\mathbb{E}[ \Tr (z-H)^{-1}]$
where the limit $\im z\to 0$ can be taken explicitely. This representation 
 remains valid in the thermodynamic limit  and allows to study detailed properties
(such as smoothness and main contributions) of the limit function. 
\end{rem}

\paragraph{Organization of the paper.}
In Section \ref{sec:strategy}, the dual representation  and the complex  contour
deformation are introduced. This allows to reformulate the problem 
as Theorem \ref{theo:strategy:reducedresult}.
Finally a sketch of the proof's strategy is given.
In Section \ref{sec:preliminary}, we summarize some properties and preliminary estimates
that will be needed in the proof. We also prove the main result in a finite volume.
These results build a foundation for the infinite volume case.
In Section \ref{sec:cluster}, the cluster expansion is introduced and the limit 
$\Lambda \uparrow \mathbb{Z}^2$ is analyzed.
A short introduction to the supersymmetric formalism is given in App. \ref{sec:appsusy},
and a proof of the dual representation is given in App. \ref{sec:appproofsusy}.
Finally, App. \ref{sec:appcov} collects some results on the discrete Laplace
operator in $d=2$ and some matrix inequalities, together with their proof.
A list of symbols can be found at the end.
 
\begin{notation}
Since we apply many estimates in the paper, we denote by $K$ any large positive constant 
independent of $W$ and $\Lambda$.
\end{notation}

\paragraph{Acknowledgements}
We are grateful to  Tom Spencer for encouraging us to pursue the  $d=2$ case and for numerous discussions.
We thank Sasha Sodin for  useful discussions on  representations of the  discrete Laplacian.
We are also grateful to Martin Lohmann, Susanne Hilger, Richard H\"ofer and Anna Kraut for 
helpful discussions and suggestions related to this paper. 
Very special thanks go to David Brydges for sharing his many insights 
on the model and inspiring discussions on cluster expansions.
Finally, we acknowledge the Deutsche Forschungsgemeinschaft for support through CRC 1060
``The  Mathematics  of  Emergent  Effects'' and the Hausdorff Center for Mathematics.

\section{Reformulating the problem}
\label{sec:strategy}
We perform a duality transformation and 
a complex contour deformation to rewrite the average of the Green's function as a 
Gaussian integral with some remainder (cf. Lemma \ref{lem:strategy:dualrep}).
This reduces the proof of Theorem \ref{theo:model:mainresult} to bound this functional integral 
appropriately (cf. Theorem \ref{theo:strategy:reducedresult}). 
In the end we give ideas for the proof.

\subsection{Duality transformation}
\label{subsec:dualtrafo}
The first step in the proof is to represent, via the supersymmetric formalism  
(cf. App. \ref{sec:appsusy}), 
the trace in \eqref{eq:model:averagedos}
as a functional integral where a saddle point analysis can be justified.
Recall the definition of $J$ given by \eqref{eq:model:J}. A normalized Gaussian measure with 
covariance $J$ is defined by
\begin{align}
\label{eq:strategy:normalizedGaussianmeasure}
\gls{mu}
\coloneqq \det\left[ \tfrac{J^{-1}}{2\pi}\right] \eto^{-\frac{1}{2} ((a, J^{-1} a) +(b, J^{-1} b))}
\prod_{j \in \Lambda} \diff  a_j  \diff  b_j ,
\end{align}
with $\gls{a,b}\in\mathbb{R}^\Lambda$, $\prod_{j \in \Lambda} \diff  a_j  \diff  b_j$ is the Lebesgue product measure
and 
$(a, J^{-1} a) = \sum_{i, j \in \Lambda} a_i J_{ij}^{-1} a_j$. 
With this definition we can state the following lemma which is a variant of \cite[Lemma 1]{disertori-pinson-spencer}.

\begin{lemma}
\label{lem:strategy:susy}
For any space dimension $d\geq 1$, the following identities hold:
\begin{align}
\label{eq:strategy:dualrepsusy1}
\frac{1}{|\Lambda|}\mathbb{E}[\Tr G^+_\Lambda(E_\varepsilon)] 
&= \int \diff\mu_{J} (a,b)\prod_{j \in \Lambda}\tfrac{E_\varepsilon- i b_j}{E_\varepsilon - a_j}\ 
\det \left[1-FJ  \right] \ a_{0},\\
\label{eq:strategy:dualrepsusy2}
\partial^n_E\frac{1}{|\Lambda|}\mathbb{E}[\Tr G^+_\Lambda(E_\varepsilon)] 
&= \int \diff\mu_{J} (a,b)\prod_{j \in \Lambda}\tfrac{E_\varepsilon- i b_j}{E_\varepsilon - a_j}\ 
\det \left[1-FJ  \right] \ a_{0} \left[\sum_{j\in\Lambda}a_j-ib_j\right]^n,
\end{align}
where $\diff\mu_{J} (a,b)$
is the normalized Gaussian measure defined above and $F = F(a,b)$ is a diagonal matrix with entries
\begin{align*} 
F(a,b)_{ij} 
&= \delta_{ij} \tfrac{1}{(E_\varepsilon - a_j)(E_\varepsilon - i b_j)}.
\end{align*}
\end{lemma}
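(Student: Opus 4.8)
The plan is to establish \eqref{eq:strategy:dualrepsusy1} by the supersymmetric formalism and then read off \eqref{eq:strategy:dualrepsusy2} from a variant of the same computation. First, by translation invariance of the law of $H$ on the torus $\mathbb{Z}^2/(\Lambda\mathbb{Z}^2)$, the diagonal entries $\mathbb{E}[(E_\varepsilon-H)^{-1}_{jj}]$ are all equal, so $\tfrac1{|\Lambda|}\mathbb{E}[\Tr G^+_\Lambda(E_\varepsilon)]=\mathbb{E}[(E_\varepsilon-H)^{-1}_{00}]$. I would then write $(E_\varepsilon-H)^{-1}_{00}$ as a Gaussian integral over a commuting field $\phi\in\mathbb{C}^\Lambda$ with weight $\eto^{i(\bar\phi,(E_\varepsilon-H)\phi)}$ (convergent because $\im E_\varepsilon=\varepsilon>0$) and observable $\bar\phi_0\phi_0$, and cancel the $H$-dependent normalization $\det(E_\varepsilon-H)^{-1}$ by an accompanying Grassmann field $\psi$ with weight $\eto^{i(\bar\psi,(E_\varepsilon-H)\psi)}$. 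This yields $(E_\varepsilon-H)^{-1}_{00}=\int\mathcal D\Phi\,\bar\phi_0\phi_0\,\eto^{i(\bar\Psi,(E_\varepsilon-H)\Psi)}$, with $\Psi_j=(\phi_j,\psi_j)$ and $(\bar\Psi,X\Psi):=(\bar\phi,X\phi)+(\bar\psi,X\psi)$, where the integral \emph{without} the observable equals $1$ identically in $E$; this normalization property is the supersymmetry and makes all manipulations below exact.

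Next I would average over $H$. Since the entries are centred Gaussian with $\mathbb{E}[H_{jk}H_{lm}]=\delta_{jm}\delta_{kl}J_{jk}$, one obtains $\mathbb{E}[\eto^{-i(\bar\Psi,H\Psi)}]=\eto^{-\frac12\sum_{jk}J_{jk}A_{kj}A_{jk}}$ with $A_{kj}=\phi_k\bar\phi_j+\psi_k\bar\psi_j$; expanding and using the anticommutativity of the Grassmann variables, this quartic term reorganizes into the supersymmetric expression $-\tfrac12\sum_{jk}J_{jk}\Str(\mathcal M_j\mathcal M_k)$, where $\mathcal M_j$ is the $2\times2$ supermatrix built from $\Psi_j,\bar\Psi_j$ (bosonic block $\bar\phi_j\phi_j$, fermionic block $\bar\psi_j\psi_j$, odd blocks $\bar\phi_j\psi_j$, $\bar\psi_j\phi_j$). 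I would decouple it by a Hubbard--Stratonovich transformation, introducing at each site a $2\times2$ supermatrix $\hat\sigma_j$ with diagonal $(a_j,ib_j)$, $a_j,b_j\in\mathbb{R}$, and Grassmann off-diagonal entries $\bar\zeta_j,\zeta_j$, weighted by $\propto\eto^{-\frac12(a,J^{-1}a)-\frac12(b,J^{-1}b)-(\bar\zeta,J^{-1}\zeta)}$ — whose bosonic factor, suitably normalized, is exactly the measure $\diff\mu_J(a,b)$ of \eqref{eq:strategy:normalizedGaussianmeasure} — so that the superfield action becomes the \emph{quadratic} form $i(\bar\Psi,(E_\varepsilon+\hat\sigma)\Psi)$, block-diagonal over the sites.

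It then remains to integrate out the fields. The super-Gaussian integral over $(\phi,\psi)$ gives $\prod_j\Sdet(E_\varepsilon+\hat\sigma_j)^{-1}$, that is $\prod_j\tfrac{E_\varepsilon+ib_j}{E_\varepsilon+a_j}$ times $\exp\big(\sum_j\bar\zeta_j\zeta_j\,[(E_\varepsilon+a_j)(E_\varepsilon+ib_j)]^{-1}\big)$, and the subsequent Grassmann integral over $\zeta,\bar\zeta$ against $\eto^{-(\bar\zeta,J^{-1}\zeta)}$ produces $\det(J^{-1}-F')\propto\det(\mathds{1}-F'J)$ with $F'$ the diagonal matrix $F'_{jj}=[(E_\varepsilon+a_j)(E_\varepsilon+ib_j)]^{-1}$; after relabelling $a\mapsto-a$, $b\mapsto-b$ (the Gaussian measure is even) this is precisely $\prod_j\tfrac{E_\varepsilon-ib_j}{E_\varepsilon-a_j}\det(\mathds{1}-FJ)$, and by the supersymmetric normalization all the remaining constants ($2\pi$'s, powers of $i$, determinants) collapse into the prefactor $\det[J^{-1}/2\pi]$ of $\diff\mu_J$. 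Finally the surviving observable $\bar\phi_0\phi_0$ is traded for $a_0$ by a Schwinger--Dyson/Ward identity in the $\hat\sigma_0$–integration: integrating $\partial_{a_0}$ of the effective action by parts gives $\langle(J^{-1}a)_0\rangle=i\langle\bar\phi_0\phi_0\rangle$, and since $\sum_k(J^{-1})_{0k}=1$ (because $J^{-1}=-W^2\Delta+\mathds{1}$ and $-\Delta$ annihilates constants, cf. \eqref{eq:model:J}) and the observable-free effective measure is translation invariant, $\langle a_0\rangle=i\langle\bar\phi_0\phi_0\rangle$; in this normalization this is \eqref{eq:strategy:dualrepsusy1}. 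For \eqref{eq:strategy:dualrepsusy2} I would run the same chain starting from $\partial_E^n\,\tfrac1{|\Lambda|}\mathbb{E}[\Tr G^+_\Lambda(E_\varepsilon)]$: since the empty integral is $\equiv1$ for every $E$, $\partial_E^n$ produces no normalization term and merely inserts $[i\sum_j(\bar\phi_j\phi_j+\bar\psi_j\psi_j)]^n$ coming from $\partial_E^n\eto^{iE_\varepsilon(\bar\Psi,\Psi)}$; carrying this factor through the steps above and applying the same Schwinger--Dyson identities, now to $\sum_j\Str\hat\sigma_j=\sum_j(a_j-ib_j)$ (again using $\sum_j(J^{-1}x)_j=\sum_jx_j$, and supersymmetric Ward identities to discard the spurious lower-order pieces), turns it into the factor $[\sum_j(a_j-ib_j)]^n$.

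The step I expect to be the main obstacle is the supersymmetric bookkeeping in the last two paragraphs: performing the Gaussian super-integral and the Grassmann Hubbard--Stratonovich integral with all signs, orderings and normalizations correct — in particular checking that everything collapses into the single prefactor $\det[J^{-1}/2\pi]$ — and establishing the Ward/Schwinger--Dyson identities that pin down the observables $a_0$ and $[\sum_j(a_j-ib_j)]^n$ together with the factor $\det(\mathds{1}-FJ)$. The remaining ingredients — translation invariance, the elementary Gaussian average over $H$, and convergence of the bosonic integral for $\varepsilon>0$ — are routine.
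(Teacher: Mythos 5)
Your proposal follows essentially the same route as the paper's proof in App.~\ref{sec:appproofsusy}: the dual superfield representation of $G^+_\Lambda(E_\varepsilon)_{kk}$, the Gaussian average over $H$ with Hubbard--Stratonovich decoupling by supermatrices with diagonal $(a_j,ib_j)$, integrating out the superfields to produce $\prod_j\tfrac{E_\varepsilon-ib_j}{E_\varepsilon-a_j}\det[1-FJ]$, and integration by parts in the $\sigma$-variables together with $\sum_k J^{-1}_{jk}=1$ to trade $\bar\phi_0\phi_0$ for $a_0$ and the $E$-derivatives for $[\sum_j(a_j-ib_j)]^n$ (your ``spurious lower-order pieces'' indeed vanish via $\Str\,\mathds{1}=0$, exactly as the paper uses). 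The only difference is organizational --- you perform the Ward identity at the single site $0$ and invoke translation invariance of one-point functions, whereas the paper sums $\partial_{a_k}$ over all $k$ and then relabels $|\Lambda|^{-1}\sum_l a_l\mapsto a_0$ --- which is an equivalent bookkeeping of the same step.
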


\begin{proof}
The idea of the proof is to write $(E_\varepsilon-H)^{-1}_{ii}$ as a complex Gaussian integral and
represent the normalization as a Fermionic Gaussian integral. Then the average over $H$ can be computed
easily and one can integrate the Fermionic variables again. This 
is analog to the procedure in \cite{disertori-pinson-spencer}, but we apply an additional step 
of integration by parts to simplify our result. The second identity is proven similarly.
For convenience of the reader, a sketch of the proof is given in App. \ref{sec:appproofsusy}.
\end{proof}

The integrals above are well-defined only for $\varepsilon > 0$ since for each $a_j$
there is a pole at $a_j = E_\varepsilon$.
Note that there are no singularities in $b_j$.

\begin{rem}
In this dual representation, the  only contribution from the observable  is the term $a_0$. 
By the same techniques, we obtain 
\begin{align}
\label{eq:strategy:avarage1}
1 = \mathbb{E} [1] 
= \int \diff\mu_{J} (a,b)\prod_{j \in \Lambda}\tfrac{E_\varepsilon - i b_j}{E_\varepsilon - a_j}\ 
\det \left[1-JF (a,b)  \right] .
\end{align}
\end{rem}

\subsection{Contour deformation}
\label{subsec:countourdeform}
By the same saddle analysis performed in \cite[Section 4]{disertori-pinson-spencer}, we expect 
the complex normalized measure 
\begin{align}\label{eq:strategy:measure}
\diff\mu_{J} (a,b)\prod_{j \in \Lambda}\tfrac{E_\varepsilon- i b_j}{E_\varepsilon - a_j}\ 
\det \left[1-F (a,b) J  \right] 
\end{align}
to be concentrated near the constant configurations given by 
$a_{j}=a^{\pm}_{s},$ $b_{j}=b^{\pm}_{s}$ for all $j\in \Lambda$, 
where $\gls{as,bs}$ are the saddle points
$a_s^{\pm} = \mathcal{E}_r \mp i \mathcal{E}_i,$ 
$b_s^{\pm} = - i \mathcal{E}_r \mp \mathcal{E}_i,$ and 
\begin{align*}
\gls{mathcalE} = \mathcal{E}_r - i \mathcal{E}_i = \tfrac{E}{2} - i \sqrt{1 - \tfrac{E^2}{4}}
\end{align*}
has the useful properties $E - \mathcal{E}= \bar{\mathcal{E}}$
and $\mathcal{E} \bar{\mathcal{E}} = 1$ for all $|E|< 2$.

We perform a translation of the real axis in the complex plane
in order to pass through a saddle point.
For the variables $a$, we translate to the saddle $a_s^{+} = \mathcal{E}$ 
to avoid crossing the pole in $a = E_\varepsilon$.
The variables $b$ have no pole and both saddle points have the same
imaginary part. Hence a complex translation allows to pass through both saddles. 
We will prove later that $b_s^{+} = - i \mathcal{E}$ is the dominant one.
In the next lemma we show that, after the deformation, we can take the limit
$\varepsilon\to 0$, and  the translated measure can be reorganized as
\[
\diff\mu_{J} (a+\mathcal{E},b-i\mathcal{E})\prod_{j \in \Lambda}
\tfrac{E_\varepsilon- i (b_j-i\mathcal{E})}{E_\varepsilon - (a_j+\mathcal{E})}
\ \det \left[1-F (a+\mathcal{E},b-i\mathcal{E}) J  \right] =
\diff\mu_{B} (a,b) \mathcal{R} (a,b),
\]
where the Gaussian measure $\diff\mu_{B} (a,b)$ has now a complex covariance.

\begin{lemma}
\label{lem:strategy:dualrep}
By a complex deformation the functional integrals \eqref{eq:strategy:dualrepsusy1} and 
\eqref{eq:strategy:dualrepsusy2}, in the limit $\varepsilon \to 0$, can be written as
\begin{align}
\label{eq:strategy:dualrepspa1}
\lim_{\varepsilon\to 0 }\frac{1}{|\Lambda|}\mathbb{E}[\Tr G^+_\Lambda(E_\varepsilon)] 
=& a_{s}^{+}+ \int\diff  \mu_B(a,b)  \mathcal{R} (a,b) a_{0},\\
\label{eq:strategy:dualrepspa2}
\lim_{\varepsilon\to 0 }\partial_E^n\frac{1}{|\Lambda|}\mathbb{E}[\Tr G^+_\Lambda(E_\varepsilon)]
=&\int\diff  \mu_B(a,b)  \mathcal{R} (a,b) a_{0}\left(\sum_{j\in\Lambda}a_j-ib_j\right)^n\\
\label{eq:strategy:dualrepspa3}
=& \sum_{j_{1},\dots ,j_{n}} \int\diff \mu_B(a,b) \mathcal{R} (a,b) a_{0}
\prod_{k=1}^{n}\left(a_{j_{k}}-ib_{j_k}\right),
\end{align}
where $a,b\in \mathbb{R}^\Lambda$, and $\diff\mu_B(a,b)$ is the normalized Gaussian measure 
as defined in \eqref{eq:strategy:normalizedGaussianmeasure}
with  complex covariance 
\begin{align}
\label{eq:strategy:defB}
\gls{B} \coloneqq (- W^2 \Delta + (1 - \mathcal{E}^2))^{-1}.
\end{align}
The remainder $ \mathcal{R} (a,b)$ is defined by
\begin{align}
\label{eq:strategy:remainder}
\gls{R(a,b)}\coloneqq   \det[1 + DB]\   \eto^{\mathcal{V} (a,b)},
\end{align}
where $D_{ij}=D_{ij} (a,b) = \delta_{ij} D_j (a,b)$ is a diagonal matrix, and we defined
\begin{align}
\label{eq:strategy:defD}
\begin{split}
\gls{D}_j (a,b) 
& = \mathcal{E}^2 - F(a + a_{s}^{+}, b +b_{s}^{+})_{jj} 
= \mathcal{E}^2 - \frac{1}{(\bar{\mathcal{E}} - a_j) (\bar{\mathcal{E}} - i b_j)} \\
& = -\int_0^1  \left(\frac{a_j}{(\bar{\mathcal{E}} - t a_j)^2(\bar{\mathcal{E}} - i t b_j)} 
+ \frac{i b_j}{(\bar{\mathcal{E}} - t a_j) (\bar{\mathcal{E}} - i t b_j)^2}\right) \diff  t,
\end{split}
\\
\label{eq:strategy:defV}
\gls{V(a,b)}&= \sum_{j\in\Lambda} \mathcal{V}_j(a,b)
= \sum_{j\in \Lambda } V(a_{j})-V (ib_{j}), \quad 
\gls{V(x)} = \int_0^1  \frac{x^3(1 - t)^2}{(\bar{\mathcal{E}} - t x)^3} \diff  t.
\end{align}
\end{lemma}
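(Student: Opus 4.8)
\emph{Proof strategy.} The plan is to derive all three identities from the representations \eqref{eq:strategy:dualrepsusy1}–\eqref{eq:strategy:dualrepsusy2} of Lemma~\ref{lem:strategy:susy} by translating the contours of $a$ and $b$ through the saddles $a_s^+=\mathcal{E}$ and $b_s^+=-i\mathcal{E}$, carrying out the shift $a\mapsto a+\mathcal{E}$, $b\mapsto b-i\mathcal{E}$ one coordinate at a time. Fix $\varepsilon>0$ and all variables but $a_j$: the integrand of \eqref{eq:strategy:dualrepsusy1} is meromorphic in $a_j$ with its only pole at $a_j=E_\varepsilon$, which lies strictly in the upper half plane ($\im E_\varepsilon=\varepsilon>0$), and the Gaussian factor $\eto^{-\frac12 J^{-1}_{jj}a_j^2-\,\cdots}$ (with $J^{-1}_{jj}=4W^2+1>0$) decays in every horizontal strip. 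Since $\im\mathcal{E}=-\mathcal{E}_i<0$ for $|E|<2$, the line $\mathbb{R}+\mathcal{E}$ lies below $\mathbb{R}$, no pole is crossed, the connecting vertical segments at $\re a_j=\pm R$ contribute $O(\eto^{-cR^2})$, and Cauchy's theorem gives equality of the two line integrals. The variables $b_j$ carry no pole (the integrand is entire in $b_j$ with Gaussian decay), so $b_j$ may likewise be moved to $\mathbb{R}-i\mathcal{E}$. Iterating over $j\in\Lambda$, with Fubini at each step, performs the full shift.

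Next I would reorganise the shifted integrand algebraically and pass to $\varepsilon\to0$. In the Gaussian weight, expanding $(a+\mathcal{E},J^{-1}(a+\mathcal{E}))$ and $(b-i\mathcal{E},J^{-1}(b-i\mathcal{E}))$ and using $J^{-1}\mathds{1}=\mathds{1}$ (the discrete Laplacian annihilates constants), the two $\mathcal{E}^2|\Lambda|$ terms cancel, leaving $-\tfrac12[(a,J^{-1}a)+(b,J^{-1}b)]-\mathcal{E}\sum_j a_j+i\mathcal{E}\sum_j b_j$. In the rational factor, after $\varepsilon\to0$ the identity $E-\mathcal{E}=\bar{\mathcal{E}}$ turns $\prod_j\tfrac{E_\varepsilon-ib_j}{E_\varepsilon-a_j}$ into $\prod_j\tfrac{\bar{\mathcal{E}}-ib_j}{\bar{\mathcal{E}}-a_j}=\exp\sum_j[\log(\bar{\mathcal{E}}-ib_j)-\log(\bar{\mathcal{E}}-a_j)]$; Taylor-expanding $\log(\bar{\mathcal{E}}-x)$ to second order with integral remainder exactly $-V(x)$ as in \eqref{eq:strategy:defV}, and using $\bar{\mathcal{E}}^{-1}=\mathcal{E}$, $\bar{\mathcal{E}}^{-2}=\mathcal{E}^2$, the linear part equals $\mathcal{E}\sum_j(a_j-ib_j)$ and cancels the linear terms above, the quadratic part $\tfrac{\mathcal{E}^2}{2}\sum_j(a_j^2+b_j^2)$ combines with $-\tfrac12[(a,J^{-1}a)+(b,J^{-1}b)]$ into $-\tfrac12[(a,B^{-1}a)+(b,B^{-1}b)]$ with $B^{-1}=-W^2\Delta+(1-\mathcal{E}^2)$ as in \eqref{eq:strategy:defB}, and the cubic remainders reproduce $\mathcal{V}(a,b)$ of \eqref{eq:strategy:defV}.

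For the two determinants I would use that, by \eqref{eq:strategy:defD}, the shifted matrix satisfies $F(a+a_s^+,b+b_s^+)=\mathcal{E}^2-D$ in the limit, so, since $J$ and $B$ commute,
\[
\det\!\big[1-F(a+a_s^+,b+b_s^+)J\big]=\det\!\big[(B^{-1}+D)J\big]=\det(JB^{-1})\,\det[1+DB].
\]
The constant $\det(JB^{-1})$ is exactly cancelled by the factor $\det(J^{-1}B)$ produced when the normalisation $\det[J^{-1}/2\pi]$ of $\diff\mu_J$ is rewritten as $\det(J^{-1}B)\,\det[B^{-1}/2\pi]$, and what survives is precisely $\diff\mu_B(a,b)\,\det[1+DB]\,\eto^{\mathcal{V}(a,b)}=\diff\mu_B(a,b)\,\mathcal{R}(a,b)$. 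For the observable, the shift sends $a_0\mapsto a_0+\mathcal{E}=a_0+a_s^+$ and leaves $\sum_j(a_j-ib_j)$ invariant (since $(a_j+\mathcal{E})-i(b_j-i\mathcal{E})=a_j-ib_j$). Applying the same deformation to \eqref{eq:strategy:avarage1} and to its $E$-derivatives yields $\int\diff\mu_B\,\mathcal{R}=1$ and $\int\diff\mu_B\,\mathcal{R}\,(\sum_j(a_j-ib_j))^n=0$ for $n\ge1$; hence \eqref{eq:strategy:dualrepspa1} follows by splitting $a_0+a_s^+$, \eqref{eq:strategy:dualrepspa2} by the invariance of $\sum_j(a_j-ib_j)$, and \eqref{eq:strategy:dualrepspa3} by expanding the $n$-th power.

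The hard part will not be this bookkeeping but the two analytic inputs underlying it. First, the $|\Lambda|$-step contour shift requires Gaussian decay bounds uniform in the remaining variables and in $\re(\cdot)$, both to discard the boundary segments and to license Fubini at each step. Second, the limit $\varepsilon\to0$ must be taken \emph{after} the shift: on the translated contours the dangerous denominators become $\bar{\mathcal{E}}-a_j+i\varepsilon$ and $\mathcal{E}_r+i(\mathcal{E}_i-b_j+\varepsilon)$ with $a_j,b_j\in\mathbb{R}$, whose moduli are bounded below by $\mathcal{E}_i=\sqrt{1-E^2/4}>0$ and by $|\mathcal{E}_r|=|E|/2\ge\eta/2>0$ respectively — this is exactly where excluding $E=0$ enters — so the integrand is dominated, uniformly for small $\varepsilon\ge0$, by an integrable function (Gaussian decay times these lower bounds times a polynomial bound on $\det[1+DB]$), and dominated convergence applies. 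Establishing these uniform bounds, rather than the formal manipulations, is the substantive step.
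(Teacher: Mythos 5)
Your proposal is correct and follows essentially the same route as the paper: Cauchy translation of each contour to the saddle (noting the pole at $a_j=E_\varepsilon$ lies above the shifted line since $\im\mathcal{E}<0$), reorganization of the Gaussian weight and of $\prod_j(\bar{\mathcal{E}}-ib_j)/(\bar{\mathcal{E}}-a_j)$ via a second-order Taylor expansion whose integral remainder is $\mathcal{V}$, the identity $\det[1-FJ]\,\det B/\det J=\det[1+DB]$, and the normalization identity \eqref{eq:strategy:avarage1} (and its $E$-derivatives) to handle the constant $a_s^+$. Your added discussion of the uniform decay bounds justifying Fubini, the boundary segments, and the $\varepsilon\to 0$ limit via the lower bounds $|\bar{\mathcal{E}}-a_j|\geq\mathcal{E}_i$ and $|\bar{\mathcal{E}}-ib_j|\geq|E|/2$ only makes explicit what the paper leaves implicit.
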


\begin{proof}
By Cauchy's theorem, we can perform the translations
$a_j \mapsto a_j +a_s^+$ and $b_j \mapsto b_j +b_s^+$
for all $j \in \Lambda$ and take the limit $\varepsilon \to 0$
inside the functional integral \eqref{eq:strategy:dualrepsusy1}.
Note that translating to $a_{s}^{+}$ ensures that there is no
additional contribution from the pole $E+i\varepsilon$. 
Using \eqref{eq:strategy:avarage1}, the integral with constant $a_s^+$ gives $1$.
The measure \eqref{eq:strategy:measure} after the translation is reorganized as follows.
Expanding around $a=b=0$ we can write
\begin{align*}
\diff\mu_J(a+\mathcal{E},b-i\mathcal{E}) 
\prod_{j \in \Lambda}\tfrac{E_\varepsilon- i (b_j-i\mathcal{E})}{E_\varepsilon - (a_j+\mathcal{E})}=
\diff\mu_B(a,b)\tfrac{\det B}{\det J} e^{\gls{V(a,b)}}.
\end{align*}
where $\gls{V(a,b)}=O (|a|^{3}+|b^{3}|)$ since linear contributions vanish (we are expanding around the saddle)
and constant terms cancel.
Finally
\begin{align*}
\det[1-F(a+\mathcal{E},b-i\mathcal{E})J]\tfrac{\det B}{\det J}= \det[1 + D B].
\end{align*}

To obtain the second identity, note that $(a_j+a_s^+) - i(b_j+b_s^+) = a_j-ib_j$,
and the integral with constant $a_s^+$ vanishes 
since it corresponds to the derivative of a constant.
\end{proof}

\begin{rem}
Note that now there is no pole in $a_j$ if $|E|<2$ since $|\bar{\mathcal{E}}-a_j|\geq |\mathcal{E}_i|>0$
for all $a_j\in\mathbb{R}$. 
For $b_j$, a singularity appears from the determinant for the special case $E=0$.
As the same factor appears outside the determinant, this is a removable singularity. 
Nevertheless we avoid $E=0$ in the definition of the interval $\mathcal{I}$ \eqref{eq:model:interval}.
\end{rem}

With these representations, the proof of Theorem \ref{theo:model:mainresult}
is reduced  to prove the following theorem since $\im (a_{s}^{+})=\sqrt{1-E^{2}/4}$ yields the semicircle law
\eqref{eq:intro:semicirclelaw}.

\begin{theo}
\label{theo:strategy:reducedresult}
Under the same assumptions as in Theorem \ref{theo:model:mainresult}, we have
\begin{align}
\label{eq:strategy:semicircle}
\left |\int\diff  \mu_B(a,b)  \mathcal{R} (a,b) a_{0}\right |
&\leq W^{-2} K^{(\ln W)^\alpha}\\
\label{eq:strategy:derivative}
\left|\sum_{j_1,\dots,j_n\in\Lambda}\int\diff\mu_B(a,b)\mathcal{R} (a,b) a_{0}\prod_{k=1}^{n}\left(a_{j_{k}}-ib_{j_k}\right)\right|
&\leq C_n W^{-1}  (\ln W)^{n (\alpha+1) } \eto^{K  (\ln W)^{\alpha }}.
\end{align}
\end{theo}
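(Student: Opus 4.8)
The plan is to follow the route of \cite{disertori-pinson-spencer}, adapted to $d=2$ by enlarging the block size and by reorganizing the cluster expansion supersymmetrically. First I would rewrite the remainder $\mathcal{R}(a,b)=\det[1+DB]\,e^{\mathcal{V}(a,b)}$ in a manifestly (super-)symmetric form: introducing a Grassmann pair $(\bar\rho_j,\rho_j)$ at each site and packing $(a_j,b_j,\bar\rho_j,\rho_j)$ into a $2\times2$ supermatrix $M_j$, the Fermionic determinant $\det[1+DB]$ becomes a Gaussian integral in the $\rho$'s, so the whole representation collapses to $\int\diff\mu_B(M)\,e^{\mathcal V(M)}\,a_0\,[\,\cdot\,]^n$ for a single normalized Gaussian (super-)measure $\diff\mu_B(M)$ and an effective potential $\mathcal V(M)=\sum_j\mathcal V_j(M_j)$ which begins at third order, $\mathcal V_j=O(\|M_j\|^3)$ near $M_j=0$ (the would-be linear and quadratic pieces of $\det[1+DB]$ combine into trilinear super-vertices of the form $a\,\bar\rho\rho$). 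Alongside this I would record the analytic input on the covariance in $d=2$: $\sum_j|B_{ij}|\le K$ (reflecting $B\mathbf 1=(1-\mathcal E^2)^{-1}\mathbf 1$), the diagonal bound $|B_{ii}|\le K(\ln W)/W^2$, and the exponential decay $|B_{ij}|\le K\,((\ln W)/W^2)\,e^{-c|i-j|/W}$, uniformly for $E\in\mathcal I$, together with the pointwise estimates $|D_j(a,b)|\le K(|a_j|+|b_j|)$ and $|\mathcal V_j|\le K(|a_j|^3+|b_j|^3)$ valid while $|a_j|,|b_j|\lesssim|\mathcal E_i|$.

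Second, I would partition $\Lambda$ into cubes $\triangle$ of side $W(\ln W)^{\alpha/2}$, i.e. volume $|\triangle|=W^2(\ln W)^\alpha$, and split the integration domain $I=\bigcup_s I^s$ according to, in each cube, whether the field lies in the basin of the dominant saddle $M=0$ or is "large". On a large cube one of two suppressions must apply: either the Gaussian weight $e^{-\frac12(a,B^{-1}a)-\frac12(b,B^{-1}b)}$ is exponentially small, or the field sits near the subdominant $b$-saddle, in which case the effective potential $\mathcal V$ together with $\det[1+DB]$ supplies the genuine suppression. The point of the enlarged block is precisely that this second-saddle suppression — which in $d=2$ is only marginal, a power of $1/W$ times logarithms rather than the $e^{-cW}$ available in $d=3$ — becomes, once raised to the block volume $W^2(\ln W)^\alpha$, strong enough to dominate the entropy of cube configurations; the finite-volume version of \eqref{eq:strategy:semicircle}–\eqref{eq:strategy:derivative} is established here first.

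Third, on the region where all cubes are good I would run the non-standard cluster expansion of \cite{disertori-pinson-spencer} in its supersymmetric form: expand $e^{\mathcal V(M)}$ and integrate by parts against $\diff\mu_B(M)$, extracting at each inductive step a minimal connected object — a cubic (super-)vertex together with the three $B$-propagator lines attached to it, the "multi-link" — while recording the interpolation parameters $s_p$, so that the output is reorganized as a sum over (generalized) polymers $\tilde Y$ carrying ordered trees $T$. Power counting then gives, per active cube, an interaction of size $\le K\sum_{j\in\triangle}|B_{jj}|^{3/2}\le K(\ln W)^{\alpha+3/2}/W$ (with $a_j$ of typical Gaussian size $|B_{jj}|^{1/2}$ summed over the $|\triangle|$ sites), the observable $a_0$ contributing an extra $|B_{00}|^{1/2}\le K(\ln W)^{1/2}/W$; any two cubes joined in a polymer are linked by an inter-cube propagator bounded by $K((\ln W)/W^2)\,e^{-c(\ln W)^{\alpha/2}}$, and this stretched-exponential gain in $\ln W$ beats the logarithmic entropy of polymers and trees, so the polymer sum converges with a bound $O(1)$ uniformly in $\Lambda$. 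Collecting factors yields \eqref{eq:strategy:semicircle} with the stated $K^{(\ln W)^\alpha}$; tracking the $n$ derivative insertions $\sum_j(a_j-ib_j)$ — each attached to the polymer it lands on at the cost of one factor $\le K|\triangle|\max_{ij}|B_{ij}|\le K(\ln W)^{\alpha+1}$ — produces the $(\ln W)^{n(\alpha+1)}$ dependence in \eqref{eq:strategy:derivative}.

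The main obstacle is the second step: showing that the subdominant $b$-saddle is suppressed strongly enough in $d=2$. Because the suppression is only marginal, it must be harvested block by block and then weighed against the damage it causes to the good configurations — larger blocks mean larger per-block logarithmic factors and a more delicate convergence of the cluster expansion. Making both work at once is what forces the specific volume $W^2(\ln W)^\alpha$, and it is the reason the argument is confined to $d=2$ and to energies in the bulk with $E$ bounded away from $0$ and $2$.
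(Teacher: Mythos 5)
Your overall architecture matches the paper's: the supersymmetric rewriting of $\det[1+DB]$ into a single Gaussian supermeasure with a cubic potential $\mathcal V(M)$, blocks of volume $W^2(\ln W)^\alpha$ chosen to amplify the marginal second-saddle suppression, a saddle-region decomposition cube by cube, and a cluster expansion extracting multi-links organized by trees on generalized polymers. But there are three concrete gaps. First, your power counting assigns the observable only $|B_{00}|^{1/2}\le K(\ln W)^{1/2}/W$, which yields a bound of order $W^{-1}$, not the claimed $W^{-2}$ in \eqref{eq:strategy:semicircle}; the paper flags exactly this insufficiency and cures it by a preliminary integration by parts $a_0=\sum_{l_0}B_{0l_0}\partial_{a_{l_0}}$ acting on $\eto^{\mathcal V(M)}$ (Lemma \ref{lem:preliminary:integrationbyparts} and its SUSY analogue), so that the observable becomes a derived vertex $\partial_{a_{l_0}}\mathcal V(M_{l_0})$ carrying at least two field factors, worth $W^{-2}\ln W$, while $\sum_{l_0}|B_{0l_0}|\le K$ controls the new sum. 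Without this step \eqref{eq:strategy:semicircle} is out of reach.

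Second, your stated convergence mechanism for the polymer sum --- an inter-cube propagator gain $\eto^{-c(\ln W)^{\alpha/2}}$ --- fails: adjacent cubes contain sites at distance $O(1)$, so the strong link $C_{k_qk_q'}$ joining an old cube to a new one carries no such suppression. In the paper the small factor per link is $g=K^{(\ln W)^\alpha}W^{-1/3+\varepsilon}$ and comes entirely from field power counting: each derived vertex $j$ carries $n_j\ge3$ field factors, each worth $(\ln W)^{1/2}/W$, set against vertex sums of size $W^2(\ln W)^{O(1)}$ per new cube; the exponential decay is spent only on factorials and on summing the cube positions. Third, your one-line treatment of the derivative insertions ignores the coincidence problem: when several $l_k$ coincide, $\prod_m\Str\partial_{M_{l_m}}\eto^{\mathcal V}$ produces terms linear in $M$, which carry too few fields to pay for summing $l_k$ over the $W^2(\ln W)^\alpha$ sites of a cube; the paper must iterate integration by parts on these linear factors (up to $2n$ times) before running the cluster expansion. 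A minor point: the three lines of a multi-link arise from $\partial_{s}B(s)=G(s)\,\partial_s C(s)\,G(s)$ under interpolation of the real covariance $C$, not from a cubic vertex with three attached propagators.
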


\subsection{Strategy of the proof: finite and infinite volume}
To prove the results above, we will need to estimate integrals of the following
form
\begin{align}
\label{eq:strategy:startingpoint}
\int \diff \mu_B (a,b) \mathcal{R}(a,b) \mathcal{O}(a,b),
\end{align}
where $\gls{O(a,b)} \coloneqq \prod_{k} a_{k} \prod_{l} b_{l}$ is a {\em local} observable, i.e.
a product of finitely many field factors $a_k$ and $b_l$.
We will show that, inserting absolute values inside \eqref{eq:strategy:startingpoint} 
leads to the following estimate
\begin{align*}
\left|\int\diff\mu_B(a,b)\mathcal{R}(a,b)\mathcal{O}(a,b)\right|
\leq \eto^{K\frac{|\Lambda|}{W^2}}\int \diff\mu_C(a,b) \left|\eto^{\Tr DB}\right| \
\left|\eto^{\mathcal{V}(a,b)}\right| \ \left|\mathcal{O}(a,b)\right|,
\end{align*}
where $C$ is a real covariance (defined in \eqref{eq:preliminary:defC} below),
and $D,B$ and $\mathcal{V}$ were defined in Lemma \ref{lem:strategy:dualrep}.
Guided by the saddle point approach, we will partition the domain of integration into different 
regions, respectively near to and far from the saddle points, and estimate the integral 
in each region separately (cf. Lemma \ref{lem:preliminary:estimates}).
To obtain the finite volume estimate of \eqref{eq:strategy:semicircle},
an additional preliminary step of integration by parts is needed to improve the error estimates.
All this is done in Section \ref{sec:preliminary}.

These arguments work only in finite volume, since the factor $\exp(K|\Lambda|W^{-2})$ 
diverges as $|\Lambda| \to \infty$.
To deal with this problem, we will partition $\Lambda$ into cubes 
(of finite, but large volume).
Applying a suitable  cluster expansion, we can write \eqref{eq:strategy:startingpoint}
as a sum of the form
\begin{align*}
\sum_{Y}c_Y F_Y,
\end{align*}
where $Y$ are polymers, i.e. unions of cubes, and the constant $c_Y$ is an exponentially small
factor controlling the sum.
Finally,  $F_Y$ is a functional integral depending only on the fields inside $Y$,
and can be estimated by the same tools as in the finite volume case.
The precise definitions and details are given in Section \ref{sec:cluster}.

\section{Preliminary results}\label{sec:preliminary}
In this section, we start by collecting in Section \ref{subsec:cov} and \ref{subsec:estim}
some results and bounds we will need later. Finally in Section \ref{subsec:finitev} we prove an estimate for
the absolute value of the integral  \eqref{eq:strategy:startingpoint} in a large but finite volume. 
The proof uses a partition of the integration domain  into regions, selecting values of $(a,b)$ in the
vicinity or far from the saddles.
\subsection{Properties of the covariance}
\label{subsec:cov}
The Hessian $B^{-1} = - W^2 \Delta + (1 - \mathcal{E}^2)$ has a complex mass term
\begin{align*}
1 - \mathcal{E}^2 = 2 \left( 1 - \tfrac{E^2}{4} \right) + i E \sqrt{1 - \tfrac{E^2}{4}} 
=\vcentcolon m_r^2 + i \sigma_{E} m_i^2,\quad \sigma_{E} \coloneqq \sgn (E).
\end{align*}
For $|E|<2$, $m_r^2>0$, 
hence the integrals \eqref{eq:strategy:dualrepspa1} and \eqref{eq:strategy:dualrepspa2} 
are finite.
We introduce the real covariance $C$ defined by 
\begin{align}
\label{eq:preliminary:defC}
\gls{C}  \coloneqq [\re (B^{-1})]^{-1}= (- W^2 \Delta + m_r^2)^{-1}.
\end{align}
Note that $B^{-1} = C^{-1} + i \sigma_E m_i^2$ and $C>0$ both as a quadratic form and pointwise.
The decay of $C_{ij}$ depends on the space dimension $d$.
For $d=2$, we have
\begin{align}
\label{eq:preliminary:expdecayC}
0 < C_{ij} \leq 
\begin{cases}
\frac{K}{W^{2}} \ln\left(\frac{W}{m_r(1+|i-j|)}\right) & \text{if } |i-j|\leq \frac{W}{m_r},\\
\frac{K}{|i-j|^{1/2}W^{3/2}} \eto^{-\frac{m_r}{W}|i-j|} & \text{if } |i-j| > \frac{W}{m_r}.
\end{cases}
\end{align}
Morover $|B_{ij}|$ has the same decay as $C_{ij}$.
A proof is given in App. \ref{sec:appcov}.

\begin{rem}
For $d=3$, the decay is easier:
\begin{align}
\label{eq:preliminary:expdecayC3d}
C_{ij}\leq \tfrac{K}{W^2(1+|i-j|)} \eto^{-\frac{m_r}{W}|i-j|}\quad \forall i,j\in \Lambda.
\end{align}
Because of the $\log$-behavior for small distances, 
estimating the error terms \eqref{eq:model:semicircle}-\eqref{eq:model:derivative} 
in $d=2$ is more difficult than in $d=3$ (cf. \cite[eq. (2.6)-(2.7)]{disertori-pinson-spencer}).
\end{rem}

\subsection{Some useful estimates}
\label{subsec:estim}
We frequently use the following statement to estimate determinants.

\begin{lemma}
For any complex matrix $A$ with $\Tr A^* A < \infty$, we have 
\begin{align}
\label{eq:preliminary:det1+A}
\left|\det[1 + A]\right| \leq \left| \eto^{\Tr A} \right| \eto^{\frac{1}{2} \Tr A^* A}.
\end{align}
\end{lemma}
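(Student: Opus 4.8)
The plan is to reduce the inequality to a statement about the singular values of $A$. First I would diagonalize using the singular value decomposition: write $A = U\Sigma V^*$ with $U,V$ unitary and $\Sigma = \mathrm{diag}(\sigma_1,\dots,\sigma_N)$, $\sigma_k\geq 0$, so that $\Tr A^*A = \sum_k \sigma_k^2 < \infty$. This controls $|\det[1+A]|$ through $|\det[1+A]| \leq \prod_k (1+\sigma_k)$, which follows because the singular values of $1+A$ are bounded by $1+\sigma_k$ (the largest singular value is an operator norm, which is subadditive). Taking logs, it then suffices to show $\sum_k \ln(1+\sigma_k) \leq \re(\Tr A) + \tfrac12\sum_k\sigma_k^2$. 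Since $\ln(1+x)\leq x$ for $x\geq 0$ we get $\sum_k\ln(1+\sigma_k)\leq \sum_k\sigma_k$, but this is not quite enough: we need to pay for the difference between $\sum_k\sigma_k$ and $\re(\Tr A)$, which can be large.

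The cleaner route, which I would actually follow, avoids the SVD and instead works directly with eigenvalues combined with a Schur-type argument, or better, uses the elementary scalar inequality $|1+z|\leq \eto^{\re z + \tfrac12|z|^2}$ valid for all $z\in\mathbb{C}$ — indeed $|1+z|^2 = 1 + 2\re z + |z|^2 \leq \eto^{2\re z}\eto^{|z|^2} \leq \eto^{2\re z + |z|^2}$ using $1+t\leq \eto^t$ twice with $t = 2\re z$ and $t = |z|^2$ (the first needs $1+2\re z \geq 0$; if $1+2\re z < 0$ the bound $|1+z|^2 \leq |z|^2 \leq \eto^{|z|^2}$ combined with $\eto^{2\re z}\geq 0$... ) — so one has to be slightly careful, but the inequality $|1+z|\le \eto^{\re z + |z|^2/2}$ does hold for all $z$, e.g. because $\tfrac12\log(1+2x+y) \le x + \tfrac{y}{2}$ for $x\in\mathbb{R}$, $y\ge x^2$ (here $x=\re z$, $y=|z|^2$), which reduces to $\log(1+u)\le u$ after completing the square. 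Then I would reduce the matrix case to this scalar case by bringing $A$ to upper triangular form: by Schur's theorem there is a unitary $U$ with $U^*AU = T$ upper triangular, $\det[1+A] = \det[1+T] = \prod_k (1+T_{kk})$ where $T_{kk}$ are the eigenvalues $\lambda_k$ of $A$. Hence $|\det[1+A]| = \prod_k|1+\lambda_k| \leq \prod_k \eto^{\re\lambda_k + |\lambda_k|^2/2} = |\eto^{\Tr A}|\,\eto^{\tfrac12\sum_k|\lambda_k|^2}$, and Schur's inequality $\sum_k|\lambda_k|^2 \leq \Tr A^*A$ finishes the proof.

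The main obstacle is the last step, namely getting $\sum_k|\lambda_k|^2\le\Tr A^*A$ rather than an equality: eigenvalues are not singular values, so $\sum_k|\lambda_k|^2$ can be strictly smaller than $\Tr A^*A = \sum_k\sigma_k^2$. This is exactly Schur's inequality, which follows from the triangular form $T$ with $\sum_{k}|\lambda_k|^2 = \sum_k|T_{kk}|^2 \le \sum_{k\le l}|T_{kl}|^2 = \Tr T^*T = \Tr (U^*A^*U)(U^*AU) = \Tr A^*A$. The only other point requiring a little care is the scalar inequality $|1+z|\le\eto^{\re z + |z|^2/2}$ and the passage to an infinite-dimensional setting if $\Lambda$ is large but finite this is a non-issue, and the hypothesis $\Tr A^*A<\infty$ guarantees the products and sums above converge in the general case. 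I expect the whole argument to be short once these pieces are assembled.
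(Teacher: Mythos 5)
Your argument is correct, but it takes a genuinely different route from the paper. You triangularize $A$ by Schur's theorem, apply the scalar inequality $|1+z|\leq \eto^{\re z+|z|^2/2}$ to each eigenvalue $\lambda_k$ of $A$, and then invoke Schur's inequality $\sum_k|\lambda_k|^2\leq \Tr A^*A$ to convert the eigenvalue sum into the Hilbert--Schmidt norm. The paper instead squares the determinant and observes that $|\det[1+A]|^2=\det[(1+A)^*(1+A)]=\det[1+M]$ with $M=A+A^*+A^*A$ self-adjoint and $1+M\geq 0$; applying $1+\lambda\leq\eto^{\lambda}$ to the real eigenvalues of $M$ and computing $\Tr M=2\re\Tr A+\Tr A^*A$ gives the bound in two lines, with no need for triangularization or Schur's inequality. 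Your route yields the slightly sharper intermediate bound with $\sum_k|\lambda_k|^2$ in place of $\Tr A^*A$, at the cost of the extra lemma; the paper's route is shorter because the trace identity does all the bookkeeping at once. Two small remarks on your write-up: the scalar inequality needs no case analysis at all, since $|1+z|^2=1+u$ with $u=2\re z+|z|^2$ and $1+u\leq\eto^{u}$ holds for every real $u$ (the left side being automatically nonnegative here); and your opening SVD attempt is correctly diagnosed as a dead end, since $\sum_k\sigma_k$ cannot be compared with $\re\Tr A$, so it can simply be omitted.
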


\begin{proof}
Consider the matrix $M = A + A^* + A^* A$, which is self-adjoint and diagonalizable 
with real eigenvalues $\lambda_i$.
Then
\begin{align*}
|\det[1 + A]|^2 = \det[1 + M] 
= \prod_i (1 + \lambda_i) \leq \eto^{\sum_i \lambda_i} 
= \eto^{\Tr M} \leq \left|\eto^{2 \Tr A}\right| \eto^{\Tr A^* A},
\end{align*}
where we apply $1+\lambda_i \leq \eto^ {\lambda_i}$ for all $\lambda_i\in\mathbb{R}$.
\end{proof}

In the finite volume estimates, we will insert quadratic terms in $a$ and $b$ into the measure 
and change the covariance from $C$ to $C_f \coloneqq (C^{-1} - fm_r^2)^{-1}$. 
We estimate the change of the normalization factor $\det[C^{-1}/C_f^{-1}] $ as follows. 

\begin{lemma}
\label{lem:preliminary:normalization}
For $d=2,W\gg 1$ and  $0<f<1/2$, there exist some constants $K > 0$ (independent of $W$ and $f$) such that
\begin{align*}
\det\left[\frac{C^{-1}}{C_f^{-1}}\right]
\leq \frac{1}{1-f} \exp\left[\frac{Kf|\Lambda|}{W^2}\ln \left(\frac{W^2}{1-f}\right)\right].
\end{align*}
\end{lemma}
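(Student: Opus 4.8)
The plan is to compute the determinant explicitly in Fourier space and then bound the resulting sum over momenta. First I would diagonalize: on the torus $\mathbb{Z}^2/(\Lambda\mathbb{Z}^2)$ the operators $C^{-1}=-W^2\Delta+m_r^2$ and $C_f^{-1}=-W^2\Delta+(1-f)m_r^2$ are both diagonal in the Fourier basis, with eigenvalues $W^2\omega(p)+m_r^2$ and $W^2\omega(p)+(1-f)m_r^2$ respectively, where $\omega(p)=\sum_{\nu=1}^2 2(1-\cos p_\nu)\ge 0$ runs over $p$ in the dual lattice. Hence
\begin{align*}
\det\left[\frac{C^{-1}}{C_f^{-1}}\right]
=\prod_p \frac{W^2\omega(p)+m_r^2}{W^2\omega(p)+(1-f)m_r^2}
=\prod_p\left(1+\frac{f m_r^2}{W^2\omega(p)+(1-f)m_r^2}\right),
\end{align*}
so that $\ln\det[C^{-1}/C_f^{-1}]=\sum_p \ln\bigl(1+\tfrac{f m_r^2}{W^2\omega(p)+(1-f)m_r^2}\bigr)\le \sum_p \tfrac{f m_r^2}{W^2\omega(p)+(1-f)m_r^2}=f m_r^2\,\Tr C_f$.

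The task is therefore to bound $m_r^2\,\Tr C_f$. I would split the $p$-sum according to whether $W^2\omega(p)$ is below or above $m_r^2$. Using $\omega(p)\ge c|p|^2$ for $p$ in the fundamental domain, the number of momenta with $W^2\omega(p)\lesssim m_r^2$ is $O(|\Lambda|\, m_r^2/W^2)$ (assuming $m_r\le W$, which holds since $W\gg1$ and $m_r$ is $O(1)$ on $\mathcal I$), and each such term contributes at most $\tfrac{m_r^2}{(1-f)m_r^2}=\tfrac{1}{1-f}$; this gives a contribution $\lesssim \tfrac{|\Lambda|}{W^2}\cdot\tfrac{m_r^2}{1-f}$, which is of the claimed form since $m_r^2=O(1)$. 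For the remaining momenta one has $W^2\omega(p)\gg m_r^2\ge(1-f)m_r^2$, so the summand is $\lesssim m_r^2/(W^2\omega(p))\asymp m_r^2/(W^2|p|^2)$; summing $|p|^{-2}$ over the dual lattice away from $p=0$ up to the cutoff $|p|\sim 1$ produces the logarithm: $\sum_{m_r/W\lesssim|p|\lesssim 1}|p|^{-2}\lesssim |\Lambda|\int_{m_r/W}^{1}\tfrac{dr}{r}\asymp|\Lambda|\ln(W/m_r)$. This yields a bound $\lesssim \tfrac{m_r^2|\Lambda|}{W^2}\ln(W/m_r)$. Alternatively, and more cleanly, I can avoid separating cases by noting $\Tr C_f=\Tr C_f - \Tr C + \Tr C$ and using the explicit diagonal bound on $C_{jj}$ from \eqref{eq:preliminary:expdecayC} (which gives $m_r^2 C_{jj}\le \tfrac{K m_r^2}{W^2}\ln(W/m_r)$, hence $m_r^2\Tr C\le \tfrac{K m_r^2|\Lambda|}{W^2}\ln(W/m_r)$), together with $C_f=(C^{-1}-fm_r^2)^{-1}$, which in Fourier space shows $\Tr C_f\le \tfrac{1}{1-f}\Tr C$ pointwise eigenvalue-by-eigenvalue actually fails, so I would instead just rescale: $C_f=(-W^2\Delta+(1-f)m_r^2)^{-1}$ is of the same form as $C$ with $m_r^2$ replaced by $(1-f)m_r^2$, so the estimate \eqref{eq:preliminary:expdecayC} applies verbatim with $m_r\to\sqrt{1-f}\,m_r$, giving $(1-f)m_r^2\,(C_f)_{jj}\le \tfrac{K}{W^2}\ln\bigl(\tfrac{W}{\sqrt{1-f}\,m_r}\bigr)$. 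Multiplying by $|\Lambda|$ and absorbing $m_r=O(1)$ and the factor $\tfrac{1}{1-f}$ into the stated form (using $\ln(W/\sqrt{1-f})\le \tfrac12\ln(W^2/(1-f))$) produces exactly $\tfrac{1}{1-f}\exp[\tfrac{Kf|\Lambda|}{W^2}\ln(W^2/(1-f))]$ after exponentiating; the leading $\tfrac{1}{1-f}$ outside the exponential can be kept to absorb the single $p=0$ mode, whose contribution is exactly $\tfrac{m_r^2}{(1-f)m_r^2}=\tfrac{1}{1-f}$ and which must be treated separately since it is not controlled by the $\ln(W/m_r)$ bound.

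The main obstacle is the careful bookkeeping of the zero (and near-zero) momentum modes: the naive bound $f m_r^2\Tr C_f$ has a genuinely $f$-singular contribution from $p$ near $0$ that cannot be absorbed into $\tfrac{Kf|\Lambda|}{W^2}\ln(\cdots)$ inside the exponential, which is precisely why the statement isolates a prefactor $\tfrac{1}{1-f}$. I would handle this by peeling off the $p=0$ mode (contributing the explicit factor $\tfrac{1}{1+f m_r^{-2}\cdot\,\text{stuff}}\le\tfrac{1}{1-f}$ after the bound $\ln(1+x)\le x$ is replaced by the exact product) and bounding the rest uniformly. The remaining estimates are routine given \eqref{eq:preliminary:expdecayC} and the elementary inequality $\ln(1+x)\le x$.
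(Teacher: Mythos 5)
Your proposal follows essentially the same route as the paper: diagonalize $C^{-1}$ and $C_f^{-1}$ in Fourier space, peel off the zero mode to produce the prefactor $\frac{1}{1-f}$, bound the remaining product via $1+x\le \eto^{x}$, and estimate the resulting momentum sum by $\frac{Kf|\Lambda|}{W^2}\ln(W^2/(1-f))$. (One small aside: your claim that $\Tr C_f\le\frac{1}{1-f}\Tr C$ ``fails'' eigenvalue-by-eigenvalue is actually incorrect --- the inequality $(1-f)(W^2\omega(p)+m_r^2)\le W^2\omega(p)+(1-f)m_r^2$ holds for every mode --- but since you abandon that route anyway, this does not affect your argument.)
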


\begin{proof}
We use the explicite eigenvalues of $C^{-1}$ and $C^{-1}_f$ to write 
\begin{align*}
\det\left[\frac{C^{-1}}{C_f^{-1}}\right] &= 
\prod_k\frac{2\sum_{l=1}^d(1-\cos k_l)W^2+m_r^2}{2\sum_{l=1}^d(1-\cos k_l)W^2+m_r^2(1-f)}\\
&\leq \frac{1}{1-f} 
\exp\left[{m_r^2f\sum_{k\neq 0}\left(2\sum_{l=1}^d(1-\cos k_l)W^2+m_r^2(1-f)\right)^{-1}}\right],
\end{align*}
where we extract the zero's mode and apply $1+\lambda_i \leq \eto^ {\lambda_i}$.
Approximating the sum in the exponential yields the above result.
\end{proof}

Finally, we give the Brascamp-Lieb inequality \cite{brascamp-lieb}, 
which is used in the estimates near the dominant saddle point.
\begin{theo}[Brascamp-Lieb inequality]
\label{theo:preliminary:brascamplieb}
Let $\mathcal{H}(x)$ be a positive Hamiltonian, symmetric under $x\mapsto -x$
and let $\diff\mu_\mathcal{H}(x)$ be a Gibbs measure given by
\begin{align*}
\diff  \mu_\mathcal{H}(x) 
\coloneqq \diff  x_1\cdots \diff  x_N \frac{1}{Z(\mathcal{H})}\eto^{-\frac{1}{2}\mathcal{H}(x)},
\end{align*} 
where  $Z(\mathcal{H}) \coloneqq \int \diff x_1\cdots \diff x_N 
\exp\left(- \mathcal{H}(x)/2\right)$  is the partition function.
If $\mathcal{H}''\geq C^{-1} >0$, the following inequalities hold:
\begin{align}
\label{eq:preliminary:brascamplieb}
\int \diff  \mu_\mathcal{H}(x)|x_i|^n &\leq \int \diff  \mu_C(x) |x_i|^n, 
\quad \forall n > 0 \text{ and }\\
\nonumber 
\int \diff  \mu_\mathcal{H}(x)\eto^{(v,x)}&\leq \int \diff  \mu_C(x) \eto^{(v,x)},
\end{align}
where $\diff  \mu_C(x)$ is the free Gaussian measure and
$v\in\mathbb{R}^N$ and $(v,x)=\sum_{i=1}^N v_i x_i$.
\end{theo}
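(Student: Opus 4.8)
The statement to prove is the Brascamp--Lieb inequality, Theorem \ref{theo:preliminary:brascamplieb}. Here is how I would organize the argument.

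\medskip

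\textbf{Setup and strategy.}
The plan is to prove the first inequality (moments) and the second (exponential/Laplace transform) by the same interpolation device, following the original approach of Brascamp and Lieb but keeping everything self-contained. The key object is the Helffer--Sj\"ostrand representation, or equivalently a heat-flow interpolation: for a fixed observable $g$ one defines, for $t\in[0,1]$, the interpolated Hamiltonian $\mathcal H_t(x) \coloneqq \tfrac{1}{2}(x, C^{-1} x) + t\bigl(\mathcal H(x) - \tfrac{1}{2}(x,C^{-1}x)\bigr)$, which has $\mathcal H_t'' = C^{-1} + t(\mathcal H'' - C^{-1}) \geq C^{-1} > 0$ for all $t\in[0,1]$ by the hypothesis $\mathcal H'' \geq C^{-1}$. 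One then studies $\phi(t) \coloneqq \int \diff\mu_{\mathcal H_t}(x)\, g(x)$ (for $g(x) = |x_i|^n$, respectively $g(x)=\eto^{(v,x)}$) and shows $\phi(1) \leq \phi(0)$, which is exactly the claim since $\mathcal H_0$ corresponds to the free Gaussian measure $\diff\mu_C$ and $\mathcal H_1 = \mathcal H$.

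\medskip

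\textbf{Proof of the moment inequality.}
First I would reduce to a one-dimensional statement. Since $\mathcal H$ is symmetric under $x\mapsto -x$ and $\mathcal H'' \geq C^{-1}$, the marginal law of the coordinate $x_i$ under $\diff\mu_{\mathcal H}$ is an even log-concave probability density $\rho(x_i)$ on $\mathbb R$. The precise comparison I need is: if $\rho$ is an even log-concave density on $\mathbb R$ with ``curvature at least'' that of the Gaussian $g_\sigma$ of variance $\sigma^2 = C_{ii}$, then $\int |x|^n \rho(x)\diff x \leq \int |x|^n g_\sigma(x)\diff x$ for all $n>0$. The classical way to get this is to show that, under the curvature hypothesis, the two cumulative distribution functions cross exactly once, with $\rho$ putting more mass near the origin; then any increasing function of $|x|$ (in particular $|x|^n$) has smaller expectation under $\rho$. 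To establish the single-crossing property I would differentiate along the interpolation $\mathcal H_t$: one computes $\tfrac{d}{dt}\phi(t)$ using that $\partial_t \mathcal H_t = \mathcal H - \tfrac12(\cdot,C^{-1}\cdot) \eqqcolon R$ and the identity $\tfrac{d}{dt}\int g\,\diff\mu_{\mathcal H_t} = -\tfrac12\bigl(\langle g\,R\rangle_t - \langle g\rangle_t\langle R\rangle_t\bigr) = -\tfrac12\,\mathrm{Cov}_t(g,R)$, and then uses the FKG/correlation inequality for log-concave measures together with the structure of $R$ to fix the sign. Alternatively, and more cleanly, one invokes the Helffer--Sj\"ostrand formula: $\mathrm{Cov}_{\mathcal H}(g,g) = \int \diff\mu_{\mathcal H}(x)\,\nabla g(x)\cdot\bigl(\mathcal L_{\mathcal H}\bigr)^{-1}\nabla g(x)$ where $\mathcal L_{\mathcal H} = -\Delta + \mathcal H''(x) + (\text{lower order})$ acting on vector fields, and the operator bound $\mathcal H'' \geq C^{-1}$ gives $(\mathcal L_{\mathcal H})^{-1} \leq (\mathcal L_{\text{free}})^{-1}$ in the appropriate sense — but for $|x_i|^n$ this needs care since $g$ is not smooth, so I would approximate $|x_i|^n$ by smooth monotone functions of $x_i^2$ and pass to the limit.

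\medskip

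\textbf{Proof of the exponential inequality.}
For $g(x) = \eto^{(v,x)}$ the cleanest route is a direct convexity argument on the log-partition function. Define $\psi(t) \coloneqq \log \int \diff x\, \eto^{-\mathcal H_t(x) + (v,x)} - \log\int \diff x\,\eto^{-\mathcal H_t(x)}$, i.e. the cumulant generating function of the tilted measure, so that $\eto^{\psi(t)} = \int \diff\mu_{\mathcal H_t}(x)\eto^{(v,x)}$. Differentiating in $t$ gives $\psi'(t) = -\langle R\rangle_{t,v} + \langle R\rangle_t$ where $\langle\cdot\rangle_{t,v}$ is the $v$-tilted Gibbs average and $R$ as above; one then shows $\psi'(t)\leq 0$ by writing this difference as an integral over the tilt parameter of a variance term, $-\int_0^1 \mathrm{Cov}_{t,sv}((v,x),R)\,ds$, hmm — more directly, I would note that tilting by $(v,x)$ is just a shift of the mean, so $\psi(t)$ equals $\tfrac12$ times a quadratic-type quantity controlled by the inverse Hessian, and the bound $\mathcal H_t''\geq C^{-1}$ makes it monotone decreasing in $t$; concretely one uses $\psi'(t) = -\int_0^1\!\!ds\,\tfrac{d}{du}\big|_{u=sv}\!\langle R\rangle_{t,u}\cdot v$ and the Brascamp--Lieb-type second-moment bound already established. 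It is cleaner to prove the exponential inequality \emph{first}, for smooth bounded approximations, and then deduce the moment inequality by integrating against a measure in $v$ — but since $|x_i|^n$ is one-dimensional, the single-crossing argument above is in fact the most transparent.

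\medskip

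\textbf{Main obstacle.}
The main technical difficulty is the lack of smoothness and boundedness of the observables $|x_i|^n$ and $\eto^{(v,x)}$, which prevents a naive application of the Helffer--Sj\"ostrand integration-by-parts formula; one must work with smooth, compactly supported (or at least integrable) approximations, verify uniform integrability along the whole interpolation family $\{\mathcal H_t\}_{t\in[0,1]}$ (here the uniform lower bound $\mathcal H_t''\geq C^{-1}$ is exactly what guarantees Gaussian tails uniformly in $t$), and then remove the regularization by dominated convergence. A secondary subtlety is justifying that $\diff\mu_{\mathcal H_t}$ is a well-defined probability measure for all $t$, i.e. that $\eto^{-\mathcal H_t/2}$ is integrable — again immediate from $\mathcal H_t(x)\geq \tfrac12(x,C^{-1}x) + \mathcal H_t(0)$, which follows from convexity of $\mathcal H_t - \tfrac12(\cdot,C^{-1}\cdot)$ combined with the $x\mapsto -x$ symmetry forcing a critical point at the origin. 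Given the paper only needs the two displayed inequalities as black-box input, I would in fact cite \cite{brascamp-lieb} for the hard analytic core and restrict the exposition here to the reduction showing that the hypothesis $\mathcal H''\geq C^{-1}$ in the paper's normalization matches the hypothesis in that reference.
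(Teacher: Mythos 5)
The paper does not prove this theorem at all: it is stated as a quoted result and attributed to \cite{brascamp-lieb}, with only the subsequent remark (the H\"older/Cauchy--Schwarz consequence \eqref{eq:preliminary:brascamplieb2}) worked out. Your proposal, which ends by recommending exactly that citation, is therefore consistent with the paper's treatment, and the proof sketch you give on top of it is essentially the standard one and sound in outline: the interpolation $\mathcal H_t$ preserves the Hessian bound, the exponential inequality follows from the variance form of Brascamp--Lieb applied to the tilted measures together with the vanishing of the first-order term by the $x\mapsto -x$ symmetry, and the moment inequality reduces to a one-dimensional single-crossing comparison. Two points deserve flagging. First, the single-crossing step is not self-contained as written: to know that the marginal of $x_i$ is a Gaussian of variance $C_{ii}$ times an \emph{even log-concave} factor you need the marginal factorization theorem of Brascamp--Lieb (their Theorem 3.3, resting on Pr\'ekopa--Leindler), which is the genuinely hard analytic core — so the ``reduction'' you propose to expose still leans on the reference for more than bookkeeping. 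Second, watch the normalization: the Gibbs density is $\eto^{-\mathcal H/2}$ while the hypothesis is $\mathcal H''\geq C^{-1}$ and the comparison measure $\diff\mu_C$ carries $\eto^{-\frac12(x,C^{-1}x)}$; tracking the factor of $2$ consistently (as in the paper's own application in region $I^1$, where $\mathcal H$ contains $x^tC^{-1}x$ without the $\frac12$) is exactly the kind of check your final paragraph promises, and it is worth actually carrying out, since a literal reading of the theorem's hypotheses and the definition \eqref{eq:strategy:normalizedGaussianmeasure} do not match without it.
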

\begin{rem}
A direct consequence of Brascamp-Lieb inequality is the following estimate,
which holds under the same assumptions as above:
\begin{align}
\label{eq:preliminary:brascamplieb2}
\begin{split}
\int \diff  \mu_\mathcal{H}(x)\prod_{i}|x_i|^{n_{i}} \eto^{(v,x)} & =
\int\diff\mu_{\mathcal{H}_v}(x)\prod_{i}|x_i|^{n_{i}} \int\diff\mu_{\mathcal{H}}(x) \eto^{(v,x)} \\
&\leq \prod_i\left[\int\diff\mu_{\mathcal{H}_v}(x)\prod_{i}|x_i|^n  \right]^{n_i/n}\int\diff\mu_{\mathcal{H}}(x) \eto^{(v,x)}\\
&\leq\sqrt{(2n-1)!!}\prod_{i}  C_{ii}^{n_{i}/2}
\eto^{\frac{1}{2}(v,Cv)} , 
\quad \forall n_{i} \geq  0, 
\end{split}
\end{align}  
where $n=\sum_i n_i$ and we changed in the first line the measure to $\mathcal{H}_v= \mathcal{H}-(v,\cdot)$ with $\mathcal{H}_v''=\mathcal{H}''$.
In the second line we applied a generalized H\"older estimate. In the last line, the Gaussian integrals are computed exactly after applying Brascamp Lieb and Cauchy Schwarz.
\end{rem}

\subsection{Finite volume estimates}
\label{subsec:finitev}
In the following we prove Theorem \ref{theo:strategy:reducedresult} in finite volume
by partitioning the domain of integration and estimating the functional integrals in each region
separately. 

\subsubsection{Inserting absolute values}
To control the infinite volume limit, we will need to estimate integrals of the
form \eqref{eq:strategy:startingpoint} with
$\mathcal{O} (a,b) =\mathcal{O}_{m,n} (a,b) = 
\prod_{k=1}^p|a_{j_k}|^{m_k} \prod_{l=1}^q |b_{j_l}|^{n_l}$, 
with $p,q \in \mathbb{N}$, $m_k,n_l\in\mathbb{N}$ and $j_k,j_l\in\Lambda$ for all 
$k\leq p, l\leq q$ and  $m \coloneqq \sum_{k=1}^p m_k$ and $n \coloneqq \sum_{l=1}^q n_l$.
Following  \cite{disertori-pinson-spencer}, we put the absolute values 
inside the integral \eqref{eq:strategy:startingpoint} and replace the complex covariance $B$ 
\eqref{eq:strategy:defB} by the real one $C$  \eqref{eq:preliminary:defC}.
The next two lemmas  are the analogs in $d=2$ of \cite[Lemma 3 and 4]{disertori-pinson-spencer}. 

\begin{lemma}
\label{lem:preliminary:dmuB}
The absolute value of the complex measure $\diff  \mu_B$ is bounded by 
\begin{align*}
|\diff  \mu_B(a, b)| \leq \eto^{K\frac{ |\Lambda|}{W^2}} \diff  \mu_C(a, b).
\end{align*}
\end{lemma}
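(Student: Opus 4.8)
The plan is to compare the two normalized Gaussian measures $\diff\mu_B$ and $\diff\mu_C$ directly, using the explicit form \eqref{eq:strategy:normalizedGaussianmeasure}. Writing $B^{-1} = C^{-1} + i\sigma_E m_i^2$, the quadratic form in the exponent of $\diff\mu_B$ is $-\tfrac12\big((a,C^{-1}a)+(b,C^{-1}b)\big) - \tfrac{i\sigma_E m_i^2}{2}\big(|a|^2+|b|^2\big)$, so the absolute value of the density kills the imaginary part of the exponent entirely, leaving exactly the real density of $\diff\mu_C$ but with the \emph{wrong normalization}: $\big|\det[C^{-1}/(2\pi)]^{1/2}\cdot(\text{complex correction})\big|$ versus $\det[C^{-1}/(2\pi)]^{1/2}$ — more precisely the ratio of normalizations is $\big|\det[B^{-1}/C^{-1}]\big|^{1/2}$ per field, squared for the two species $a,b$, i.e. $\big|\det[B^{-1}/C^{-1}]\big|$. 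Hence
\[
|\diff\mu_B(a,b)| = \left|\det\!\left[\frac{B^{-1}}{C^{-1}}\right]\right| \diff\mu_C(a,b) = \left|\det\!\left[1 + i\sigma_E m_i^2 C\right]\right|\diff\mu_C(a,b),
\]
and everything reduces to bounding this single determinant of a matrix independent of $a,b$.

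First I would apply the determinant estimate \eqref{eq:preliminary:det1+A} with $A = i\sigma_E m_i^2 C$. Since $A$ is anti-self-adjoint times a positive real, $\Tr A = i\sigma_E m_i^2 \Tr C$ is purely imaginary, so $|\eto^{\Tr A}| = 1$, and the bound collapses to $\big|\det[1+A]\big| \le \eto^{\frac12 \Tr A^*A} = \eto^{\frac12 m_i^4 \Tr C^2}$. So the whole lemma comes down to showing $m_i^4 \Tr C^2 \le K|\Lambda|/W^2$. Next I would compute $\Tr C^2 = \sum_{i,j} C_{ij}^2$ using translation invariance, $= |\Lambda|\sum_j C_{0j}^2$, and estimate $\sum_j C_{0j}^2$ via the decay bound \eqref{eq:preliminary:expdecayC}: splitting into $|i-j|\le W/m_r$ and $|i-j|>W/m_r$. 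In the short-range regime the sum of $\big(\tfrac{K}{W^2}\ln\tfrac{W}{m_r(1+r)}\big)^2$ over the disk of radius $W/m_r$ contributes $O\!\big(\tfrac{1}{W^4}\cdot\tfrac{W^2}{m_r^2}\big) = O(1/(m_r^2 W^2))$ up to logarithmic factors that are integrable in two dimensions; in the long-range regime the exponential decay makes the sum of $\tfrac{K}{r W^3}\eto^{-2m_r r/W}$ contribute $O(1/(m_r W^2))$ or smaller. Since $m_r^2 = 2(1-E^2/4)$ and $m_i^2 = |E|\sqrt{1-E^2/4}$ are both bounded above and below on the interval $\mathcal{I}$ (this is exactly why $E=0$ and $|E|$ near $2$ are excluded, keeping $m_r, m_i$ away from $0$), the factor $m_i^4/m_r^2$ is $O(1)$, giving $m_i^4\Tr C^2 \le K|\Lambda|/W^2$ as required.

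Alternatively, and perhaps more cleanly, I could bound $\Tr C^2$ spectrally: $\Tr C^2 = \sum_k \big(2W^2\sum_l(1-\cos k_l)+m_r^2\big)^{-2} \le C_{00}\,\Tr C$, and then estimate $\Tr C = \sum_j C_{jj} = |\Lambda|C_{00}$ together with $C_{00} = O\!\big(\tfrac{\ln W}{W^2}\big)$ from \eqref{eq:preliminary:expdecayC}; this gives $\Tr C^2 = O\!\big(|\Lambda|(\ln W)^2/W^4\big)$, so $m_i^4\Tr C^2 = O\!\big(|\Lambda|(\ln W)^2/W^4\big)$, which is even stronger than $K|\Lambda|/W^2$ (the logarithm and one power of $W^{-2}$ are to spare). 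Either route makes the constant $K$ manifestly independent of $\Lambda$ (the $|\Lambda|$ is explicit) and of $W$ once $W\ge W_0$.

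The main obstacle is purely the two-dimensional infrared/short-distance behavior of $C$: unlike $d=3$ where \eqref{eq:preliminary:expdecayC3d} gives clean $1/W^2$ decay with no logarithm, here the $\ln(W/(m_r(1+|i-j|)))$ in \eqref{eq:preliminary:expdecayC} must be handled, and one must be careful that $\sum_j C_{0j}^2$ genuinely gains the full factor $1/W^2$ (not merely $1/W$) — this is what the summation over the disk of radius $W/m_r$ of a quantity of size $(\ln W/W^2)^2$ delivers, since the disk has area $O(W^2/m_r^2)$. The only other point requiring care is the uniform lower bound on $m_r$ (hence upper bound on $m_i^4/m_r^2$), which is guaranteed by the restriction $|E|\le 1.8$ in the definition \eqref{eq:model:interval} of $\mathcal{I}$. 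Everything else is bookkeeping.
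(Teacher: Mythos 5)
Your main argument is correct and is essentially the paper's proof verbatim: write $|\diff\mu_B|=|\det[1+i\sigma_E m_i^2C]|\,\diff\mu_C$, apply \eqref{eq:preliminary:det1+A} with $A=i\sigma_Em_i^2C$ so that $\Tr A$ is purely imaginary, and bound $\Tr A^*A=m_i^4\Tr C^*C\leq K|\Lambda|/W^2$ by splitting the sum over $|i-j|$ at $W/m_r$ using \eqref{eq:preliminary:expdecayC}; your accounting of the two-species normalization and of the role of $\mathcal{I}$ in keeping $m_r$ bounded below is also right.

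One warning: your ``alternative, cleaner'' spectral route is wrong. The inequality $\Tr C^2\leq C_{00}\,\Tr C$ does not hold; what is true is $\Tr C^2=\sum_k\lambda_k^2\leq\lambda_{\max}\Tr C$ with $\lambda_{\max}=m_r^{-2}=O(1)$ (the zero mode), not $\lambda_{\max}\leq C_{00}=O(\ln W/W^2)$. Consequently the claimed bound $\Tr C^2=O(|\Lambda|(\ln W)^2/W^4)$ is false by a factor of order $W^2$: your own first computation shows $\sum_jC_{0j}^2\sim m_r^{-2}W^{-2}$, since the disk of radius $W/m_r$ has area $W^2/m_r^2$ and each term there is only $(\ln W/W^2)^2$. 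So $\Tr C^2$ is genuinely of order $|\Lambda|/W^2$, there is no power of $W^{-2}$ ``to spare'', and the lemma's bound $\eto^{K|\Lambda|/W^2}$ is sharp in order of magnitude --- which is precisely why the later choice $|\Lambda|=W^2(\ln W)^\alpha$ costs a factor $\eto^{K(\ln W)^\alpha}$. Stick with the first route.
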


\begin{proof}
The measure $\diff  \mu_B(a,b)$ can be written as
\begin{align*}
|\diff  \mu_B(a, b)| 
= \left| \tfrac{\det B^{-1}}{\det C^{-1}} \right| \diff \mu_C(a, b) 
= |\det [1 + i\sigma_E m_i^2 C]| d\mu_C(a, b).
\end{align*}
Applying \eqref{eq:preliminary:det1+A} with $A = i \sigma_E m_i^2 C$,  $\Tr A$ is
purely imaginary and,  using \eqref{eq:preliminary:expdecayC},   
\begin{align*}
\Tr A^* A = m_i^4 \Tr C^* C 
\leq  \sum_{i,j\in\Lambda}\tfrac{K}{W^3 |i-j|}\eto^{-\frac{m_r}{W}|i-j|}
+ \hspace{-0,2cm} \sum_{|i-j|\leq \frac{W}{m_{r}}}\tfrac{K}{W^{4}} 
\ln^{2}\left( \tfrac{W}{m_{r} (1+|i-j|)} \right)
\leq \sum_{i\in\Lambda}\tfrac{K}{W^2}.
\end{align*}
\end{proof}

\begin{lemma}
\label{lem:preliminary:determinant1+DB}
The determinant in the remainder \eqref{eq:strategy:remainder} can be bounded by
\begin{align*}
|\det[1 + D B]| \leq \eto^{K\frac{|\Lambda|}{W^2}} \left| \eto^{\Tr D B} \right|.
\end{align*}
\end{lemma}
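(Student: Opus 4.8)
\textbf{Proof proposal for Lemma \ref{lem:preliminary:determinant1+DB}.}
The plan is to apply inequality \eqref{eq:preliminary:det1+A} with $A = DB$, which immediately gives
\[
|\det[1 + DB]| \leq |\eto^{\Tr DB}|\, \eto^{\frac{1}{2}\Tr (DB)^* DB},
\]
so the whole task reduces to showing that $\Tr (DB)^* DB = \Tr B^* D^* D B \leq K|\Lambda|/W^2$, since the $\tfrac12$ is harmless. Here the key structural point, not yet fully exploited above, is that $D$ is a \emph{diagonal} matrix, so $\Tr B^* D^* D B = \sum_{i,j\in\Lambda} |B_{ij}|^2\, |D_j|^2$. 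Thus I need (i) a pointwise bound $|D_j(a,b)| \leq K$ uniform in $a,b\in\mathbb{R}$ and in $j$, and (ii) the bound $\sum_{i} |B_{ij}|^2 \leq K/W^2$ for each fixed $j$.

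For step (i): from the integral representation of $D_j$ in \eqref{eq:strategy:defD}, $|\bar{\mathcal{E}} - t a_j| \geq |\im(\bar{\mathcal{E}})| = \mathcal{E}_i = \sqrt{1 - E^2/4} > 0$ for all real $a_j$ and $t\in[0,1]$ (since the translation was chosen precisely to avoid the pole), and likewise $|\bar{\mathcal{E}} - i t b_j| \geq |\re(\bar{\mathcal{E}})| = |\mathcal{E}_r| = |E|/2 \geq \eta/2 > 0$ on the interval $\mathcal{I}$. However, these bounds only control the \emph{denominators}; the numerators $a_j$ and $b_j$ are unbounded, so the naive estimate on \eqref{eq:strategy:defD} does not give a uniform bound. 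The better route is to use instead the middle expression $D_j = \mathcal{E}^2 - \tfrac{1}{(\bar{\mathcal{E}} - a_j)(\bar{\mathcal{E}} - i b_j)}$: the second term is bounded in absolute value by $(\mathcal{E}_i\, |\mathcal{E}_r|)^{-1} \leq (\mathcal{E}_i \eta/2)^{-1}$, and $|\mathcal{E}^2| = |\mathcal{E}|^2 = 1$, so $|D_j| \leq 1 + 2/(\eta\, \mathcal{E}_i) =: K$, uniformly in $a,b,j$ and in $E\in\mathcal{I}$ (note $\mathcal{E}_i \geq \sqrt{1-1.8^2/4} > 0$ is bounded below on $\mathcal{I}$).

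For step (ii): by the remark following \eqref{eq:preliminary:expdecayC}, $|B_{ij}|$ has the same decay as $C_{ij}$, so $\sum_i |B_{ij}|^2 \leq \sum_i |C_{ij}|^2$ up to a constant, and this is exactly the sum already estimated in the proof of Lemma \ref{lem:preliminary:dmuB} (there it appeared as $m_i^4 \Tr C^* C$, i.e. $\sum_{i,j} |C_{ij}|^2$ up to the bounded factor $m_i^4$): splitting into $|i-j| \leq W/m_r$ and $|i-j| > W/m_r$, the first region contributes $\sum_{|i-j|\leq W/m_r} K W^{-4}\ln^2(\cdots) \leq K/W^2$ per site (the $\ln^2$ against $\sim W^2/m_r^2$ lattice points divided by $W^4$), and the tail contributes $\sum_{|i-j|>W/m_r} K W^{-3}|i-j|^{-1}\eto^{-m_r|i-j|/W} \leq K/W^2$ per site. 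Combining, $\Tr B^* D^* DB \leq K^2 \sum_j (K/W^2) = K|\Lambda|/W^2$, and plugging into the display above (with $K$ renamed to absorb the factor $1/2$ and the constants) finishes the proof. The only real subtlety is the choice to bound $D_j$ via the closed form rather than the cubic integral representation; everything else is a direct reuse of the covariance estimates already established.
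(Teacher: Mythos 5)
Your proposal is correct and follows essentially the same route as the paper: apply \eqref{eq:preliminary:det1+A} with $A=DB$, bound $D$ by its supremum norm using the closed form $D_j=\mathcal{E}^2-\tfrac{1}{(\bar{\mathcal{E}}-a_j)(\bar{\mathcal{E}}-ib_j)}$ (which is where the restriction $|E|\geq\eta$ enters), and bound $\Tr B^*B$ by $K|\Lambda|W^{-2}$ exactly as for $C$ in Lemma \ref{lem:preliminary:dmuB}. The paper's proof is just a terser version of the same argument, so nothing further is needed.
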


\begin{proof}
Applying again (\ref{eq:preliminary:det1+A}), we need to bound
$\Tr (DB)^*(DB) = \sum_{i,j\in\Lambda} |D_j|^2 |B_{ij}|^2$.
We estimate $D$ by its supremum norm, $\sup_{j\in\Lambda}\sup_{a_j,b_j\in\mathbb{R}}|D_j(a_j, b_j)| \leq K$.
Finally we bound $\Tr B^*B$ by $K|\Lambda|W^{-2}$ as we did above for $C$.
\end{proof}

Applying the two lemmas in \eqref{eq:strategy:startingpoint}, we have
\begin{align*}
\left|\int\diff\mu_B(a,b)\mathcal{R}(a,b)\mathcal{O}_{m,n}(a,b)\right|
\leq \eto^{ K\frac{|\Lambda|}{W^2}}F^{m,n} ,
\end{align*}
where $F^{m,n} \coloneqq \int \diff\mu_C(a,b) \left|\eto^{\Tr DB}\right| \
\left|\eto^{\mathcal{V}(a,b)}\right| \ \left|\mathcal{O}_{m,n}(a,b)\right|$.

\subsubsection{Partition of the integration domain}
Guided by the saddle point picture, we partition, as in  \cite{disertori-pinson-spencer},  
the domain of integration into regions near and far from the saddle points: 
$1 = \sum_{k=1}^5 \chi[I^k]$ with
\begin{equation*}
\begin{alignedat}{2}
I^1 & \coloneqq \{ a, b : |a_j|, |b_j - b_{j'}| \leq \delta \, \forall  j, j' \in \Lambda 
&&\text{ and } |b_0| \leq 2 \delta \}, \\
I^2 & \coloneqq \{ a, b : |a_j|, |b_j - b_{j'}| \leq \delta \, \forall  j, j' \in \Lambda 
&&\text{ and } |b_0 - 2 \mathcal{E}_i| \leq 2 \delta \}, \\
I^3 & \coloneqq \{ a, b : b_j \in \mathbb{R} \, \forall j \in \Lambda 
&&\text{ and } \exists  j_0 \in \Lambda: |a_{j_0}| > \delta \}, \\
I^4 & \coloneqq \{ a, b : |a_j| \leq \delta \, \forall j \in \Lambda 
&&\text{ and } \exists  j_0, j_0' \in \Lambda :  |b_{j_0} - b_{j_0'}| > \delta\}, \\
I^5 & \coloneqq \{ a, b : |a_j|, |b_j - b_{j'}| \leq \delta \, \forall  j, j' \in \Lambda 
&&\text{ and } |b_0|,| b_0 - 2 \mathcal{E}_i| > 2 \delta \},
\end{alignedat}
\end{equation*}
for $\delta =\delta (W)> 0 $ small to be fixed later.
Hence, we can write $F^{m,n}= \sum_{s=1}^{5}F^{m,n}_{s}$, where
\begin{align*}
\gls{Fmns} \coloneqq \int \diff\mu_C(a,b) \ \chi[\gls{Is}] \ \eto^{\re \Tr DB} \
\left|\eto^{\mathcal{V}(a,b)}\right| \
|\mathcal{O}_{m,n}(a,b)|.
\end{align*}
In the ``small field'' regions $I^1$ and $I^2$,  all $a$ variables are near the saddle,  
and the $b$ variables are all near the first saddle at $0$ in $I^1$, 
or near the second one at $2 \mathcal{E}_i$ in $I^2$.
The main contribution to $F^{m,n}$ comes from the  region $I^{1}$, while $I^{2}$
is suppressed by a small factor from the determinant.
In the ``large field'' regions $I^s$, $s=3,4,5$, at least one variable is far away from
the saddle points. 
Their contribution is exponentially suppressed by the 
corresponding probabilities $\int  \diff\mu_{C}\ \chi[I^{s}]$.   

The following lemma gives the precise estimates on $F_s^{m,n}$.
Since we proceed analog to \cite[Section 5]{disertori-pinson-spencer}, only the main ideas and the crucial steps
are given in the proof.

\begin{lemma}
\label{lem:preliminary:estimates}
Let  $\delta=W^{-\nu}$, for some $0<\nu<1$ and $W\gg 1$. Then for any $|\Lambda|$  we have
\begin{align}
F_1^{m,n} &\leq  K^{m+n+1} \left(\tfrac{\ln W}{W^2}\right)^{(m+n)/2} \sqrt{(2m)!!(2n)!!}\ 
\eto^{K|\Lambda|W^{-3}(\ln W)^{3/2}},\nonumber\\
\label{eq:preliminary:F2}
F_2^{m,n} &\leq K^{ n+m+1} \eto^{-c|\Lambda|W^{-2}\ln W} ,
\end{align}
where in the second line $c>0$ is independent of $W$ and $|\Lambda|$.
Moreover, there exists $W_{0}(\nu)\gg 1$ such that for any $W\geq W_{0}(\nu)$ and 
 $W^{3\nu }\leq |\Lambda|\leq (C_{jj})^{-2}\delta^{2}\leq K W^{4-2\nu}(\ln W)^{-2}$, we have
\begin{align*}
F_s^{m,n} &\leq K^{n+m+1}W^{n} \prod_{k=1}^p \sqrt{m_k!}\prod_{l=1}^q \sqrt{n_l!}
\ \eto^{-K\delta^2W^2(\ln W)^{-1}} \qquad\text {for }s=3,4,\\
F_5^{m,n} &\leq K^{n+m+1}W^{n} \prod_{k=1}^p \sqrt{m_k!}\prod_{l=1}^q \sqrt{n_l!}
\eto^{-K\delta^2|\Lambda|},
\end{align*}
\end{lemma}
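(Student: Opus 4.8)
The plan is to estimate each $F^{m,n}_s$ by a combination of the Brascamp--Lieb machinery (near the dominant saddle) and crude large-deviation bounds for the Gaussian measure $\diff\mu_C$ (far from the saddle), exploiting throughout that on the relevant region the real part $\re\Tr DB$ and the cubic potential $\re\mathcal{V}$ are controlled by the $d=2$ covariance decay \eqref{eq:preliminary:expdecayC}. The two structural inputs I would isolate first are: (i) a pointwise bound on $\re D_j(a,b)$ and $\re \mathcal{V}_j(a,b)$ valid on each $I^s$, obtained by Taylor-expanding the expressions in \eqref{eq:strategy:defD}--\eqref{eq:strategy:defV} — on the small-field regions $I^1,I^2$ one gets $\re D_j=O(\delta^2)$ and $\re\mathcal{V}_j=O(\delta^3)$, while on the large-field regions one keeps a negative quadratic term in $a_j$ (resp. $b_j$) with a small positive coefficient; and (ii) the fact that $\Tr(\cdot)$ against $B$ or $C$ of a diagonal matrix with entries bounded by $c/W^2$ times $\ln(W/\dots)$-type weights produces a factor $|\Lambda|/W^2$ up to the logarithmic correction inherent to $d=2$, exactly as in Lemmas \ref{lem:preliminary:dmuB}--\ref{lem:preliminary:determinant1+DB}. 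The $(\ln W)^{3/2}$ in the $F_1$ bound and the $W^{-2}\ln W$ rate in the $F_2$ bound are the shadows of the $\ln(W/m_r(1+|i-j|))$ factor in \eqref{eq:preliminary:expdecayC}: summing its square over the ball of radius $W/m_r$ yields $|\Lambda| W^{-3}(\ln W)^{3/2}$-type contributions rather than the cleaner $|\Lambda|W^{-3}$ of $d=3$.

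For $F_1^{m,n}$: on $I^1$ all $a_j,b_j$ are within $\delta$ of the dominant saddle, so $\re\Tr DB\le K|\Lambda|W^{-3}(\ln W)^{3/2}$ (insert the Taylor bound $|D_j|\le K\delta^2 = KW^{-2\nu}$ and sum against $|B_{ij}|$, using the $d=2$ decay and choosing $\nu$ so that $W^{-2\nu}\cdot|\Lambda|W^{-2}$ is dominated by the stated exponent after the log-corrected summation), and $|\eto^{\mathcal V}|\le \eto^{K|\Lambda|\delta^3}$ which is absorbed similarly. Then I drop $\chi[I^1]\le 1$ and are left with $\int\diff\mu_C |\mathcal O_{m,n}|$ times these exponential factors; applying the Brascamp--Lieb consequence \eqref{eq:preliminary:brascamplieb2} with $v=0$ gives $\prod_i\sqrt{(2n_i-1)!!}\,C_{ii}^{n_i/2}\le K^{m+n}(C_{00})^{(m+n)/2}\sqrt{(2m)!!(2n)!!}$, and $C_{00}\le KW^{-2}\ln W$ by \eqref{eq:preliminary:expdecayC} at $i=j$. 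This yields the first display. For $F_2^{m,n}$ the only new point is the \emph{gain}: near the second saddle at $b=2\mathcal E_i$ the determinant factor $\det[1+DB]$, equivalently $\re\Tr DB$ evaluated there, picks up a genuinely negative contribution of order $-c|\Lambda|W^{-2}\ln W$ (this is the weakly-suppressed second well of the introduction); one shifts $b\mapsto b+2\mathcal E_i$, notes the shifted measure differs from $\diff\mu_C$ by a bounded Radon--Nikodym factor on the small-field region, and reads off the exponential suppression, the observable producing only a harmless $K^{m+n+1}$ after Brascamp--Lieb.

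For $F_s^{m,n}$, $s=3,4,5$: here the integrand is bounded by $\eto^{\re\Tr DB+\re\mathcal V}\,|\mathcal O_{m,n}|$, and on each large-field region $\re\Tr DB+\re\mathcal V \le -\kappa\,(\text{sum of the offending squared fields})/W^2 + K|\Lambda|W^{-3}(\ln W)^{3/2}$ for a small $\kappa>0$ coming from the concavity of the cubic remainders past the saddle. One then uses $\chi[I^s]$ to restrict to $\{|a_{j_0}|>\delta\}$ (or the $b$-analogue), and bounds the probability $\int\diff\mu_C\,\chi[I^s]\le |\Lambda|\,\eto^{-K\delta^2/C_{j_0j_0}}=|\Lambda|\,\eto^{-K\delta^2 W^2(\ln W)^{-1}}$ by a union bound over the site $j_0$ together with the Gaussian tail (for $I^5$, since the constraint is on $b_0$ relative to a \emph{fixed} pair of saddles rather than a single site, the variance that enters is $\sum_{ij}C_{ij}\sim |\Lambda|/m_r^2$, giving instead $\eto^{-K\delta^2|\Lambda|}$). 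The observable is handled by Cauchy--Schwarz: split $\diff\mu_C=$ measure $\times$ measure, estimate $\int\diff\mu_C|\mathcal O_{m,n}|^2$ by Wick/Brascamp--Lieb to get $\prod_k m_k!\prod_l n_l!$ times $C_{jj}^{m+n}$, and the remaining square-root of the tail probability supplies the stated $\eto^{-K\delta^2W^2(\ln W)^{-1}}$; the extra $W^n$ is the crude bound $C_{ii}^{-1/2}\le KW$ used when trading each $|b_l|$ against its natural scale. The constraint $W^{3\nu}\le|\Lambda|\le (C_{jj})^{-2}\delta^2$ is exactly what makes the positive error $K|\Lambda|W^{-3}(\ln W)^{3/2}$ lose to the negative rate $K\delta^2 W^2(\ln W)^{-1}$ on $I^{3,4}$ and to $K\delta^2|\Lambda|$ on $I^5$.

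The main obstacle is the small-distance logarithm in \eqref{eq:preliminary:expdecayC}, absent in $d=3$: every place where $d=3$ produces a clean $|\Lambda|/W^2$ or $|\Lambda|/W^3$, I instead have to carry out the log-weighted summation $\sum_{|i-j|\le W/m_r}\ln^k(W/m_r(1+|i-j|))\sim (W/m_r)^2(\ln W)^{k}/\,\text{const}$ and verify that the resulting $(\ln W)^{3/2}$ (and powers thereof from the determinant and the cubic term) really are absorbed into the claimed $\eto^{K|\Lambda|W^{-3}(\ln W)^{3/2}}$ and do not degrade the negative rates on $I^{3,4,5}$; this is precisely the "equilibrium between conflicting effects" flagged in the Strategy paragraph, and it is what forces $\nu$ (equivalently $\delta$) into a narrow window and ultimately produces the $(\ln W)^\alpha$-sized blocks of the cluster expansion in Section \ref{sec:cluster}.
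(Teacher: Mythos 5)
Your overall architecture (region-by-region treatment, Brascamp--Lieb near the dominant saddle, determinant suppression at the second saddle, Gaussian tails in the large-field regions) matches the paper's, but the treatment of $I^1$ contains a gap that breaks the first estimate. You propose to bound $\re\Tr DB$ and $\re\mathcal{V}$ by their \emph{suprema} over $I^1$, using $|D_j|\leq K\delta^2$ and $|\mathcal{V}_j|\leq K\delta^3$. First, $D_j$ is genuinely \emph{linear} in the fields near the saddle (it vanishes at $a_j=b_j=0$ but its gradient does not), so the correct sup bound is $|D_j|\leq K\delta$. More importantly, with $\delta=W^{-\nu}$ the resulting prefactors are $\eto^{K|\Lambda|\delta W^{-2}\ln W}$ and $\eto^{K|\Lambda|\delta^{3}}=\eto^{K|\Lambda|W^{-3\nu}}$, which are far larger than the claimed $\eto^{K|\Lambda|W^{-3}(\ln W)^{3/2}}$ for any $\nu<1$; e.g.\ at the reference volume $|\Lambda|=W^2(\ln W)^{\alpha}$ the claimed exponent tends to zero while $|\Lambda|\delta^3=W^{2-3\nu}(\ln W)^{\alpha}$ diverges for $\nu<2/3$. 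You even concede you must ``choose $\nu$'' to make the exponents match, but the lemma is asserted for all $\nu\in(0,1)$ and, crucially, the $W^{-3}(\ln W)^{3/2}$ rate is what the cluster expansion needs. The paper avoids this by \emph{not} taking suprema: it keeps $|D_j|\leq K(|a_j|+|b_j|)$ and $\re V(x)\leq K|x|^3$ inside the integral, absorbs them into the Hamiltonian $\mathcal{H}$, and controls the normalization $Z(\mathcal{H})/Z_0$ via \eqref{eq:preliminary:normalizationbl}; there the \emph{typical} field size $C_{jj}^{1/2}\sim W^{-1}(\ln W)^{1/2}$ (not $\delta$) enters, yielding $|\Lambda|C_{jj}^{3/2}\sim|\Lambda|W^{-3}(\ln W)^{3/2}$.

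Two further points. For $I^5$ your mechanism differs from the paper's and is doubtful: to dominate $\re V(ib_j)$ for large $b_j$ you must first insert a quadratic envelope with $f_b=1-W^{-2}$, and under the resulting measure the zero mode of $b$ has variance $W^2/(|\Lambda|m_r^2)$, so a tail bound on the field average gives only $\eto^{-K\delta^2|\Lambda|/W^2}$, not $\eto^{-K\delta^2|\Lambda|}$. The paper instead gets the decay pointwise from the potential: away from \emph{both} wells one has the strict gap $\re V(ib_j)\leq\tfrac{m_r^2}{2}f_bb_j^2+\mathcal{O}(1-f_b)-c\delta^2$ per site, which sums to $-c|\Lambda|\delta^2$. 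Finally, for $F_2$ your ``bounded Radon--Nikodym factor'' after shifting $b\mapsto b+2\mathcal{E}_i$ is not bounded as stated: the Gaussian cost of the shift is $\eto^{-2\mathcal{E}_i^2m_r^2|\Lambda|}$ and is compensated only by an equal and opposite contribution from $\re V(ib_j)$ near the second well (this is the linear term in \eqref{eq:preliminary:region2a}); the paper organizes this so the residual integral is exactly $1$ and the entire suppression comes from $\re D_jB_{jj}\leq-2cW^{-2}\ln W$, which is the mechanism you correctly identified.
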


\begin{rem}
In the following, we want to fix the volume of our cube $\Lambda$ to an appropriate finite size. The natural 
choice would be $W^2$. This would ensure the global prefactor $\eto^{K|\Lambda |W^{-2}}$
from Lemma \ref{lem:preliminary:dmuB} is bounded by a constant independent of $W$.
On the other hand the contribution
of the second saddle would be suppressed only
by some $W^{-c}$ (cf. \eqref{eq:preliminary:F2}) which is not enough to compensate various $W$ factors 
arising in the cluster expansion.
Extending the volume to $W^2 (\ln W)^\alpha$ for fixed $\alpha \in (0,1)$
reinforces the decay to $\eto^{-c(\ln W)^{1+\alpha}}$ which bounds an arbitrary factor $W^n$ for
 $\alpha >0$. The price to pay is a worse estimate on the global prefactor
 $\eto^{K|\Lambda |W^{-2}}\leq \eto^{K(\ln W)^{\alpha }}.$ For $\alpha<1$ this can be compensated by the
 observable, which is of order $O (W^{-2})$ after extracting the leading contribution (cf. \eqref{eq:model:derivative}).
\end{rem}

\begin{proof}
Following \cite{disertori-pinson-spencer},
we first perform some (region dependent) estimates on the exponential terms 
$\re \Tr DB+\re \mathcal{V} (a,b)$ and insert the results in the measure.
In region $I^1$ the resulting measure is no longer Gaussian, 
hence we apply a Brascamp-Lieb inequality. 
In the other regions the measure remains Gaussian. 
The decay comes from $\re \Tr DB$  in  $I^2$
and from a small probability argument in the large field regions.
New features of $d=2$ appear in the choice of the volume of the cube $|\Lambda|$ and
in the bounds of $B$ and $C$, for example we have $|B_{jj}|\leq K W^{-2} \ln W$.
\begin{description}[leftmargin=0cm]
\item [Region $\mathbf{I^1}$]
In the first region, all variables $a_j$ and $b_j$ are small and we bound
\begin{align}
\label{eq:preliminary:region1}
\re V(x) \leq K |x|^3 \quad \text{and} \quad |D_j(a,b)| \leq K (|a_j|+ |b_j|)
\quad \text{for }|x|,|a_j|,|b_j|< \delta.
\end{align}
Then $\re \Tr D B \leq \sum_{j\in\Lambda} |D_j| |B_{jj}|
\leq \sum_{j\in\Lambda} K (|a_j|+ |b_j|)W^{-2}\ln W$ and
\begin{align*}
F_1^{m,n} &\leq \int\diff \mu_C(a,b) \chi[I^1]
\eto^{K\sum_{j\in\Lambda} (|a_j|+|b_j|) W^{-2}\ln W + |a_j| ^3 + |b_j|^3 }
\prod_{k=1}^p|a_{j_k}|^{m_k} \prod_{l=1}^q |b_{j_l}|^{n_l}.
\end{align*}
We define the Hamiltonian of the Gibbs measure by
\begin{align*}
\mathcal{H}(x) \coloneqq x^t C^{-1} x - K \sum_{j\in\Lambda} |x_j| W ^{-2} \ln W + |x_j|^3
\end{align*}
and $Z(\mathcal{H}) \coloneqq\int\prod_{j\in\Lambda} \diff x_j \exp(-\mathcal{H}(x)/2)\chi[I^1]$.
Then we can write
\begin{align*}
F_1^{m,n} &\leq  \left(\tfrac{Z (\mathcal{H})}{Z_0} \right)^{2} \int\diff \mu_{\mathcal{H}}(a,b) 
\prod_{k=1}^p|a_{j_k}|^{m_k} \prod_{l=1}^q |b_{j_l}|^{n_l},
\end{align*}
where $Z_0 = \det[C^{-1}/2\pi]^{1/2}$. 
Repeating the proof of \cite[Lemma 5]{disertori-pinson-spencer} in $d=2$, 
\begin{align}
\label{eq:preliminary:normalizationbl}
Z(\mathcal{H}) \leq
\eto^{ \sum_{j} (C_{f})_{jj}^{3/2}+ ( C_{f})_{jj}^{1/2}W^{-2} } Z_{0}\leq 
 \eto^{K|\Lambda|W^{-3}(\ln W)^{3/2}} Z_0. 
\end{align}
where $C_f^{-1} \coloneqq C^{-1} - fm_r^2 \leq \mathcal{H}''$ for $f = \mathcal{O}(\delta)$,
and we used $\delta<1$. 
When $m> 0$, 
\begin{align*}
\prod_{k=1}^p|a_{j_k}|^{m_k} \leq \frac{1}{m} \sum_{k=1}^p m_k |a_{j_k}|^m.
\end{align*}
The same holds for $b$.  
Applying the Brascamp-Lieb inequality \eqref{eq:preliminary:brascamplieb} 
and a Cauchy-Schwarz estimate, we obtain a factor 
$\sqrt{(2m)!!} (C_f)_{j_kj_k}^{m/2} \leq \sqrt{(2m)!!} (KW^{-2}\ln W)^{m/2}$ for each  
$|a_{j_k}|^m$ and an analog factor for $|b_{j_k}|^n$.

\item[Region $\mathbf{I^2}$]
As in \cite{disertori-pinson-spencer}, we can bound the factors $a_j^{m_{j}}$ and $b_j^{n_{j}}$ 
by constants and the potential by
\begin{align}
\label{eq:preliminary:region2a}
\re V(a_j) \leq \tfrac{m_r^2}{2} f_a a_j^2, \qquad 
\re V(ib_j) \leq \tfrac{m_r^2}{2} f_b b_j^2 + (1- f_b)2\mathcal{E}_i m_r^2(b_j-\mathcal{E}_i), 
\end{align}
with $f_a = f_b = \mathcal{O}(\delta)$.
Analog to \cite[Lemma 6]{disertori-pinson-spencer} the trace can be estimated as
\begin{align}
\label{eq:preliminary:region2b}
\re D_j B_{jj} \leq -2c W^{-2}\ln W,
\end{align}
where $c>0$ is  independent of $W$ and $\Lambda$.
Combining these estimates and using Lemma \ref{lem:preliminary:normalization} in the second step,
we obtain
\begin{align*}
F_2^{m,n} 
&\leq K^{ m+n} \eto^{-2c|\Lambda|W^{-2}\ln W}\int \diff\mu_C(a,b)
\eto^{\frac{m_r^2}{2}\sum_{j\in\Lambda}\left(f_a a_j^2+ f_b b_j^2\right)}
\eto^{(1- f_b)2\mathcal{E}_i m_r^2(b_j-\mathcal{E}_i) }\\
&\leq K^{m+n} \eto^{-2(c-K(f_a+f_b))|\Lambda|W^{-2}\ln W} \tfrac{K}{\sqrt{(1-f_a)(1-f_b)}}
\int \diff\mu_{C_{f_b}}(b)\eto^{(1- f_b)2\mathcal{E}_i m_r^2(b_j-\mathcal{E}_i) }\\
&\leq K^{m+n+1}\eto^{-  2 (c-K\delta)|\Lambda|W^{-2}\ln W}
\leq K^{m+n+1}\eto^{-c|\Lambda|W^{-2}\ln W},
\end{align*}
where the remaining integral in the second line is $1$ and we used 
$\delta < c/2K$ for $W$ large enough.

\item[Regions $\mathbf{I^3}$ and $\mathbf{I^4}$]
As in \cite{disertori-pinson-spencer}, for arbitrary $a_j$ and $b_j$ in $\mathbb{R}$ we can bound
\begin{align}\label{eq:preliminary:region34}
\re V(a_j)  \leq \tfrac{m_r^2}{2} f_a a_j^2, \quad 
\re V(ib_j) \leq \tfrac{m_r^2}{2} f_b b_j^2 + \mathcal{O}(1-f_b), \quad 
\re D_j B_{jj}\leq \tfrac{K}{W^{2}}\ln W,
\end{align}
where $f_{a},f_{b}\in (1/2,1)$.  
Inserting the quadratic terms into the measure, and 
using a small fraction of the remaining mass, we bound
\begin{align*}
|a_j|^{m_{j}} \leq \left(\tfrac{K}{\sqrt{\varepsilon (1-f_{a})}}\right)^{m_{j}} \sqrt{m_{j}!} 
\ \eto^{\frac{1}{2}\varepsilon m_r^2 (1-f_{a}) a_j^2}
\end{align*}
and the same for $b_j$, where $0<\varepsilon \ll 1$ is small but fixed and independent of $W$.
Using Lemma \ref{lem:preliminary:normalization},
and $\ln W^{2}(1-f)^{-1}\leq K_{m}\ln W$ for all $f\in[0,1-W^{-m}], m\in \mathbb{N}$, we obtain
\begin{align*}
F_s^{m,n}
&\leq \tfrac{ \mathcal{K}_{m,n} K^{m+n}}{(1-f_{a})^{m/2} (1-f_{b})^{n/2}}
\eto^{K|\Lambda|(W^{-2}\ln W+(1-f_b))}\int \diff\mu_C(a,b) 
\eto^{\frac{m_r^2}{2}\sum_{j\in\Lambda}(\tilde{f}_a a_j^2+ \tilde{f}_b b_j^2)} 
\chi[I^{s}]\\
&\leq  \mathcal{K}_{m,n} K^{m+n} W^{n+1}\eto^{K|\Lambda|W^{-2}\ln W}
\int\diff\mu_{C_{\tilde{f}_a}}(a)\diff\mu_{C_{\tilde{f}_b}}(b)\ \chi[I^{s}],
\end{align*}
where $\mathcal{K}_{m,n}=\prod_{k=1}^p \sqrt{m_k!}\prod_{l=1}^q \sqrt{n_l!} $, and
$\tilde{f}_{a}=f_{a}+\varepsilon (1-f_{a})$ (same for $\tilde{f}_{b}$). 
In the second line we take $f_{a}\in (1/2,3/4)$ and $f_{b}=1-W^{-2}$, 
to ensure that all error terms in the exponent are not larger than the first one, i.e. 
$|\Lambda| W^{-2}\ln W$.
Applying \cite[Lemma 8]{disertori-pinson-spencer}, we bound the remaining  integral by:
\begin{align*}
\int\diff\mu_{C_{\tilde{f}_a}}(a)\chi[I^3]&
\leq \eto^{-x\delta}\sum_{j\in\Lambda}\eto^{\frac{1}{2}x^2 (C_{\tilde{f}_a})_{jj}}
\leq |\Lambda| \eto^{-K\delta^2W^2(\ln W)^{-1}} ,\\
\int\diff\mu_{C_{\tilde{f}_b}}(b)\chi[I^4]&
\leq \eto^{-x\delta}\sum_{j,j'\in\Lambda}\eto^{\frac{1}{2}x^2 [(C_{\tilde{f}_b})_{jj}+
(C_{\tilde{f}_b})_{j'j'}-2(C_{\tilde{f}_b})_{jj'}]}
\leq |\Lambda|^2 \eto^{-K\delta^2W^2(\ln W)^{-1}},
\end{align*}
where we set  $x=K \delta W^{-2}\ln W$ and in the first line we used 
$(C_{\tilde{f}_a})_{jj}\simeq W^{-2}\ln W$.
In the second line, $(C_{\tilde{f}_b})_{jj}\simeq W^{-2}\ln W+ W^{2}|\Lambda|^{-1}$, 
since $(1-\tilde{f}_{b})=O (W^{-2})$. The additional term is canceled by the
sum $(C_{\tilde{f}_b})_{jj}+(C_{\tilde{f}_b})_{j'j'}-2(C_{\tilde{f}_b})_{jj'}$.
Now, inserting the constraints we assumed on $|\Lambda|$ and $\delta$ we obtain the result.

\item[Region $\mathbf{I^5}$]
The proof in region $I^5$ is similar to the one in the other large field region, with the 
difference that the exponential decay comes from the bound of the potential in $b$ 
that can be improved to 
\begin{align}\label{eq:preliminary:region5}
\re V(ib_j) \leq \tfrac{m_r^2}{2} f_b b_j^2 + \mathcal{O}(1-f_b) -c\delta^2
\qquad \text{with } f_b = 1 - W^{-2}.
\end{align}
By the same arguments as above, we obtain  a factor
$\exp(-c|\Lambda|\delta^2)$ from the last term
which gives the main behaviour of the integral since  $\delta >W^{-1}(\ln W)^{1/2}$
for $W$ large enough. 
\qedhere
\end{description}
\end{proof}

\subsubsection{Improved estimates}
\label{sec:integrationbyparts}
Let us now fix $|\Lambda|=W^{2} (\ln W)^{\alpha },$ with $\alpha \in (0,1)$ as discussed is the
remark below Lemma \ref{lem:preliminary:estimates}. We want to  
apply Lemma \ref{lem:preliminary:estimates} to \eqref{eq:strategy:dualrepspa1}
and \eqref{eq:strategy:dualrepspa3} to prove \eqref{eq:model:semicircle} and
\eqref{eq:model:derivative}. 
For the correction to the semicircle law in \eqref{eq:model:semicircle}  we obtain 
\begin{align*}
\left|\int \diff\mu_B(a,b)\mathcal{R}(a,b)a_0 \right|
\leq& \eto^{K(\ln W)^\alpha} \left[\tfrac{(\ln W)^{1/2}}{W}+ \eto^{-c(\ln W)^{1+\alpha}}
+\eto^{-K\delta^2W^2((\ln W)^{-1}+(\ln W)^{\alpha})}\right]\\
\leq& \eto^{K(\ln W)^\alpha} \tfrac{(\ln W)^{1/2}}{W}
\end{align*}
which is not the desired  estimate.
To estimate  the derivatives in  \eqref{eq:strategy:dualrepspa3},
we need  to extract enough $W$ factors 
to control the sum over the indices $j_k$.
If we apply Lemma \ref{lem:preliminary:estimates} naively, we obtain
\begin{align*}
|\eqref{eq:strategy:dualrepspa3}|\leq&K_{n}
\sum_{j_1,\dots,j_n}\eto^{K(\ln W)^{\alpha}} \left(\tfrac{(\ln W)^{1/2}}{W}\right)^{n+1}
=K_{n}\eto^{K(\ln W)^{\alpha}} \tfrac{(\ln W)^{1/2}}{W} \left(W (\ln W)^{\alpha+1/2}\right)^n,
\end{align*}
which grows in $W$ algebraically for $n>0$.
To improve these bounds similar to \cite{disertori-pinson-spencer},
we apply  a few preliminary  steps of integration by parts. This is
done in the next lemma.

\begin{lemma}
\label{lem:preliminary:integrationbyparts}
 For general $\Lambda\subset \mathbb{Z}^{d},$  
 the integrals \eqref{eq:strategy:dualrepspa1} and \eqref{eq:strategy:dualrepspa3}
can be written as
\begin{align*}
\lim_{\varepsilon\to 0 }\frac{1}{|\Lambda|}\mathbb{E}[\Tr G^+_\Lambda(E_\varepsilon)] -a_s^+
&= \sum_{l_0\in\Lambda} B_{0l_0} \int\diff\mu_B(a,b) \partial_{a_{l_0}} \mathcal{R}(a,b),\\
\lim_{\varepsilon\to 0 }\partial_E\frac{1}{|\Lambda|}\mathbb{E}[\Tr G^+_\Lambda(E_\varepsilon)]
&= \sum_{j_1\in\Lambda}\sum_{l_0,l_1\in\Lambda} B_{0l_0}B_{j_1l_1}
\int\diff\mu_B(a,b)\partial_{x_{l_1}} \partial_{a_{l_0}} \mathcal{R}(a,b) +\delta_{j_1,l_1}\\
\lim_{\varepsilon\to 0 }\partial_E^n\frac{1}{|\Lambda|}\mathbb{E}[\Tr G^+_\Lambda(E_\varepsilon)]
&= \sum\limits_{\substack{j_1,\dots,j_n\in\Lambda\\l_0,\dots,l_n\in\Lambda}} B_{0l_0}\prod_{m=1}^nB_{j_ml_m}
\int \diff\mu_B(a,b) \prod_{m=1}^n\partial_{x_{l_m}} \partial_{a_{l_0}} \mathcal{R}(a,b) ,
\end{align*}
with $\partial_{x_l} = \partial_{a_l}+i\partial_{b_l}$.
\end{lemma}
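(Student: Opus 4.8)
The identity underlying all three formulas is Gaussian integration by parts with respect to the (complex) covariance $B$. For the normalized Gaussian measure $\diff\mu_B$ of \eqref{eq:strategy:normalizedGaussianmeasure} one has, for every $k\in\Lambda$ and every sufficiently regular $f$ with enough decay,
\begin{align*}
\int\diff\mu_B(a,b)\,a_k\,f(a,b)=\sum_{l\in\Lambda}B_{kl}\int\diff\mu_B(a,b)\,\partial_{a_l}f(a,b),
\end{align*}
together with the same identity with $a$ replaced by $b$; both follow from $\partial_{a_l}\eto^{-\frac12(a,B^{-1}a)}=-(B^{-1}a)_l\,\eto^{-\frac12(a,B^{-1}a)}$, by integrating the total derivative $\partial_{a_l}\big(f\,\eto^{-\frac12((a,B^{-1}a)+(b,B^{-1}b))}\big)$ to zero and contracting with $B$. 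Adding the $a$- and $b$-versions in the combination appearing in the observable gives $\int\diff\mu_B\,(a_k-ib_k)f=\sum_l B_{kl}\int\diff\mu_B\,\partial_{x_l}f$, with $\partial_{x_l}$ the derivative dual to $a_l-ib_l$. These manipulations are legitimate here because $\re B^{-1}=C^{-1}>0$ (Section~\ref{subsec:cov}), so $|\diff\mu_B|\le\eto^{K|\Lambda|W^{-2}}\diff\mu_C$ by Lemma~\ref{lem:preliminary:dmuB}, and because $\mathcal R=\det[1+DB]\,\eto^{\mathcal V}$ and each of its (finitely many) partial derivatives grows at most like a polynomial times a Gaussian with \emph{subcritical} mass: $D$ and its first derivatives are bounded uniformly on $\mathbb R^\Lambda\times\mathbb R^\Lambda$ (as in the proof of Lemma~\ref{lem:preliminary:determinant1+DB}; this is where $E\ne 0$ is used), $B$ is a fixed matrix so $\det[1+DB]$ is polynomially bounded in the fields, and $\re\mathcal V(a,b)\le\frac{m_r^2}{2}\sum_j(f_a a_j^2+f_b b_j^2)+\mathcal O(|\Lambda|(1-f_b))$ with $f_a,f_b<1$ by \eqref{eq:preliminary:region34}. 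Against $\diff\mu_C$ this is absolutely integrable, all boundary terms vanish, and none of this uses $d=2$, in line with the generality of the statement.

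With this in hand, \eqref{eq:strategy:dualrepspa1} is immediate: its observable is the single factor $a_0$, and one integration by parts in $a_0$ — where the derivative has nothing to act on but $\mathcal R$ — gives $\lim_{\varepsilon\to0}\frac1{|\Lambda|}\mathbb E[\Tr G^+_\Lambda(E_\varepsilon)]-a_s^+=\sum_{l_0}B_{0l_0}\int\diff\mu_B\,\partial_{a_{l_0}}\mathcal R$. For \eqref{eq:strategy:dualrepspa3} I would integrate by parts $n+1$ times: once on the factor $a_0$ (producing $\sum_{l_0}B_{0l_0}\partial_{a_{l_0}}$) and once on each factor $a_{j_k}-ib_{j_k}$ (producing $\sum_{l_k}B_{j_kl_k}\partial_{x_{l_k}}$). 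At every step the Leibniz rule splits the integrand into the \emph{main} term, where the derivative lands on $\mathcal R$, and \emph{contact} terms, where it lands on one of the remaining field factors. Because $\partial_{x_l}$ is the derivative dual to $a_l-ib_l$, a contact of a $\partial_{x_l}$ with another factor $a_j-ib_j$ cancels, so the only surviving contacts come from a derivative hitting the distinguished factor $a_0$; each such contact produces a Kronecker $\delta$ together with an integral of $\mathcal R$ against strictly fewer field factors. Using $\int\diff\mu_B\,\mathcal R=1$ — the deformed form of \eqref{eq:strategy:avarage1}, obtained exactly as in the proof of Lemma~\ref{lem:strategy:dualrep} — and re-applying the same integration by parts to these lower-order pieces, they collapse to the explicit $\delta$-term displayed for $n=1$, and for general $n$ they are organized by induction on $n$. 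Commuting $\partial_{a_{l_0}}$ past the $\partial_{x_{l_k}}$ and collecting terms yields the three stated formulas.

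The only genuine difficulty is the justification in the first paragraph: the covariance is complex (not positive), the weight $\eto^{\mathcal V}$ is not Gaussian, and $\re\mathcal V$ is controlled only by a quadratic, so integrability and the vanishing of the boundary terms are not automatic and rely squarely on the bounds for $D$, $B$ and $\mathcal V$ assembled in Section~\ref{sec:preliminary} and App.~\ref{sec:appcov}. Once those are granted, the remainder — the repeated integration by parts, the bookkeeping of the contact terms, and the induction on $n$ — is purely algebraic and combinatorial.
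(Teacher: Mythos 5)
Your overall route --- Gaussian integration by parts with respect to the complex covariance $B$, once on $a_0$ and once on each factor $a_{j_k}-ib_{j_k}$, using that $\partial_{x_l}$ annihilates the remaining factors $a_j-ib_j$ --- is exactly the paper's, and your opening paragraph justifying integrability and the vanishing of the boundary terms is extra care the paper does not bother to spell out. The gap is in your treatment of the surviving contact terms for $n\ge 2$. After the integration by parts on $a_0$, the Leibniz rule produces
\[
\sum_{j_1,\dots,j_n}\sum_{l_0}B_{0l_0}\sum_{m=1}^n\delta_{j_m l_0}\int\diff\mu_B(a,b)\,\mathcal{R}(a,b)\prod_{m'\neq m}\bigl(a_{j_{m'}}-ib_{j_{m'}}\bigr),
\]
and for the stated third identity to hold these terms must \emph{vanish} when $n\ge2$: that formula carries no extra $\delta$-terms. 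Your proposal to ``re-apply the same integration by parts to these lower-order pieces'' does not show this; it merely rewrites them as $\sum\prod_{m'}B_{j_{m'}l_{m'}}\int\diff\mu_B\prod_{m'}\partial_{x_{l_{m'}}}\mathcal{R}$, which is not manifestly zero (its vanishing is a genuine supersymmetric identity, established in the paper only later, in Lemma~\ref{lem:cluster:complementY}), and ``organized by induction on $n$'' leaves open whether spurious terms survive. The observation you are missing is that, after summing over the free indices $j_{m'}$, the remaining integral equals $\int\diff\mu_B\,\mathcal{R}\,\bigl(\sum_j(a_j-ib_j)\bigr)^{n-1}=\partial_E^{n-1}\mathbb{E}[1]=0$ for $n-1\ge1$, i.e.\ it is the $(n-1)$-st $E$-derivative of the constant normalization, exactly as in the proof of Lemma~\ref{lem:strategy:dualrep}. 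Only for $n=1$ does the contact term survive, reducing via $\int\diff\mu_B\,\mathcal{R}=1$ to the displayed $\delta_{j_1,l_1}$, which is why that case is singled out. Supplying this one observation closes the gap; the rest of your argument is sound.
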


\begin{proof} We use integration by parts.
For the first equation we only need to apply one step  of integration by parts. 
For the derivatives, the case $n=1$ is special. We calculate
\begin{align*}
&\sum_{j\in \Lambda}\int \diff\mu_B(a,b) \mathcal{R}(a,b) a_0 (a_{j}-ib_{j})\\
=& \sum_{j\in\Lambda}\int \diff\mu_B(a,b) \sum_{l_0\in\Lambda} B_{0l_0}
\left((a_{j}-ib_{j})\partial_{a_{l_0}}\mathcal{R}(a,b)-\delta_{jl_0}\mathcal{R}(a,b)\right)\\
=&\sum_{j\in\Lambda}\sum_{l_0,l_1\in\Lambda} B_{0l_0}B_{jl_1}   \int \diff\mu_B(a,b) 
\partial_{x_{l_1}}\partial_{a_{l_0}}\mathcal{R}(a,b)-\delta_{jl_0},
\end{align*}
where we used in the last step that $\int\diff\mu_B(a,b)\mathcal{R}(a,b)=1$.

For $n\geq 2$, we apply several steps of integration by parts. Writing $x_j =  a_j -ib_j$,
\begin{align*}
&\sum_{j_1,\dots,j_n\in \Lambda}\int \diff\mu_B(a,b) \mathcal{R}(a,b) a_0 \prod_{m=1}^n x_{j_m}\\
=& \sum_{j_1,\dots,j_n\in\Lambda}\int \diff\mu_B(a,b) \sum_{l_0\in\Lambda} B_{0l_0}
\left(\prod_{m=1}^n x_{j_m}\partial_{a_{l_0}}\mathcal{R}(a,b)+\mathcal{R}(a,b)\partial_{a_{l_0}}
\prod_{m=1}^n x_{j_m}\right)\\
=&\sum_{j_1,\dots,j_n\in\Lambda}\sum_{l_0,\dots,l_n\in\Lambda} B_{0l_0}\prod_{m=1}^nB_{j_ml_m}
\int \diff\mu_B(a,b) \prod_{m=1}^n\partial_{x_{l_m}} \partial_{a_{l_0}} \mathcal{R}(a,b) 
\end{align*}
where the last term in the second line  corresponds to the derivative of a constant,
hence equals zero. In the third line  we used  $\partial_{x_i}x_j = 0$ for all $i,j$. 
\end{proof}

These representations give the stated decay in finite volume $|\Lambda| = W^2 (\ln W)^\alpha$:
\begin{lemma}
For fixed $|\Lambda| = W^2 (\ln W)^\alpha$ we have 
\begin{align*}
|\lim_{\varepsilon\to 0 }\frac{1}{|\Lambda|}\mathbb{E}[\Tr G^+_\Lambda(E_\varepsilon)] -a_s^+|
&\leq \eto^{K(\ln W)^\alpha} W^{-2}\ln W,\\
|\lim_{\varepsilon\to 0 }\partial_E^n\frac{1}{|\Lambda|}\mathbb{E}[\Tr G^+_\Lambda(E_\varepsilon)]|
\leq C_n.
\end{align*}
\end{lemma}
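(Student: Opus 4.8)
The plan is to feed the integration-by-parts representations of Lemma~\ref{lem:preliminary:integrationbyparts}, now specialised to $|\Lambda|=W^{2}(\ln W)^{\alpha}$, into the same absolute-value machinery used to prove Lemma~\ref{lem:preliminary:estimates}. Putting absolute values inside and replacing $B$ by $C$ (Lemmas~\ref{lem:preliminary:dmuB}--\ref{lem:preliminary:determinant1+DB}) costs the global prefactor $\eto^{K|\Lambda|/W^{2}}=\eto^{K(\ln W)^{\alpha}}$. With this volume the correction $\eto^{K|\Lambda|W^{-3}(\ln W)^{3/2}}$ in $F_{1}^{m,n}$ tends to $1$, while the suppression factors of $F_{2},\dots,F_{5}$ — namely $\eto^{-c(\ln W)^{1+\alpha}}$ for the second saddle and, for $\delta=W^{-\nu}$ with a suitable $\nu\in(0,2/3)$, factors super-exponentially small in $W$ for the large-field regions — dominate every power of $W$ and every $\eto^{K(\ln W)^{\alpha}}$. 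Hence only the small-field region $I^{1}$ contributes to the bounds below.

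The decisive observation is that each derivative $\partial_{a_{l}}$ or $\partial_{x_{l}}$ hitting $\mathcal{R}=\det[1+DB]\,\eto^{\mathcal{V}}$ gains a factor of order $W^{-2}\ln W$ after integration over $I^{1}$. When it hits $\eto^{\mathcal{V}}$ it inserts $V'(a_{l})$ and/or $V'(ib_{l})$, and differentiating the explicit formula $V(x)=\int_{0}^{1}x^{3}(1-t)^{2}(\bar{\mathcal{E}}-tx)^{-3}\diff t$ gives $V'(x)=3\bar{\mathcal{E}}\int_{0}^{1}x^{2}(1-t)^{2}(\bar{\mathcal{E}}-tx)^{-4}\diff t$, so $|V'(a_{l})|\leq K\,a_{l}^{2}$ and $|V'(ib_{l})|\leq K\,b_{l}^{2}$ uniformly on $\mathcal{I}$ (for the latter one uses $|\bar{\mathcal{E}}-tib_{l}|\geq|\mathcal{E}_{r}|=|E|/2>\eta/2$, which is the reason $E=0$ is excluded); these are degree-two field insertions whose $\mu_{C}$-average is $O(C_{ll})=O(W^{-2}\ln W)$. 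When it hits $\det[1+DB]$ it produces $\det[1+DB]\,(\partial_{l}D_{l})\,[B(1+DB)^{-1}]_{ll}$ with $|\partial_{l}D_{l}|\leq K$ on $\mathcal{I}$; on $I^{1}$ one has $|D_{j}|\leq K(|a_{j}|+|b_{j}|)\leq K\delta$, so $\|DB\|_{\mathrm{op}}<1/2$, the Neumann series converges, and $[B(1+DB)^{-1}]_{ll}=B_{ll}-[BDB(1+DB)^{-1}]_{ll}$ — the first term is $|B_{ll}|=O(W^{-2}\ln W)$ deterministically, and in the second one keeps the factors $D_{j}=O(|a_{j}|+|b_{j}|)$ as genuine field insertions, a second-moment computation then giving $\int\diff\mu_{C}\,\chi[I^{1}]\,|[BDB(1+DB)^{-1}]_{ll}|=O(W^{-2}\ln W)$ (bounding $D$ in sup-norm would only yield $O(W^{-1-\nu}(\ln W)^{1/2})$, which is too weak). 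By the Brascamp--Lieb product estimate \eqref{eq:preliminary:brascamplieb2} on $I^{1}$ (after absorbing $|\eto^{\Tr DB+\mathcal{V}}|$ into a shifted Gaussian measure, as in Lemma~\ref{lem:preliminary:estimates}) these gains multiply, giving $\bigl|\int\diff\mu_{B}\prod_{m=1}^{n}\partial_{x_{l_{m}}}\partial_{a_{l_{0}}}\mathcal{R}\bigr|\leq K^{n+1}\eto^{K(\ln W)^{\alpha}}(W^{-2}\ln W)^{n+1}$ up to an $n$-dependent combinatorial factor; the worst power count is when all $n+1$ derivatives act on distinct factors, since derivatives compounding on $\det$ only generate extra off-diagonal decay of $B$ between the sites, which helps.

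It remains to carry out the index sums. By the $d=2$ decay \eqref{eq:preliminary:expdecayC} — crucially, the logarithmic short-distance singularity is integrable in two dimensions — $\sum_{i}|B_{il}|\leq K$ uniformly in $l$, hence $\sum_{l_{0}}|B_{0l_{0}}|\leq K$ and $\sum_{j_{m}}|B_{j_{m}l_{m}}|\leq K$. For $n=0$ the representation gives $\bigl|\lim_{\varepsilon\to 0}\tfrac{1}{|\Lambda|}\mathbb{E}[\Tr G^{+}_{\Lambda}(E_{\varepsilon})]-a_{s}^{+}\bigr|\leq\sum_{l_{0}}|B_{0l_{0}}|\sup_{l_{0}}\bigl|\int\diff\mu_{B}\partial_{a_{l_{0}}}\mathcal{R}\bigr|\leq\eto^{K(\ln W)^{\alpha}}W^{-2}\ln W$, which is the first bound. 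For $n\geq1$ the $j_{m}$-sums and the $l_{0}$-sum each cost only a constant, while $l_{1},\dots,l_{n}$ run freely over $\Lambda$, producing $|\Lambda|^{n}=(W^{2}(\ln W)^{\alpha})^{n}$; combined with the per-configuration bound this is $C_{n}\,W^{-2}(\ln W)^{n(\alpha+1)+1}\eto^{K(\ln W)^{\alpha}}\leq C_{n}$ for $W$ large, since $\alpha<1$ makes $\eto^{K(\ln W)^{\alpha}}$ subpolynomial in $W$. Finally one adds the boundary terms of the integration by parts: for $n=1$ this is a single $O(1)$ term (of the form $(1-\mathcal{E}^{2})^{-k}$), bounded on $\mathcal{I}$, and for $n\geq2$ they vanish because $\int\diff\mu_{B}\mathcal{R}(\sum_{j}x_{j})^{k}=\partial_{E}^{k}\mathbb{E}[1]=0$ for $k\geq1$, exactly as in the proof of Lemma~\ref{lem:preliminary:integrationbyparts}.

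The step I expect to be the main obstacle is the control of the $\det[1+DB]$-derivative terms. Away from $I^{1}$ the matrix $DB$ need not be small and $(1+DB)^{-1}$ has no a priori bound, so there one must rewrite $\partial_{l}\det[1+DB]=(\partial_{l}D_{l})[B\,\mathrm{adj}(1+DB)]_{ll}$ and estimate each cofactor by the determinant inequality \eqref{eq:preliminary:det1+A}, leaning on the super-exponential suppression of $I^{2},\dots,I^{5}$ (together with $\sum_{j}|B_{lj}|\leq K$) to absorb the crude bounds; inside $I^{1}$ the obstacle is quantitative, namely that only by integrating the field content of $D$ — rather than estimating $D$ in sup-norm — does one recover the sharp $W^{-2}\ln W$ per derivative that is needed to beat the algebraic factor $W^{n}$ coming from the free index sums in finite volume.
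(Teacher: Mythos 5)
Your architecture --- integration by parts first, then absolute values, the region decomposition with only $I^1$ surviving, a gain of $W^{-2}\ln W$ per derivative, and index sums controlled by $\sum_i|B_{il}|\leq K$ --- is exactly the paper's. The $n=0$ bound and the contribution of pairwise distinct $l_1,\dots,l_n$ are handled correctly. (A small aside: your claim that a sup-norm bound on $D$ is insufficient for $[B(1+DB)^{-1}]_{ll}$ is not right --- iterating the resolvent identity once more gives $|[BDB]_{ll}|\leq\delta\sum_j|B_{lj}|^2\leq K\delta W^{-2}$ deterministically, which already suffices --- but your second-moment variant also works, so this is harmless.)

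The genuine gap is the coincident-index case, which you dismiss with the assertion that ``derivatives compounding on $\det$ only generate extra off-diagonal decay of $B$ between the sites''. When several $\partial_{x_{l_m}}$ act at the same site $l$ there are no ``between the sites'' factors: compounding on $\eto^{\mathcal{V}}$ produces $\partial_x^{\nu}\mathcal{V}_l=V^{(\nu)}(a_l)-V^{(\nu)}(ib_l)$, which for $\nu\geq 2$ is only \emph{linear} in the fields (for $\nu=2$ it equals $V'''(0)(a_l-ib_l)+O(|a_l|^2+|b_l|^2)$), so the integral over $I^1$ gains only $(W^{-2}\ln W)^{1/2}$ at that site rather than $(W^{-2}\ln W)^{\nu}$; compounding on the determinant produces $(\partial_x^{\nu}D_l)\,[B(1+DB)^{-1}]_{ll}$ with $\partial_x^{\nu}D_l=O(1)$, i.e.\ a gain of a single $W^{-2}\ln W$ for all $\nu$ derivatives. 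Your per-configuration bound $(W^{-2}\ln W)^{n+1}$ is therefore false on the coincidence set, and the saving from having fewer free sums does not compensate: already for $n=4$ with $l_1=l_2=\lambda$, $l_3=l_4=\mu$ the count gives $|\Lambda|^2\cdot(W^{-2}\ln W)^{1/2}\cdot(W^{-2}\ln W)^{1/2}\cdot(W^{-2}\ln W)=(\ln W)^{2+2\alpha}$, and each further coinciding pair multiplies the bound by $W$ up to logarithms, so the estimate diverges. What rescues the statement is not decay of $B$ but a further step of integration by parts on the linear insertions $c\,(a_l-ib_l)$ produced by the compounded derivatives --- converting them into fresh $B$-propagators that absorb the sum over $l$ together with fresh derivatives at new sites --- combined with the vanishing of the constant terms ($V^{(\nu)}(a)-V^{(\nu)}(ib)$ vanishes at the saddle, and derivative-of-a-constant integrals vanish). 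This is precisely the mechanism the paper sets up in Section \ref{subsec:cluster:derivatives}, and it is already needed for the present finite-volume lemma; your proposal is missing it.
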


\begin{proof}
We apply Lemma \ref{lem:preliminary:estimates} on the representations of the previous lemma.
Deriving the remainder $\mathcal{R}(a,b)$, we obtain 
\begin{align*}
\partial_{a_l}\left(\mathcal{R}(a,b) \right)
= \left(\det[1+DB] \partial_{a_l}V(a_l)+ 
\det B\, {\det}_{\{l\},\{l\}}[B^{-1}+D]\,\partial_{a_l} D_l\right) \eto^{\mathcal{V}(a,b)}.
\end{align*}
In the first summand, we can bound $|\partial_{a_l} V(a_l)| \leq K |a_l|^2$. 
In the second one, we bound $|\partial_{a_l}D_l|\leq K$. 
In region $I^1$ the matrix $B^{-1}+D$ is invertible and 
\begin{align*}
\left|\det B\ {\det}_{\{l\},\{l\}}[B^{-1}+D]\right| = \left|\det[1+DB](B^{-1}+D)^{-1}_{ll}\right|
\leq |\det[1+DB]| \tfrac{\ln W}{W^2}.
\end{align*}
In the other regions, it suffices to write the expression above as $|\det (1+M) |$ 
(for a certain matrix $M$) and bound it similar to Lemma \ref{lem:preliminary:determinant1+DB}.
Using Lemma \ref{lem:preliminary:estimates}, the integral is bounded by
\begin{align*}
\sum_{l\in\Lambda}|B_{0l}|\eto^{K(\ln W)^\alpha}W^{-2}\ln W\leq\eto^{K(\ln W)^\alpha} W^{-2}\ln W
\end{align*}
since $\sum_{l\in\Lambda}|B_{0l}|\leq K$. 
This proves the first part.

For $n=1$ the first integral yields a factor $(W^{-1}(\ln W)^{1/2})^3$
which controls the sum over $j$ easily.
In the second term, the sum over $j$ disappears.
In both cases  the sum over $l$ is performed by  $B_{0l}$ and is bounded by a constant.

For $n\geq2$, note that each factor $B_{ij}$ controls a sum over $i\in\Lambda$ or $j\in\Lambda$.
The largest contribution appears when all $l_{m}$ are different. In this case the expression 
above is bounded by $W^{-2} (\ln W)^{1+ (2+\alpha )n}\ll 1$  for all $n\leq n_{0} (W)$.
When $l_{m}=l_{m'}$ the factor $W^{-2}\ln W$ comes from $B_{j_{m'}l_{m'}}$, since no
sum over $l_{m'}$ is needed.
\end{proof}

\begin{rem}
Note that the representation for the first derivative is special, but the additional term is
easy to handle, since it directly give control over the sums over $\Lambda$, hence we neglect it in the following.
\end{rem}

\subsubsection{Large volume}
We can easily extend the above result from one cube of volume $W^2(\ln W)^\alpha$ to a finite 
union of cubes of this volume. 
The procedure is independent of the space dimension and follows \cite[Corollary 1]{disertori-pinson-spencer}.
The idea is to decouple the cubes by replacing the periodic Laplacian in the covariance by one 
with Neumann boundary conditions.
\begin{theo}
\label{theo:preliminary:largevolume}
For $d=2$
and each fixed $\alpha\in(0,1)$, there exists a value $W_0(\alpha)$ such that
for $W\geq W_0(\alpha)$ and $\Lambda$ a union of $N$ cubes of volume $W^2(\ln W)^\alpha$
we have for all $E\in\mathcal{I}$
\begin{align*}
|\bar\rho_\Lambda(E) - \rho_{SC}(E) | &\leq W^{-2} \eto^{NK(\ln W)^\alpha}\\
|\partial_E^n \bar\rho_\Lambda(E)|&\leq C_{n}N^{n}\eto^{NK(\ln W)^{\alpha}} ,
\end{align*}
where $\rho_{SC}$ is Wigner's semicircle law \eqref{eq:intro:semicirclelaw}
and $C_{n}$ depends only on $n$.
\end{theo}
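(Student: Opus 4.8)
The plan is to mimic the single-cube analysis of Sections~\ref{subsec:finitev} and \ref{sec:integrationbyparts}, the only genuinely new point being to keep the threshold $W_0$ independent of the number of cubes $N$. By Lemma~\ref{lem:strategy:dualrep} together with the integration-by-parts identities of Lemma~\ref{lem:preliminary:integrationbyparts}, both $\bar\rho_\Lambda(E)-\rho_{SC}(E)$ and $\partial_E^n\bar\rho_\Lambda(E)$ are imaginary parts of functional integrals of the form $\int\diff\mu_{B_\Lambda}(a,b)\,(\text{derivatives of }\mathcal{R})$, in which every lattice sum is weighted by a propagator $B_{\Lambda,ij}$ with $\sum_{j\in\Lambda}|B_{\Lambda,ij}|\le K$ uniformly in $\Lambda$ (from \eqref{eq:preliminary:expdecayC}). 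The crude bounds of Lemmas~\ref{lem:preliminary:dmuB} and \ref{lem:preliminary:determinant1+DB} turn such an integral into $\eto^{K|\Lambda|/W^2}\int\diff\mu_{C_\Lambda}|\eto^{\Tr DB_\Lambda}|\,|\eto^{\mathcal{V}}|\,|\mathcal{O}|$, and $\eto^{K|\Lambda|/W^2}=\eto^{NK(\ln W)^\alpha}$ already carries the claimed $N$-dependence; the issue is that the remaining integral must still be bounded by $W^{-2}\ln W$ (resp.\ by $C_nN^n$), whereas a direct application of Lemma~\ref{lem:preliminary:estimates} with $|\Lambda|=NW^2(\ln W)^\alpha$ is forbidden, since the constraint $|\Lambda|\le KW^{4-2\nu}(\ln W)^{-2}$ appearing there would make $W_0$ depend on $N$.

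To circumvent this I would decouple the cubes, following \cite[Corollary~1]{disertori-pinson-spencer}. Replace the periodic Laplacian $\Delta^{per}_\Lambda$ in $B_\Lambda$ and $C_\Lambda$ by the Neumann Laplacian $\Delta^{Neu}_\Lambda=\bigoplus_{i=1}^N\Delta_{\Lambda_i}$, which splits into the Laplacians of the individual cubes, and interpolate: put $\Delta_s:=(1-s)\Delta^{Neu}_\Lambda+s\Delta^{per}_\Lambda$, $B_s^{-1}:=-W^2\Delta_s+(1-\mathcal{E}^2)$, $C_s:=[\re B_s^{-1}]^{-1}$, so that $B_1=B_\Lambda$ and $B_0=\bigoplus_iB_{\Lambda_i}$. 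Since $-\Delta^{Neu}_\Lambda\le-\Delta_s\le-\Delta^{per}_\Lambda$ as quadratic forms, the real covariance $C_s$ and, by the arguments of Appendix~\ref{sec:appcov}, also $|(B_s)_{ij}|$, obey the $d=2$ decay \eqref{eq:preliminary:expdecayC} uniformly in $s\in[0,1]$ with each block of the ``right'' volume $W^2(\ln W)^\alpha$; hence every estimate of Section~\ref{subsec:finitev} is available, block by block, with the single-cube threshold $W_0(\alpha)$ and insensitive to the boundary conditions on an individual cube. Writing the functional integral as $\mathcal{F}(s)$, I would split $\mathcal{F}(1)=\mathcal{F}(0)+\int_0^1\mathcal{F}'(s)\,\diff s$. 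At $s=0$ the Gaussian measure, the potential $\mathcal{V}$ and the determinant $\det[1+DB_0]$ all factorize over the cubes, the normalization $\int\diff\mu_{B_{\Lambda_i}}\mathcal{R}=1$ kills every cube carrying no field insertion, and the propagators $B_0$ confine the remaining lattice sums to a single cube; what is left is exactly the integral estimated in Section~\ref{sec:integrationbyparts}, so $|\mathcal{F}(0)|\le W^{-2}\eto^{K(\ln W)^\alpha}$ for the semicircle correction and $|\mathcal{F}(0)|\le C_n$ for the $n$-th derivative.

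The work is in bounding $\mathcal{F}'(s)$. Differentiating in $s$ brings down $\partial_sB_s^{-1}=-W^2(\Delta^{per}_\Lambda-\Delta^{Neu}_\Lambda)$, a quadratic form supported on the $O\!\big(NW(\ln W)^{\alpha/2}\big)$ nearest-neighbour bonds joining distinct cubes: acting on the Gaussian weight it produces, after one further integration by parts, a sum over these bonds of second derivatives of $\mathcal{R}\,\mathcal{O}$ weighted by $B_s$-propagators and by $\dot B_s=-B_s(\partial_sB_s^{-1})B_s$, and acting on $\det[1+DB_s]$ it produces an analogous trace term. I would bound each of these just as in Section~\ref{sec:integrationbyparts}: insert absolute values, use $|\diff\mu_{B_s}|\le\eto^{K|\Lambda|/W^2}\diff\mu_{C_s}$ (Lemma~\ref{lem:preliminary:dmuB}) and $|\det[1+DB_s]|\le\eto^{K|\Lambda|/W^2}|\eto^{\Tr DB_s}|$ (Lemma~\ref{lem:preliminary:determinant1+DB}), legitimately since $C_s$ has the cube-wise decay, and then partition into the regions $I^1,\dots,I^5$ and invoke Lemma~\ref{lem:preliminary:estimates}. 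Near the dominant saddle each extra derivative hitting $\mathcal{R}$ costs only $O(W^{-2}\ln W)$, the bond sum is controlled by $\sum_j|(B_s)_{ij}|\le K$ together with the $O\!\big(NW(\ln W)^{\alpha/2}\big)$ interface bonds, and the residual polynomial factors in $N$, in $\ln W$ and in negative powers of $W$ get absorbed into the prefactor $\eto^{NK(\ln W)^\alpha}$ already present; this gives $|\mathcal{F}'(s)|\le W^{-2}\eto^{NK(\ln W)^\alpha}$, resp.\ $|\mathcal{F}'(s)|\le C_nN^n\eto^{NK(\ln W)^\alpha}$, uniformly in $s$. Adding $\int_0^1\mathcal{F}'(s)\,\diff s$ to $\mathcal{F}(0)$ and recalling that $\im a_s^+$ reproduces $\rho_{SC}(E)$ yields both estimates. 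I expect the main obstacle to be exactly this control of the surface contributions uniformly in $N$: one must verify that the interpolated \emph{complex} covariance $B_s$ keeps the $d=2$ short-distance and exponential decay for all $s$, and that after the preliminary integration by parts each interface bond carries enough smallness — through the fluctuation bounds on $\mathcal{R}$ near the saddle — for the $O\!\big(NW(\ln W)^{\alpha/2}\big)$ bonds to sum to $o(W^{-2})$.
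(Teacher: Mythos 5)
Your route is genuinely different from the paper's, and the difference is exactly where the gap sits. The paper's proof of Theorem~\ref{theo:preliminary:largevolume} never interpolates between the periodic and Neumann Laplacians: it keeps the full covariance throughout, and the ``decoupling'' is implemented purely through the operator inequalities $C_f\le C^N_f$ together with the minmax principle, which allow the normalization determinant \eqref{eq:preliminary:prooflargevolume1} and the exponent in the Brascamp--Lieb bound \eqref{eq:preliminary:prooflargevolume2} to be replaced by \emph{factorized} (block-diagonal) quantities. There is therefore no boundary term at all, and no new smallness has to be extracted from the interface. Your telescoping $\mathcal{F}(1)=\mathcal{F}(0)+\int_0^1\mathcal{F}'(s)\,\diff s$ buys a cleanly factorized $\mathcal{F}(0)$, but at the price of a surface term that must be shown to be $O\bigl(W^{-2}\eto^{NK(\ln W)^\alpha}\bigr)$, and the estimates available in the paper do not deliver this. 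Writing $\partial_s B_s=W^2B_s(\Delta^{per}-\Delta^{Neu})B_s$ and integrating by parts in the Gaussian measure, the worst contributions are those in which both new derivative indices coincide with the index $l_0$ produced by the preliminary integration by parts: $\partial_{a_{l_0}}^2\bigl[\partial_{a_{l_0}}\eto^{\mathcal{V}}\bigr]$ contains $V'''(a_{l_0})\eto^{\mathcal{V}}$ with $V'''(0)=2\bar{\mathcal{E}}^{-3}\neq 0$, i.e.\ a term with \emph{no} small field factor whatsoever, while the weight $(\partial_sB_s)_{l_0l_0}=-W^2\sum_{(k,l)}\bigl((B_s)_{l_0k}-(B_s)_{l_0l}\bigr)^2$ is, using only the decay \eqref{eq:preliminary:expdecayC}, of order $W^2\cdot\sup_k|B_{l_0k}|\cdot\sum_{k\in\partial}|B_{l_0k}|\sim W^{-1}(\ln W)^2$ when $l_0$ lies near an interface --- a full factor of $W$ above the target. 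Rescuing this term requires pointwise \emph{gradient} decay of the interpolated complex covariance $B_s$ (so each interface difference gains $1/(1+|l_0-k|)$), which is established nowhere in the paper, and even then the term is only borderline $O(W^{-2})$. So the assertion $|\mathcal{F}'(s)|\le W^{-2}\eto^{NK(\ln W)^\alpha}$, which you yourself flag as the main obstacle, is not a verification but the missing proof.

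There is a second, more structural issue: for $s\in(0,1)$ the covariance $C_s$ is \emph{not} block diagonal, so ``partition into the regions $I^1,\dots,I^5$ and invoke Lemma~\ref{lem:preliminary:estimates}'' is precisely what the volume ceiling $|\Lambda|\le (C_{jj})^{-2}\delta^2\le KW^{4-2\nu}(\ln W)^{-2}$ of that lemma forbids once $N$ is large (the obstruction being the near-zero mode of the $b$-covariance $C_{f_b}$ with $f_b=1-W^{-2}$, whose diagonal acquires a term $W^2|\Lambda|^{-1}$ over the connected volume). Making the single-cube estimates available ``block by block'' at intermediate $s$ is exactly the content of the paper's comparison $C_f\le C^N_f$ and of the replacement of $C_f$ by $C^N_f$ in \eqref{eq:preliminary:prooflargevolume2}; your proposal assumes this rather than proving it. Once that mechanism is in hand, the interpolation is superfluous --- applying it directly at $s=1$ is the paper's proof --- and if one insists on the interpolation one must additionally close the surface-term estimate above, which the proposal does not do.
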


\begin{rem}
Note that here $n$ can take any value independent of $W$.
\end{rem}

\begin{proof}
We consider again the dual representations \eqref{eq:strategy:dualrepspa1} and 
\eqref{eq:strategy:dualrepspa2},  apply the steps of integration by parts described above and 
pull the sums in front of the integral. 
The measure is again bounded by Lemma \ref{lem:preliminary:dmuB}, and  the determinant $\det 1+DB$
by Lemma \ref{lem:preliminary:determinant1+DB}.
When  derivatives fall on the determinant, this is replaced by terms of the form 
 $\det B \det_{\mathcal{J}\mathcal{J}}(B^{-1}+ D)$ for an index set $\mathcal{J}$, which
can be bounded in the same way.
Collecting a factor $\exp(K|\Lambda|W^{-2}) = \exp(NK(\ln W)^\alpha)$, we need to estimate an 
integral of the form \eqref{eq:strategy:startingpoint}.
Note that all terms in the integral factorize over the cubes except the measure $\diff\mu_C$.
Before applying Lemma \ref{lem:preliminary:estimates}, we insert the partition of the
integration domain
in each cube separately:  $1=\prod_{\triangle}\sum_{s_{\triangle}=1}^{5}\chi[I^{s_{\triangle}}_{\triangle}].$
We estimate the terms $\re \Tr DB$ and $\re\mathcal{V}$ in each cube depending on the region 
(as in \eqref{eq:preliminary:region1}, \eqref{eq:preliminary:region2a},
\eqref{eq:preliminary:region2b}, \eqref{eq:preliminary:region34}, \eqref{eq:preliminary:region5}).
Inserting the quadratic contributions into the measure and extracting the normalizing factor, 
we obtain
\begin{align*}
\sum_{\{s_{\triangle}\}_{\triangle\in\Lambda}} \sqrt{ \det  \tfrac{C^{-1}}{C_{f_{a}}^{-1}} 
\det \tfrac{C^{-1}}{C_{f_{b}}^{-1}}}
\int \diff\mu_{C_{f_{a}}} (a) \diff\mu_{C_{f_{b}}} (b) |\mathcal{O}(a,b)|
\eto^{\sum_{\triangle}h_{s_{\triangle}} (a,b)}\prod_{\triangle}\chi[I^{s_{\triangle}}_{\triangle}],
\end{align*}
where we collect all non-quadratic (cubic, linear and constant) terms  in $h_{s_{\triangle}} (a,b)$, and
$C_{f}^{-1}=C^{-1}-\hat{f}m_r^2$, and 
$\hat{f}=\sum_{\triangle,s_{\triangle}>1}f_{\triangle}\mathbf{1}_{\triangle}$ is a block diagonal matrix.
Note that, for the moment, only the mass in regions with $s_{\triangle}>1$
has been modified. Now we can bound the normalization factor in each cube as usual since
\begin{align}\label{eq:preliminary:prooflargevolume1}
\tfrac{\det C^{-1}}{\det C_f^{-1}}=\det[1+\hat{f}m_r^2 C_f] \leq \det[ 1+ \hat{f}m_r^2 C_f^N ] 
= \prod_{\triangle: s_{\triangle}>1} \det[1_\triangle+f_\triangle m_r^2 C_{f_\triangle}^{\triangle,N}]
\end{align}
where $C_f^N= (- W^2\Delta_{N} (1-\hat{f})m_r^2)^{-1},$  
$-\Delta_N$ is the Laplacian  
with Neumann boundary conditions on the cube boundaries, and $C_{f_\triangle}^{\triangle,N}$
is this covariance restricted to $\triangle$. To prove the inequality above, we use 
 $C_f \leq C^N_f$ and  $\hat{f}\geq 0$ as quadratic forms, and the minmax-principle to compare the 
corresponding eigenvalues.

As in \cite[Lemma 8]{disertori-pinson-spencer}, we estimate the characteristic function  $\chi[I^3_{\triangle}]$ 
by
$K\exp(\pm\sum_{j\in \triangle}xa_j).$
A similar bound holds for $\chi [I^4_{\triangle}].$ 
Now, apart from the cubic contributions in the first region, all terms depending on $a$ or $b$
in the integral  are of the form $|a|^n$ or $\exp((a,v)),$ for some vector $v.$ The 
same holds for $b.$ We are then reduced to estimate an integral of the following form 
\begin{align*}
\int   \diff\mu_{C_{f_{a}}} (a) \diff\mu_{C_{f_{b}}} (b)\prod_{\triangle: s_{\triangle}=1} 
\eto^{\mathcal{F}_{\triangle} (a)+\mathcal{F}_{\triangle} (b)}\prod_{j\in \triangle} |a_{j}|^{m_{j}} |b_{j}|^{n_{j}}
 \prod_{\triangle: s_{\triangle}>1} 
\eto^{(a,v_{\triangle})+ (b,w_{\triangle})},
\end{align*}
where $\mathcal{F}_{\triangle} (a)=K\sum_{j\in \triangle}\ |a_{j}|^{3}+|a_{j}|W^{-2},$ $K>0$ is some
constant, $n_{j},m_{j}\geq 0,$ $v_{ \triangle}, w_{ \triangle}$ are some vectors. 
Defining $\mathcal{H}(a)=(a,C_{f_{a}}^{-1}a)/2-\sum_{\triangle: s_{\triangle}=1}  \mathcal{F}_{\triangle} (a)$ 
(same for $b$), we can apply Brascamp-Lieb \eqref{eq:preliminary:brascamplieb2} 
and \eqref{eq:preliminary:normalizationbl}. As a result the integral above is bounded by
\begin{align}\label{eq:preliminary:prooflargevolume2}
K^{N} \eto^{\frac{1}{2} (v,C_{f_{a}}v)+\frac{1}{2} (w,C_{f_{b}}w)  }  
\prod_{\triangle: s_{\triangle}=1} \prod_{j\in \triangle} {\scriptstyle \sqrt{(2n_{j}-1)!!}
\sqrt{(2m_{j}-1)!!}}  \ (C_{f_{a}})_{jj}^{m_{j}/2} 
(C_{f_{b}})_{jj}^{n_{j}/2} 
\end{align}
where now $\hat{f}=\sum_{\triangle}f_{\triangle}\mathbf{1}_{\triangle}$, and $f_{\triangle}>0$
for all cubes. Note that, to avoid heavy notations, we write $C_{f}$ in this case, too.
Now we replace $C_f$ by $C_f^N$ in the exponent  and hence obtain factorized estimates
over each cube. Since $C_f^N$ decays in the same way as $C_f,$ the bounds now work
as before. 

Finally, when estimating $n$ derivatives in $E$, we collect a factor $N^{n}$  from the sums over the $j_k$'s.
\end{proof}

This result is not sufficient to deal with the case of very large (or infinite) volume.
To handle this case, we introduce in the next section a cluster expansion.

\section{Proof of  Theorem \ref{theo:strategy:reducedresult}}
\label{sec:cluster}

Following  \cite{disertori-pinson-spencer} we will apply a cluster expansion
which is a variation of the rooted Brydges-Kennedy Taylor forest formula
(cf. \cite{brydgesleshouche},\cite{AR1994}) to decouple
an appropriate finite region containing the observable from the remaining volume.
In contrast to \cite{disertori-pinson-spencer}, 
we perform first the steps of integration by parts described in Section \ref{sec:integrationbyparts}.
This preliminary procedure simplifies the extraction of the correct decay later.
The cluster expansion and the preliminary steps of integrations by parts
are more easily implemented by going back to the original representation of the
integrals \eqref{eq:strategy:dualrepsusy1},\eqref{eq:strategy:dualrepsusy2} in terms of Bosonic and Fermionic variables,
as in App. \ref{sec:appproofsusy}.  This is done in  Section \ref{subsec:cluster1} below. 
In Section \ref{subsec:cluster} a cluster expansion is applied to the supersymmetric representation
obtained in Section \ref{subsec:cluster1}. 
The following Sections \ref{subsec:decayGB}-\ref{subsec:summing} bound the different terms in the cluster expansion.
More precisely, in Section \ref{subsec:decayGB} we give an estimate on the propagators and in  Section \ref{subsec:funtint} we
bound the functional integral on a finite set of cubes.
Section \ref{subsec:summing} is devoted to combining all bounds and performing the sum over vertex and cube positions
as well as the tree structure.
Finally in Section \ref{subsec:cluster:derivatives}, we sketch the procedure for the derivatives.

\subsection{Supersymmetric representation}
\label{subsec:cluster1}
To  modify the dual representation introduced in Lemma \ref{lem:strategy:susy}
we  introduce a family $\gls{rhoj}$ of Grassmann variables (cf. App. \ref{sec:appsusy}).
For each $j\in \Lambda $ we denote by $M_{j}$ the supermatrix
$\gls{M}_j = \begin{pmatrix}a_j & \bar\rho_j\\\rho_j & ib_j\end{pmatrix}.$ Note that
the trace is replaced by  $\Str M_j = a_j - ib_j$ (cf. \eqref{eq:appsusy:strace}).
With these notations we can state the new representaion.
\begin{lemma}
\label{lemma:cluster1:susyrep}
The integrals in \eqref{eq:strategy:dualrepspa1} and \eqref{eq:strategy:dualrepspa3} can be reorganized to yield
\begin{align}
\label{eq:cluster:startingpoint1}
\lim_{\varepsilon\to 0 }\frac{1}{|\Lambda|}\mathbb{E}[\Tr G^+_\Lambda(E_\varepsilon)] -a_s^+
&= \int \diff \mu_B(M) \eto^{\mathcal{V} (M)} a_0,\\
\label{eq:cluster:startingpoint2}
\lim_{\varepsilon\to 0 }\partial_E^{n}\frac{1}{|\Lambda|}\mathbb{E}[\Tr G^+_\Lambda(E_\varepsilon)]
&= \int \diff \mu_B(M) \eto^{\mathcal{V} (M)} a_0 \prod_{k=0}^n\Str M_{j_k},
\end{align}
where the supersymmetric gaussian measure  is defined by 
\begin{align}\label{eq:cluster:susygausmeasure}
\gls{muM}\coloneqq\diff M \eto^{-\frac{1}{2}\Str (M,B^{-1}M)}=
\diff \mu_B(a,b)\diff\mu_B(\bar\rho,\rho)
\end{align}
with product measure $\diff M = \prod_{j\in\Lambda}\diff M_j = \prod_{j\in\Lambda} \diff a_j 
\diff b_j \diff\bar\rho_j \diff\rho_j$, 
\begin{align*}
\diff\mu_B (\bar\rho, \rho) \coloneqq \prod_{j \in \Lambda} \diff  \bar\rho_j  \diff  \rho_j
\det\left[ \tfrac{2\pi}{B^{-1}}\right] \eto^{-(\bar\rho, B^{-1} \rho)},
\end{align*}
and $\Str (M,B^{-1}M) = \sum_{i,j\in\Lambda} B^{-1}_{ij} \Str(M_i M_j)$,
Finally, all non Gaussian terms in the integral are collected in the exponent
$\mathcal{V}(M) = \sum_{j\in\Lambda} \mathcal{V}(M_j),$ defined by
\begin{align}
\label{eq:cluster:defV(M)}
\begin{split}
 \mathcal{V}(M_j)&\coloneqq
-\ln \Sdet [\bar{\mathcal{E}}-M_j] - \mathcal{E}\Str M_j - \tfrac{\mathcal{E}^2}{2}\Str M_j^2\\
&= \int_0^1 (1-t)^2 \Str \frac{M^3}{(\bar{\mathcal{E}}-tM)^3}\diff t 
=\mathcal{V}_j(a,b)+\bar\rho_j\rho_j D_j,
\end{split}
\end{align}
where $\mathcal{V}_j(a,b)$ is the potential introduced in \eqref{eq:strategy:defV}.
Here we abuse notation by using the same letter for the potential $ \mathcal{V}(M_j)$ and $\mathcal{V}(a,b),$
since the two expressions are closely related.
\end{lemma}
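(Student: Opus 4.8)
The plan is to recognise the determinant $\det[1+DB]$ appearing in the remainder $\mathcal{R}(a,b)$ of Lemma~\ref{lem:strategy:dualrep} as a Grassmann Gaussian integral, and then to bundle the Bosonic weight of $\diff\mu_B(a,b)$ with the resulting Fermionic weight into the single supersymmetric measure $\diff\mu_B(M)$. Since the limit $\varepsilon\to0$ was already performed in Lemma~\ref{lem:strategy:dualrep}, the statement is a purely algebraic reorganisation; no analytic point is involved.

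First I would introduce the Grassmann variables $(\bar\rho_j,\rho_j)_{j\in\Lambda}$ together with the normalised Fermionic Gaussian measure $\diff\mu_B(\bar\rho,\rho)$ of \eqref{eq:cluster:susygausmeasure}, and use the Grassmann Gaussian identity
\[
\int \diff\mu_B(\bar\rho,\rho)\,\eto^{\sum_{j\in\Lambda} D_j(a,b)\,\bar\rho_j\rho_j} = \det[1+DB],
\]
which holds because $D=D(a,b)$ is diagonal, so the insertion is a genuine Gaussian factor, and the Fermionic normalisation $\det[2\pi/B^{-1}]$ combines with $\det[(B^{-1}+D)/2\pi]$ to give $\det[B^{-1}+D]/\det B^{-1}=\det[1+DB]$ (with the sign and ordering conventions fixed in App.~\ref{sec:appsusy}). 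Inserting this identity pointwise in $(a,b)$ into \eqref{eq:strategy:dualrepspa1}, and observing that the observable $a_0$ — and, in the derivative case \eqref{eq:strategy:dualrepspa3}, the factors $a_{j_k}-ib_{j_k}=\Str M_{j_k}$ — contains no Grassmann variables, turns the right-hand side into $\int\diff\mu_B(a,b)\,\diff\mu_B(\bar\rho,\rho)\,\eto^{\sum_j(\mathcal{V}_j(a,b)+\bar\rho_j\rho_j D_j)}\,a_0$.

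It then remains to read off the measure and the potential. A direct supermatrix multiplication gives $\Str(M_iM_j)=a_ia_j+b_ib_j+\bar\rho_i\rho_j+\bar\rho_j\rho_i$, hence $\Str(M,B^{-1}M)=(a,B^{-1}a)+(b,B^{-1}b)+2(\bar\rho,B^{-1}\rho)$ by symmetry of $B^{-1}$; therefore $\diff\mu_B(a,b)\,\diff\mu_B(\bar\rho,\rho)=\diff M\,\eto^{-\frac12\Str(M,B^{-1}M)}=\diff\mu_B(M)$, the $(2\pi)$-powers of the Bosonic and Fermionic normalisations cancelling (this is the supersymmetry: the relevant normalisation superdeterminant equals $1$). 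The surviving exponent is precisely $\mathcal{V}(M)=\sum_j\mathcal{V}(M_j)$ with $\mathcal{V}(M_j)=\mathcal{V}_j(a,b)+\bar\rho_j\rho_j D_j$, which is the last line of \eqref{eq:cluster:defV(M)}. This establishes \eqref{eq:cluster:startingpoint1}; \eqref{eq:cluster:startingpoint2} follows in the same way from \eqref{eq:strategy:dualrepspa3}, the extra factors $\Str M_{j_k}$ being Grassmann-free.

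Finally, to obtain the closed forms on the first two lines of \eqref{eq:cluster:defV(M)} I would compute $\Sdet[\bar{\mathcal{E}}-M_j]$ from the $2\times2$ superdeterminant formula, getting $\Sdet[\bar{\mathcal{E}}-M_j]=\frac{\bar{\mathcal{E}}-a_j}{\bar{\mathcal{E}}-ib_j}-\frac{\bar\rho_j\rho_j}{(\bar{\mathcal{E}}-ib_j)^2}$; taking $-\ln$ (the logarithm truncating because $(\bar\rho_j\rho_j)^2=0$) and subtracting $\mathcal{E}\,\Str M_j+\tfrac{\mathcal{E}^2}{2}\Str M_j^2$, then using $\Str M_j^2=a_j^2+b_j^2+2\bar\rho_j\rho_j$ and $\mathcal{E}\bar{\mathcal{E}}=1$, one recovers $\mathcal{V}_j(a,b)=V(a_j)-V(ib_j)$ (with $V$ as in \eqref{eq:strategy:defV}) as the Bosonic part and $\bar\rho_j\rho_j D_j$ (with $D_j$ as in \eqref{eq:strategy:defD}) as the Grassmann part. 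The representation $\mathcal{V}(M_j)=\int_0^1(1-t)^2\,\Str\frac{M_j^3}{(\bar{\mathcal{E}}-tM_j)^3}\,\diff t$ is the super-analogue of the Taylor-remainder identity already used for $V(x)$, and follows from $-\ln\Sdet[\bar{\mathcal{E}}-M_j]=\Str[-\ln(\bar{\mathcal{E}}-M_j)]$ by expanding $-\ln(\bar{\mathcal{E}}-M_j)$ to second order in $M_j$ with integral remainder. The only genuine difficulty here is bookkeeping: keeping the sign, ordering and normalisation conventions of the super-formalism consistent throughout, so that the Grassmann integration indeed reproduces $\det[1+DB]$; there is nothing substantive beyond that.
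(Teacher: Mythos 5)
Your proof is correct and is exactly the route the paper takes --- the paper's own proof is the single sentence ``replace the determinant in $\mathcal{R}(a,b)$ by a Fermionic integral using \eqref{eq:appsusy:gaus} and collect all remaining terms into the exponent $\mathcal{V}(M)$'' --- and you supply all the details it omits. One small caveat on bookkeeping: expanding $-\ln \Sdet[\bar{\mathcal{E}}-M_j]-\mathcal{E}\Str M_j-\tfrac{\mathcal{E}^2}{2}\Str M_j^2$ actually yields the Grassmann term $-\bar\rho_j\rho_j D_j$, which is the sign that reproduces $\det[1+DB]$ under $\int\diff\mu_B(\bar\rho,\rho)$ and the sign the paper uses in all later formulas, so your insertion should read $\eto^{-\sum_j D_j\bar\rho_j\rho_j}$ rather than $\eto^{+\sum_j D_j\bar\rho_j\rho_j}$ (the $+$ in the last equality of \eqref{eq:cluster:defV(M)} appears to be a typo in the paper itself).
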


\begin{rem}
This representation simplifies the cluster expansion since the covariance appears only 
in the Gaussian measure.
Note that the normalization constants for the real and Fermionic variables above cancel each other.
\end{rem}
\begin{proof}
We replace the determinant in $ \mathcal{R}(a,b)$ by  a Fermionic integral using \eqref{eq:appsusy:gaus}
and collect all remaining terms into the exponent $\gls{V(M)}.$
\end{proof}
The following result is the analog of  Lemma \ref{lem:preliminary:integrationbyparts} in this
new formalism.
\begin{lemma}
The expressions \eqref{eq:cluster:startingpoint1}  and \eqref{eq:cluster:startingpoint2}
can be reorganized as follows
\begin{align}
\label{eq:cluster:afterip1}
\lim_{\varepsilon\to 0 }\frac{1}{|\Lambda|}\mathbb{E}[\Tr G^+_\Lambda(E_\varepsilon)] -a_s^+&=
\sum_{l_0\in\Lambda} B_{0l_0}\int\diff\mu_B(M) \partial_{a_{l_0}} \eto^{\mathcal{V}(M)}
\eqqcolon \sum_{l_0\in\Lambda}B_{0l_0} F_\Lambda^{(l_0)},\\
\lim_{\varepsilon\to 0 }\partial_E^{n}\frac{1}{|\Lambda|}\mathbb{E}[\Tr G^+_\Lambda(E_\varepsilon)]
&=
\sum_{l_0,\dots,l_n} B_{0l_0}\prod_{m=1}^n B_{j_m l_m}\int\diff\mu_B(M)\prod_{m=1}^n 
\Str\partial_{M_{l_m}} \partial_{a_{l_0}} \eto^{\mathcal{V}(M)}\nonumber\\
&\eqqcolon \sum_{l_0,\dots,l_n}B_{0l_0}\prod_{m=1}^n B_{j_m l_m}F_\Lambda^{(l_0,\dots,l_n)},
\label{eq:cluster:afterip2}
\end{align}
where $\partial_{M_j}$ is defined in \eqref{eq:appproofsusy:partialM}. 
\end{lemma}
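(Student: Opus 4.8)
The plan is to prove this by integration by parts in the supersymmetric Gaussian measure $\diff\mu_B(M)$, exactly mirroring Lemma \ref{lem:preliminary:integrationbyparts} but now in the unified bosonic–fermionic language. The key identity is the Gaussian integration-by-parts rule for the measure \eqref{eq:cluster:susygausmeasure}: for any (super)function $G$ of the supermatrices, $\int\diff\mu_B(M)\,a_{j}\,G = \sum_{l\in\Lambda}B_{jl}\int\diff\mu_B(M)\,\partial_{a_l}G$, which follows because $a_j$ is the bosonic entry of $M_j$ paired through $B^{-1}$ in the quadratic form $\Str(M,B^{-1}M)$; note that the Grassmann part of the measure is unaffected by a $\partial_{a_l}$ and contributes nothing here. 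First I would write $a_0 = \sum_{l_0}B_{0l_0}\partial_{a_{l_0}}(\cdot)$ acting on the rest of the integrand. Since the rest of the integrand in \eqref{eq:cluster:startingpoint1} is just $\eto^{\mathcal V(M)}$, we immediately get the first line \eqref{eq:cluster:afterip1}.

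For the derivative formula \eqref{eq:cluster:afterip2}, I would proceed as in the proof of Lemma \ref{lem:preliminary:integrationbyparts}, inductively integrating by parts against each factor $\Str M_{j_k} = a_{j_k} - ib_{j_k}$. The new point, compared with the purely bosonic argument, is how to package the "$\partial_{x_l}$" steps. For $k\geq 1$, integrating by parts in the variable pairing with $\Str M_{j_k}$ produces a supermatrix derivative $\Str\partial_{M_{l_k}}$ contracted via $B_{j_kl_k}$, where $\partial_{M_j}$ is the differential operator from \eqref{eq:appproofsusy:partialM}; the crucial cancellation is that $\Str\partial_{M_{l_k}}(\Str M_{j_m})=0$ for all $k,m$ (the superderivative of a supertrace vanishes), which is the SUSY analog of $\partial_{x_i}x_j=0$ used in the bosonic proof. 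This kills every term where a derivative would land on one of the remaining $\Str M_{j_m}$ factors, so only the term where all derivatives act on $a_0\eto^{\mathcal V(M)}$ survives; the remaining $a_0$ is then removed by one final integration by parts, yielding the prefactor $B_{0l_0}$ and the operator $\partial_{a_{l_0}}$ sitting innermost, as written.

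As in the bosonic case, the $n=1$ case is special because integrating $a_0\,\Str M_{j_1}$ by parts against $\Str M_{j_1}$ first produces a contact term $\delta_{j_1 l_0}\int\diff\mu_B(M)\eto^{\mathcal V(M)}$, and one uses the normalization $\int\diff\mu_B(M)\eto^{\mathcal V(M)}=1$ (which holds because the Grassmann integral reconstructs exactly $\det[1+DB]$, giving back $\mathcal R(a,b)$, and $\int\diff\mu_B\,\mathcal R=1$). I would either display the $n=1$ formula with this extra $\delta_{j_1,l_1}$ term explicitly and then note, as in the remark following Lemma \ref{lem:preliminary:integrationbyparts}, that it is harmless, or simply absorb it. The order of operations is: (i) record the Gaussian IBP rule for $\diff\mu_B(M)$ and the superderivative rules $\Str\partial_{M_{l}}\Str M_{j}=0$; (ii) do the $n=0$ case in one line; (iii) handle $n=1$ including the contact term; (iv) run the induction for $n\geq 2$.

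The main obstacle, and the only genuinely non-routine point, is bookkeeping the supersymmetric integration by parts correctly — ensuring the normalization of the Grassmann Gaussian measure is respected when a derivative acts on it, verifying the sign conventions for $\partial_{M_j}$ from \eqref{eq:appproofsusy:partialM} and the supertrace so that the contracted object is genuinely $B_{j_kl_k}\Str\partial_{M_{l_k}}$ and not some variant, and checking that no spurious Grassmann-induced contact terms appear beyond the single bosonic one in the $n=1$ case. Everything else is a direct transcription of the already-proved Lemma \ref{lem:preliminary:integrationbyparts}.
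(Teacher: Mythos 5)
Your proposal is correct and follows essentially the same route as the paper: its proof consists precisely of ``apply integration by parts as in Lemma \ref{lem:preliminary:integrationbyparts}'' together with the relations $(\Str \partial_{M_i}) \Str M_j^n = n \delta_{ij} \Str M_j^{n-1}$ (whose $n=1$ case is exactly your cancellation $\Str\partial_{M_l}\Str M_j=\delta_{lj}\Str\mathds{1}=0$) and $[\Str \partial_{M_i},\Str \partial_{M_j}]= 0$. Your treatment of the $n=1$ contact term via the normalization $\int\diff\mu_B(M)\,\eto^{\mathcal{V}(M)}=1$ matches the remark following Lemma \ref{lem:preliminary:integrationbyparts}, where that term is noted to be harmless and subsequently neglected.
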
  
\begin{proof}
We apply integration by parts as in Lemma \ref{lem:preliminary:integrationbyparts}.
 and use the following relations:
$(\Str \partial_{M_i}) \Str M_j^n = n \delta_{ij} \Str M_j^{n-1} $
and
$[\Str \partial_{M_i},\Str \partial_{M_j}]= 0.$
\end{proof}
Note that  $\partial_{a_{l_0}}$ moves the local observable $a_{0}$ in $0$ to the local observable
$\partial_{a_{l_0}}\mathcal{V}(M_{l_0})$  at position  $l_0.$  
Moreover, the $B$-factors enable summation over $j_1,\dots,j_n$ and $l_0$.

\begin{rem}
In the remaining we will show that applying a cluster expansion to 
$F_\Lambda^{(l_0)}$ and $F_\Lambda^{(l_0,\dots,l_n)}$ defined in
\eqref{eq:cluster:afterip1} and \eqref{eq:cluster:afterip2} 
yields the stated estimate for the 
semicircle law \eqref{eq:strategy:semicircle}, but not for the derivatives  \eqref{eq:strategy:derivative},
since in this last case one may not be able to extract enough fine structure to sum  over the indices $l_1,\dots,l_n$.
This happens when two or more of the $l_k$ coincide and we obtain linear terms in
$M$ from the derivatives $\prod_{m=1}^n \Str\partial_{M_{l_m}} \partial_{a_{l_0}} \exp(\mathcal{V}(M))$.
In this case, we need to apply again integration by parts on the resulting field factors
before performing the cluster expansion.
Nevertheless, for clarity,  we first prove the cluster expansion only for \eqref{eq:cluster:afterip2}.
It is easy to see that the same approach works for the (more involved) expression we obtain after some steps 
of integration by parts. A detailed description of the procedure can be found
in Section \ref{subsec:cluster:derivatives} below.
\end{rem}

The following lemma will simplify the cluster expansion, since the integrals over 
regions without observable contributions turn out to be trivial.

\begin{lemma}
\label{lem:cluster:complementY}
If we restrict the functional integrals $F_\Lambda^{(l_0)}$ and $F_\Lambda^{(l_0,\dots,l_n)}$ defined in
\eqref{eq:cluster:afterip1} and \eqref{eq:cluster:afterip2}  to a set $Y^C= \Lambda\backslash Y$ not containing $l_0$, 
we have for  $m>0$ and  indices $l_j\in Y^C$ for $j=1,\dots,m$ that
\begin{align*}
F_{Y^C}&=\int\diff\mu_{B_{Y^C}}(M) \ \eto^{\sum_{j\in Y^C}\mathcal{V}(M_j)}=1,\\
F^{(l_1,\dots,l_m)}_{Y^C}&=\int\diff\mu_{B_{Y^C}}(M)\prod_{j=1}^m
\Str\partial_{ M_{l_j}}\eto^{\sum_{j\in Y^C}\mathcal{V}(M_j)}=0,
\end{align*}
where $B_{Y^C}$ is the covariance restricted to the volume $Y^C$.
\end{lemma}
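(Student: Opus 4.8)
The plan is to exploit the supersymmetric structure of the measure $\diff\mu_{B_{Y^C}}(M)$ together with the ``perfect cancellation'' between the Bosonic and Fermionic normalization constants that was already noted in the remark following Lemma \ref{lemma:cluster1:susyrep}. The key observation is that, since $Y^C$ does not contain $l_0$, the restricted integrand depends only on the fields $M_j$ with $j \in Y^C$ and no observable $a_0$ is present, so $F_{Y^C}$ is exactly the supersymmetric partition function of the model restricted to $Y^C$. Thus the plan is: (i) show $\int\diff\mu_{B_{Y^C}}(M)\,\eto^{\mathcal{V}(M)}=1$ directly; (ii) deduce the vanishing of $F^{(l_1,\dots,l_m)}_{Y^C}$ from (i) by observing that it is obtained from (i) by applying the derivations $\Str\partial_{M_{l_j}}$, which are generators of the internal supersymmetry of the integrand and annihilate it.

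First I would prove (i). Going back to the derivation in Lemma \ref{lem:strategy:dualrep} and Lemma \ref{lemma:cluster1:susyrep}, the identity $\int\diff\mu_B(a,b)\,\mathcal{R}(a,b)=1$ (which holds in any volume, being the dual representation of $\mathbb{E}[1]=1$, cf. \eqref{eq:strategy:avarage1}) after reintroducing the Fermionic variables reads exactly $\int\diff\mu_B(M)\,\eto^{\mathcal{V}(M)}=1$. The same argument applies verbatim with $\Lambda$ replaced by any subset $Y^C$ and $B$ replaced by $B_{Y^C}$, since nothing in that derivation used the specific shape of the volume: one writes $\eto^{\mathcal{V}(M_j)}=\Sdet[\bar{\mathcal{E}}-M_j]^{-1}\eto^{-\mathcal{E}\Str M_j-\frac{\mathcal{E}^2}{2}\Str M_j^2}$, absorbs the Gaussian factors into a shifted covariance, and the resulting superdeterminant of $1+(\text{diagonal})\cdot B_{Y^C}$ is the reintroduced Fermionic integral; normalizations cancel because $\det[B_{Y^C}^{-1}/2\pi]$ for the Bosonic sector and $\det[2\pi/B_{Y^C}^{-1}]$ for the Fermionic sector are reciprocal. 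This gives $F_{Y^C}=1$.

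Next, for the second identity, I would write $F^{(l_1,\dots,l_m)}_{Y^C}=\int\diff\mu_{B_{Y^C}}(M)\prod_{j=1}^m\Str\partial_{M_{l_j}}\,\eto^{\sum_{j\in Y^C}\mathcal{V}(M_j)}$ and integrate by parts the operators $\Str\partial_{M_{l_j}}$ back onto the Gaussian weight, exactly as in the proof of the lemma preceding Lemma \ref{lem:cluster:complementY} (using $(\Str\partial_{M_i})\Str M_j^n=n\delta_{ij}\Str M_j^{n-1}$ and $[\Str\partial_{M_i},\Str\partial_{M_j}]=0$). When $\Str\partial_{M_{l_j}}$ acts on $\eto^{-\frac{1}{2}\Str(M,B_{Y^C}^{-1}M)}$ it produces a factor $\sum_{k}(B_{Y^C}^{-1})_{l_jk}\Str M_k$, i.e.\ a linear field insertion; one then keeps integrating by parts, and the whole expression collapses to a finite sum of integrals of the form $\int\diff\mu_{B_{Y^C}}(M)\,(\text{polynomial in }\Str M_k,\ k\in Y^C)\,\eto^{\sum_j\mathcal{V}(M_j)}$. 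Each such integral is a supersymmetric average of a function of the ``radial'' coordinates $\Str M_k$ against a measure with full internal $U(1|1)$ symmetry, so it vanishes by the standard supersymmetric localization / Parisi–Sourlas argument: the integrand is annihilated by the supersymmetry generator, hence the integral localizes to its value ``at the origin'', which is zero for $m>0$ because every surviving term carries at least one factor $\Str M_k$. Equivalently, and more elementarily, one can observe directly that $F^{(l_1,\dots,l_m)}_{Y^C}=\prod_{j=1}^m\Str\partial_{M_{l_j}}\,F_{Y^C}$ once the derivations are pulled outside the integral (they commute with $\int\diff\mu_{B_{Y^C}}$ up to boundary terms that vanish for Grassmann integration and for the rapidly decaying Bosonic integrand), and $F_{Y^C}=1$ is constant in $M$, so its derivative is $0$.

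The main obstacle I anticipate is justifying the manipulation ``$F^{(l_1,\dots,l_m)}_{Y^C}=\prod_j\Str\partial_{M_{l_j}}F_{Y^C}$'' rigorously, i.e.\ that the super-derivations may be moved outside the convergent complex Bosonic integral without picking up extra terms. The Fermionic part is purely algebraic and causes no trouble, but on the Bosonic side one must check that the relevant integrands (products of $a_k,b_k$ times $\eto^{\Tr DB+\mathcal{V}}$) and their first derivatives are integrable against $\diff\mu_{C_{Y^C}}$ uniformly — this is exactly the content of the finite-volume bounds of Lemma \ref{lem:preliminary:estimates} applied on the finite region $Y^C$ (or, if $Y^C$ is large, of Theorem \ref{theo:preliminary:largevolume}), so it does not require anything new, only care in invoking what is already available. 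With that integrability in hand, Fubini and the chain rule give the result.
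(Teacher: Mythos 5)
Your treatment of the first identity is fine and close in spirit to the paper's: one undoes the Hubbard--Stratonovich step, so that $F_{Y^C}$ becomes $\mathbb{E}_{\tilde J}[1]=1$ for a Gaussian band ensemble with the modified covariance $\tilde J=(B_{Y^C}^{-1}+\mathcal{E}^2)^{-1}$; the only technical points you gloss over are that $B_{Y^C}^{-1}$ is no longer of the form $-W^2\Delta+(1-\mathcal{E}^2)$, so one must check $\re\tilde J^{-1}\geq 1$ (Lemma \ref{lem:appcov:mass}) and the well-posedness of the resulting integrals. The problem is the second identity. Your ``more elementary'' route is wrong: $F_{Y^C}$ is a number, not a function of $M$, and the identity $F^{(l_1,\dots,l_m)}_{Y^C}=\prod_j\Str\partial_{M_{l_j}}F_{Y^C}$ is not a legitimate interchange of derivative and integral. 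The boundary terms do vanish, but what Gaussian integration by parts actually yields is $\int\diff\mu_{B_{Y^C}}(M)\,\Str\partial_{M_l}f=\sum_k(B_{Y^C}^{-1})_{lk}\int\diff\mu_{B_{Y^C}}(M)\,\Str M_k\,f$, i.e.\ a linear field insertion, not zero. Since you single out precisely this manipulation as ``the main obstacle'' and propose to cure it with integrability estimates, you are defending a step that fails for algebraic, not analytic, reasons.

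Your first route does reduce the claim correctly (using $\Str\partial_{M_i}\Str M_k=\delta_{ik}\Str 1=0$ to kill the cross terms) to showing $\int\diff\mu_{B_{Y^C}}(M)\,\bigl(\prod_j\Str M_{k_j}\bigr)\eto^{\mathcal{V}(M)}=0$ for a nonempty product, but at that point you appeal to a supersymmetric localization theorem that is neither stated nor proved anywhere in the paper; to make this rigorous you would have to formulate the localization lemma precisely and verify its invariance and regularity hypotheses for $\eto^{\mathcal{V}(M_j)}$. The paper avoids this entirely: after reintroducing the supervector $\Phi$ for the superdeterminant, it rewrites $\Str\partial_{M_j}$ as $\partial_{\bar{\mathcal{E}}}$ (integration by parts in $\Phi$) and $\Str M_j$ as $\sum_k\mathcal{E}^2\tilde J^{-1}_{jk}\partial_{\bar{\mathcal{E}}}$ (integration by parts in $M$ and then in $\Phi$, cf.\ \eqref{eq:appproofsusy:intbyparts}), so that $F^{(l_1,\dots,l_m)}_{Y^C}$ becomes a combination of $\partial_{\bar{\mathcal{E}}}^{m}\mathbb{E}_{\tilde J}[1]$ with $m>0$, which vanishes because $\mathbb{E}_{\tilde J}[1]\equiv 1$ is constant in $\bar{\mathcal{E}}$. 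You should either adopt that mechanism or supply and verify the localization lemma; as written, the vanishing of $F^{(l_1,\dots,l_m)}_{Y^C}$ is not established.
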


\begin{proof}
Using the definition of $\mathcal{V}(M)$, we can write
\begin{align*}
F_{Y^C}
 =& \int \diff M  
 \eto^{-\frac{1}{2} \Str( M, \tilde {J}^{-1}M)} 
 \eto^{(\mathcal{E}, \Str M)} 
 \int \diff \bar\Phi  \diff \Phi  \eto^{i( \bar\Phi, (\bar{\mathcal{E}} - M) \Phi)},\\
 F^{(l_1,\dots,l_m)}_{Y^C}
 =& \int \diff M  
 \eto^{-\frac{1}{2} \Str( M, \tilde {J}^{-1}M)} 
 \eto^{(\mathcal{E}, \Str M)} \\
&\times \left(\sum_{P_1,P_2} \prod_{j_1\in P_1}(-\mathcal{E}^2\Str M_{j_1}) \prod_{j_2\in P_2}\Str \partial_{M_{j_2}}\right)
 \int \diff \bar\Phi  \diff \Phi  \eto^{i( \bar\Phi, (\bar{\mathcal{E}} - M) \Phi)},
 \end{align*}
where $\tilde {J}^{-1} = B_{Y^C}^{-1}+ \mathcal{E}^2$ and we insert a superintegral  with measure 
$\diff \bar\Phi \diff \Phi = \prod_{j\in\Lambda}\diff \bar z_j \diff z_j \diff\bar\chi_j\diff\chi_j$
for the superdeterminant. In the second line we sum over all partitions $P_1 \cup P_2 = \{1,\dots, m\}$ with
$P_1\cap P_2 = \emptyset$. Note that we can rewrite both $\Str \partial_{M_j}$ and
$\Str M_j$ using integration by parts (in $\Phi$ for the first, in $M$ and then in  $\Phi$ for the 
second) as 
$\partial_{\bar{\mathcal{E}}}$ and $\sum_k \mathcal{E}^2\tilde{J}^{-1}_{jk}\partial_{\bar{\mathcal{E}}}$.
For a general $\Lambda'\subset\Lambda$, the restriction of $B^{-1}$ to $\Lambda'$ 
does not have the form $-W^2\Delta + (1-\mathcal{E}^2)$, but $\re\tilde {J}^{-1}\geq 1$ still 
holds (cf. Lemma \ref{lem:appcov:mass}). Hence we can interchange the measures and perform
integration over $M$ by completing the square. Inserting \eqref{eq:appproofsusy:averageH}, we obtain
\begin{align*}
F_{Y^C}
& =\mathbb{E}_{\tilde J}\left[ \int \diff \bar\Phi  \diff \Phi  \eto^{i (\bar\Phi, (\mathcal{E} +A-H) \Phi)}\right]
= \mathbb{E}_{\tilde J}[1] = 1,\\
F^{(l_1,\dots,l_m)}_{Y^C}
&=  \left(\sum_{P_1,P_2} \prod_{j_1\in P_1}\sum_{k_{j_1}} \mathcal{E}^2\tilde{J}^{-1}_{j_1 k_{j_{1}}}
\right)\partial_{\bar{\mathcal{E}}}^{m}
\mathbb{E}_{\tilde J}[1] = 0
\end{align*}
where $A$ is a diagonal matrix with $A_j = \mathcal{E}\sum_{k} \tilde{J}_{jk}$.
Note that $\im (\mathcal{E} +A-H)>0,$ since $\re \mathcal{E}$ and $ \im \tilde{J}$ have
the same sign. Hence the integrals above are well-defined.
\end{proof}

\subsection{Cluster expansion}
\label{subsec:cluster}
In the following, we prove a cluster expansion for the integrals 
$F_\Lambda^{(l_0)}$ and $F_\Lambda^{(l_0,\dots,l_n)}$ defined in \eqref{eq:cluster:afterip1} and \eqref{eq:cluster:afterip2}.
We partition a large but finite volume $\Lambda$ into disjoint cubes $\triangle$ of fixed volume 
$W^2 (\ln W)^\alpha$.
By interpolating the covariance, the functional integral over $\Lambda$ can be rewritten as a sum
of local integrals over unions of these cubes called polymers.
Here, we use a non-standard cluster expansion interpolating in the real covariance $C$ instead 
of $B$ and setting $B(s)=(C(s)^{-1}+i\sigma_Em_i^2)^{-1}$ (cf. also the remark below). 
This is done in an inductive procedure.
Because of the interpolation in $C$, we extract a ``multi-link'' consisting of three edges instead of a single edge in each step.

Before stating the result, we give a few notations:
The volume $\Lambda$ is divided into cubes $\gls{triangle}$ of size $W^2 (\ln W)^\alpha$. Denote by $\gls{triangle0}$ the root cube containing $l_0$.
In each step we extract a \emph{generalized cube} $\gls{tildetriangle}=(\triangle,\triangle',\triangle'')$ connected via a multi-link $( \gls{ijkk'} )$,
where $k'\in\triangle,i\in\triangle', j\in\triangle''$ and $k$ is in the volume already extracted.
The links $(i,k)$ and $(k',j)$ are ``weak'' while  $(k,k')$ is ``strong'' in the sense that it prescribes the tree structure. The collection of $\triangle_0$ and the 
extracted generalized cube is called the \emph{generalized polymer} $\gls{Ytilde} =(\triangle_0,\tilde\triangle_1,\dots,\tilde\triangle_r)$.

\begin{figure}[h]
\centering
\begin{tikzpicture}
\foreach \x/\y/\labl/\p/\q in {
0/0/$\triangle_0 = \triangle_1'$/0/-0.8,
2/0/$\triangle_1=\triangle_1''$/1/-0.8,
5/0/$\triangle_0$/-0.3/0,
8/0/$\triangle_1=\triangle_1''$/1.3/-0.8,
7/-1/$\triangle_1'$/1.3/-0.5,
11/0/$\triangle_0$/-0.3/0,
13/0/$\triangle_1$/-0.3/0,
11/-2/$\triangle_1'$/-0.3/0,
13/-3/$\triangle_1''$/-0.3/0,
2/-2/$\triangle_0$/-0.3/0,
4/-3/$\triangle_1'=\triangle_1''$/-0.8/0,
6/-3/$\triangle_1$/1.3/0
}
\draw (\x,\y)  node [label={[shift={(\p,\q)}]\small{\labl}}]{} rectangle ++ (1,1);

\foreach \a/\b/\c/\d/\e/\f/\g/\h/\l/\m/\n/\o/\p in {
0.25/0.25/0.75/0.75/2.25/0.25/2.75/0.75/right/left/right/left/0,
7.25/-0.75/5.75/0.25/8.25/0.75/8.75/0.25/right/left/right/left/0,
11.75/-1.75/11.25/0.25/13.75/0.25/13.25/-2.25/left/above/above/below/0,
4.25/-2.25/2.75/-1.25/6.25/-2.25/4.75/-2.25/below/left/right/below/0
}
{
\draw [-](\a,\b) -- (\c,\d) -- (\e,\f) -- (\g,\h);
\draw[fill=white] (\a,\b) node [\l]{\tiny{$i_1$}}circle [radius=0.05];
\draw[fill=black] (\c,\d) node [\m]{\tiny{$k_1$}}circle [radius=0.05];
\draw[fill=black] (\e,\f) node [\n]{\tiny{$k_1'$}}circle [radius=0.05];
\draw[fill=white] (\g,\h) node [\o]{\tiny{$j_1$}}circle [radius=0.05];
}
\end{tikzpicture}
\caption{Some examples for the first generalized cube
$\tilde{ \triangle}_1 = (\triangle_1,\triangle_1',\triangle_1'')$
extracted by the first link $l_1=(i_1,k_1,k_1',j_1)$ with
$k_1\in\triangle_0$, $k_1'\in\triangle_1$, $i_1\in\triangle_1'$ and $j_1\in\triangle_1''$.
The cubes may coincide with the unique constraint $\triangle_0\neq\triangle_1$.}
\label{fig:generalizedcubes}
\end{figure}
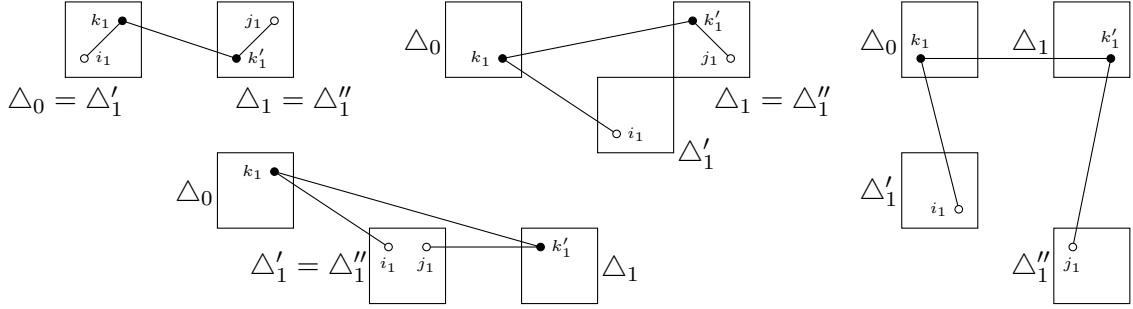

\begin{lemma} \label{lem:cluster:expansion}
For $(\ast)$ equals to the set of fixed indices $(l_0)$ and $(l_0,\dots,l_n)$, respectively, 
we can write $F_\Lambda^{(l_0)}$ and $F_\Lambda^{(l_0,\dots,l_n)}$ defined in
\eqref{eq:cluster:afterip1} and \eqref{eq:cluster:afterip2}  as
\begin{align*}
\begin{split}
F_\Lambda^{(\ast)} =& \sum\limits_{\substack{\tilde Y: \\(\ast)\in \tilde Y}}
\sum\limits_{\substack{\gls{T} \text{ on }\tilde Y,\\ |T|=r}} 
\int_{[0,1]^{r}} \diff s_r\cdots \diff s_1 M_T(s)
\\  &\times
\sum\limits_{\substack{(i_q,j_q)\\(k_q,k_q')}}
\prod_{q=1}^r G_q(s)_{i_qk_q}C_{k_q k_q'}G_q(s)_{k_q'j_q}
\ F_{{T}}^{(\ast)}[s](\{i_q,j_q\}_{q=1}^r) ,
\end{split}
\end{align*}
where $\tilde Y =(\triangle_0,\tilde\triangle_1,\dots,\tilde\triangle_r)$ is a generalized polymer
consisting of the root cube $\triangle_0$ (containing $l_0$)
and $r$ ordered generalized cubes $\tilde \triangle_q$, $q=1,\dots,r$.
Each $\tilde\triangle_q = (\triangle_q,\triangle_q',\triangle_q'')$ is a collection of three cubes 
not necessarily disjoint with the unique constraint 
$\triangle_q \cap (\triangle_0 \cup \bigcup_{p=1}^{q-1}\tilde\triangle_p)=\emptyset$.
For $(\ast)=(l_0,\dots,l_n)$, the generalized polymer $\tilde Y$ needs to contain all
the indices $l_0,\dots,l_n$.
We sum over all ordered trees $T$ on the generalized polymer, such that the $q$-th tree link connects
$\tilde\triangle_q$ with $\triangle_0 \cup \bigcup_{p=1}^{q-1}\tilde\triangle_p$.

Each tree link  consists of three lines $(i_q,k_q)$, $(k_q,k_q')$ and $(k_q',j_q)$, where
the $k_q-k_q'$ connection forms the tree structure. Precisely,
$k_q\in\triangle_0 \cup \bigcup_{p=1}^{q-1}\tilde\triangle_p$ is in the generalized polymer
 up to index $q-1$,
$k_q'\in\triangle_q$, $i_q\in\triangle_q'$ and $j_q\in\triangle_q''$.
Note that the position of $\triangle'$ and $\triangle''$ is arbitrary and they can coincide with 
each other or an already extracted cube.
For each link, we have an interpolation parameter $0\leq s_q\leq 1$.
The functional integrals $F_{{T}}^{(l_0)}$ and $ F_{T}^{(l_0,\dots,l_n)}$ are defined by
\begin{align}
\label{eq:cluster:functionalintegral1}
F_{{T}}^{(l_0)}[s](\{i_q,j_q\})
\coloneqq & \int \diff \mu_{B(s)}(M) \prod_{q=1}^r\Str(\partial_{M_{i_q}}\partial_{ M_{j_q}}) 
\left[\partial_{a_{l_0}}\eto^{\mathcal{V}(M)}\right],\\
\nonumber
F_{{T}}^{(l_0,\dots,l_n)}[s](\{i_q,j_q\})
\coloneqq & \int \diff \mu_{B(s)}(M) \prod_{q=1}^r\Str(\partial_{M_{i_q}}\partial_{ M_{j_q}})
\left[\prod_{m=1}^n\Str \partial_{M_{l_m}}\partial_{a_{l_0}}\eto^{\mathcal{V}(M)}\right],
\end{align}
where $ \gls{B(s)} \coloneqq (C(s)^{-1}+i\sigma_Em_i^2)^{-1}$ and
$\gls{C(s)}_{ij} \coloneqq s_{ij}C_{ij}$ with 
\begin{align*}
s_{ij} \coloneqq 
\begin{cases}
1 & \text{if } \exists \ q: i,j\in\tilde\triangle_q,\\
\prod_{p=q'}^{q-1} \gls{sp}& \text{if } \exists \ q'<q: i\in\tilde{\triangle}_q \text{ and }
j\in\tilde{\triangle}_{q'} \text{ or vice versa},\\
0 & \text{otherwise},
\end{cases}
\end{align*}
$M_T(s)$ is a product of $s$ factors extracted by the derivative $\partial_{s_q} B(s)$.
The propagator $\gls{Gq}^{-1} \coloneqq (1+i\sigma_Em_i^2 C(s))_{|s_p=1 \forall p > q}$
depends only on the first $q$ interpolation parameters $s_1,\dots,s_q$.
\end{lemma}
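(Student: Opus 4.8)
The plan is to run a rooted Brydges--Kennedy tree (forest) expansion \cite{brydgesleshouche,AR1994} on the supersymmetric integrals \eqref{eq:cluster:afterip1} and \eqref{eq:cluster:afterip2}, with the non-standard feature that the interpolation is carried on the \emph{real} covariance $C$ rather than directly on $B$. First I would partition $\Lambda$ into the disjoint cubes of volume $W^{2}(\ln W)^{\alpha}$, let $\triangle_{0}$ be the cube containing $l_{0}$, and introduce the interpolated real covariance $C(s)_{ij}=s_{ij}C_{ij}$ together with $B(s)\coloneqq(C(s)^{-1}+i\sigma_E m_i^2)^{-1}$. The algebraic identity that generates the multi-link structure is
\begin{align*}
B(s)=G(s)\,C(s)=C(s)\,G(s),\qquad G(s)\coloneqq(1+i\sigma_E m_i^2 C(s))^{-1},
\end{align*}
which, by the resolvent identity, yields $\partial_{s_q}B(s)=G(s)\,(\partial_{s_q}C(s))\,G(s)$. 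Since $\partial_{s_q}C(s)$ only involves bare entries $C_{kk'}$ joining an ``old'' site $k$ (in the part of the polymer already extracted) to a ``new'' site $k'$, up to a residual monomial in the $s$-parameters, the two flanking copies of $G(s)$ turn this into the triple $G(s)_{ik}\,C_{kk'}\,G(s)_{k'j}$ appearing in the statement --- hence one extracts three lines instead of one at each step. Interpolating $B$ directly would extract a single link in $B$, which in $d=2$ lacks a usable decay.

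The inductive step rests on the interpolation (``heat'') identity for the supersymmetric Gaussian measure \eqref{eq:cluster:susygausmeasure}: for a smooth functional $\Psi$,
\begin{align*}
\partial_{s_q}\!\int\diff\mu_{B(s)}(M)\,\Psi(M)
=\sum_{i,j\in\Lambda}(\partial_{s_q}B(s))_{ij}\int\diff\mu_{B(s)}(M)\,\Str(\partial_{M_i}\partial_{M_j})\,\Psi(M),
\end{align*}
together with $[\Str\partial_{M_i},\Str\partial_{M_j}]=0$; both follow from Gaussian integration by parts as in Appendix~\ref{sec:appsusy} (a combinatorial constant is absorbed into $M_{T}(s)$). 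Starting from $\Psi(M)=\partial_{a_{l_0}}\eto^{\mathcal{V}(M)}$, resp.\ $\Psi(M)=\prod_{m=1}^{n}\Str\partial_{M_{l_m}}\,\partial_{a_{l_0}}\eto^{\mathcal{V}(M)}$, I would introduce one interpolation parameter at a time, the $q$-th one interpolating, as one block, the coupling of the polymer built through step $q-1$ to its complement, with the standard rooted tree parametrization recorded by $s_{ij}$ in the statement. The fundamental theorem of calculus splits each integral into a boundary term at $s=0$, in which the current polymer is decoupled from its complement, and an integrated derivative term which, by the two displays above, brings down exactly one multi-link $G(s)_{ik}C_{kk'}G(s)_{k'j}$ (with $k$ in the current polymer, $k'$ in a fresh cube $\triangle_{q}$), two super-derivatives $\Str(\partial_{M_i}\partial_{M_j})$ acting on the integrand at the outer sites $i\in\triangle'_{q}$, $j\in\triangle''_{q}$ (which may coincide with old cubes or with $\triangle_{q}$), and a residual $s$-monomial, while enlarging the polymer by the generalized cube $\tilde\triangle_{q}=(\triangle_{q},\triangle'_{q},\triangle''_{q})$. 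Recursing on the derivative term and collecting all choices produces the ordered sum over generalized polymers $\tilde Y=(\triangle_{0},\tilde\triangle_{1},\dots,\tilde\triangle_{r})$ and ordered trees $T$ whose strong links $(k_{q},k'_{q})$ carry the tree structure; the residual $s$-monomials collect into $M_{T}(s)$, and one checks from the $s_{ij}$ bookkeeping that at step $q$ (all later parameters still $=1$) the dressing propagators equal $G_{q}(s)=(1+i\sigma_E m_i^2 C(s))^{-1}\big|_{s_p=1,\,p>q}$, depending only on $s_1,\dots,s_q$.

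The recursion is terminated, at some stage, by selecting the boundary term, which factorizes the integral over the decoupled complement $Y^{C}=\Lambda\setminus\tilde Y$. By Lemma~\ref{lem:cluster:complementY} this complement integral equals $1$ when it carries no super-derivative, and equals $0$ as soon as it carries one --- a stray link super-derivative $\Str\partial_{M_{i_q}}$ or $\Str\partial_{M_{j_q}}$, or a stray $\Str\partial_{M_{l_m}}$ in the derivative expansion. This kills every term in which some fixed index $l_{0},\dots,l_{n}$ or some link endpoint lies outside $\tilde Y$, forcing $\tilde Y$ to contain all of them, and collapses the complement factor to $1$, leaving exactly the local functional integrals $F_{T}^{(l_0)}[s]$ and $F_{T}^{(l_0,\dots,l_n)}[s]$ of the statement, in which the $2r$ link super-derivatives, the $n$ observable super-derivatives $\Str\partial_{M_{l_m}}$, and $\partial_{a_{l_0}}$ all act on $\eto^{\mathcal{V}(M)}$ under $\diff\mu_{B(s)}$. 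Throughout one uses, as in Lemma~\ref{lemma:cluster1:susyrep}, that $\diff\mu_{B(s)}$ is normalized and that the Bosonic and Grassmann normalization constants cancel.

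The step I expect to be the main obstacle is controlling the non-linearity of $s\mapsto B(s)$: unlike the standard Brydges--Kennedy setting, where the interpolated covariance is multilinear in the $s$-parameters and each derivative cleanly extracts a single edge, here one must argue via the resolvent identity and the factorization $B(s)=G(s)C(s)$ that \emph{each} $\partial_{s_q}$ produces exactly one bare entry $C_{k_q k'_q}$ joining an old to a new site, flanked by $G(s)$-propagators, so that the iteration stays a genuine forest expansion and the $s$-dependence of the dressing is recorded correctly as $G_{q}(s)$ rather than the full $G(s)$. A secondary, mostly bookkeeping, point is the superanalysis combinatorics --- the Grassmann signs in $\Str(\partial_{M_i}\partial_{M_j})$, the commutation of the operators $\Str\partial_{M_i}$, and the exact collapse of the complement integral via Lemma~\ref{lem:cluster:complementY}; for the derivative expansion \eqref{eq:cluster:afterip2} with coinciding indices $l_k$, the additional integration by parts on the linear field factors produced before the cluster expansion is deferred to Section~\ref{subsec:cluster:derivatives}. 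The quantitative decay of $G_{q}(s)$ and $C$, the bound on $F_{T}^{(\ast)}[s]$, and the absolute convergence of the sum over $\tilde Y$ and $T$ are not part of this identity and are treated in Sections~\ref{subsec:decayGB}--\ref{subsec:summing}.
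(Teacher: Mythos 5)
Your proposal is correct and follows essentially the same route as the paper: the inductive, one-parameter-at-a-time interpolation of the real covariance $C$, the identity $\partial_{s_q}B(s)=G(s)\,\partial_{s_q}C(s)\,G(s)$ producing the three-line multi-link, Gaussian integration by parts yielding the $\Str(\partial_{M_i}\partial_{M_j})$ insertions, and termination of the recursion via Lemma~\ref{lem:cluster:complementY}. The only details you gloss over are minor bookkeeping points the paper spells out (positivity of $C(s)$ as a convex combination of positive operators, and the symmetrization of the two terms in $\partial_{s_q}B(s)$ that cancels the factor $\tfrac12$ from the integration by parts).
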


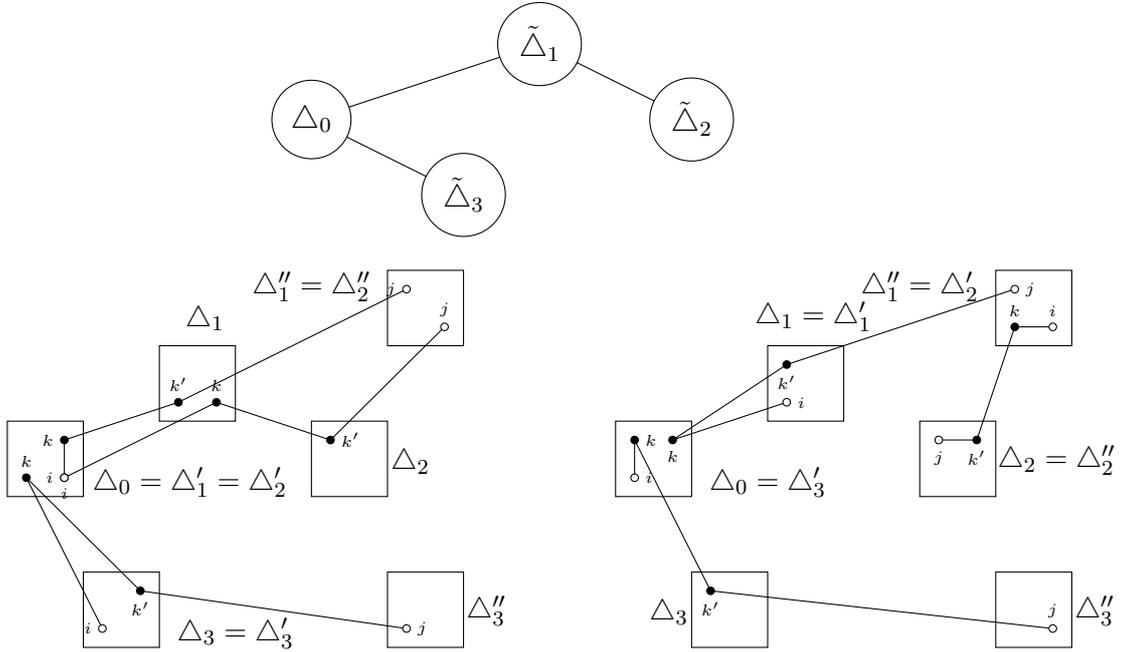
\begin{figure}[h]
\centering
\begin{tikzpicture}
\draw[-] (9,5) -- (7,6) -- (4,5) -- (6,4);
\foreach \x/\y/\s in {
4/5/$\triangle_0$,
7/6/$\tilde\triangle_1$,
9/5/$\tilde\triangle_2$,
6/4/$\tilde\triangle_3$
}
\node[draw, circle, fill=white] at (\x,\y) {\s};

\foreach \x/\y/\labl/\p/\q in {
0/0/$\triangle_0=\triangle_1'=\triangle_2'$/2.4/-0.3,
2/1/$\triangle_1$/0.6/0.9,
5/2/$\triangle_1''=\triangle_2''$/-1/0.3,
4/0/$\triangle_2$/1.3/0,
1/-2/$\triangle_3=\triangle_3'$/2/-0.3,
5/-2/$\triangle_3''$/1.3/0,
8/0/$\triangle_0=\triangle_3'$/2/-0.3,
10/1/$\triangle_1=\triangle_1'$/0.6/0.9,
13/2/$\triangle_1''=\triangle_2'$/-1/0.3,
12/0/$\triangle_2=\triangle_2''$/1.8/0,
9/-2/$\triangle_3$/-0.3/0,
13/-2/$\triangle_3''$/1.3/0
}
\draw (\x,\y)  node [label={[shift={(\p,\q)}]\small{\labl}}]{} rectangle ++ (1,1);

\foreach \a/\b/\c/\d/\e/\f/\g/\h/\l/\m/\n/\o/\p in {
0.75/0.25/0.75/0.75/2.25/1.25/5.25/2.75/left/left/above/left/0,
0.75/0.25/2.75/1.25/4.25/0.75/5.75/2.25/below/above/right/above/0,
1.25/-1.75/0.25/0.25/1.75/-1.25/5.25/-1.75/left/above/below/right/0,
10.25/1.25/8.75/0.75/10.25/1.75/13.25/2.75/right/below/below/right/0,
13.75/2.25/13.25/2.25/12.75/0.75/12.25/0.75/above/above/below/below/0,
8.25/0.25/8.25/0.75/9.25/-1.25/13.75/-1.75/right/right/below/above/0
}
{
\draw [-](\a,\b) -- (\c,\d) -- (\e,\f) -- (\g,\h);
\draw[fill=white] (\a,\b) node [\l]{\tiny{$i$}}circle [radius=0.05];
\draw[fill=black] (\c,\d) node [\m]{\tiny{$k$}}circle [radius=0.05];
\draw[fill=black] (\e,\f) node [\n]{\tiny{$k'$}}circle [radius=0.05];
\draw[fill=white] (\g,\h) node [\o]{\tiny{$j$}}circle [radius=0.05];
}
\end{tikzpicture}
\caption{Two examples of links and underlying generalized polymere for a fixed tree structure
on $\tilde Y$.
Note that cubes and even the notes can coincide as long as the conditions 
$k_q\in\bigcup_{p=1}^{q-1}\tilde{\triangle}_p\cup\triangle_0$
and $k_q'\in\tilde{\triangle}_q$ are fulfilled.}
\label{fig:tree}
\end{figure}

\begin{rem}
The most standard way to do a cluster expansion would be to interpolate directly in the 
total complex covariance $B$. The propagator $G(s)CG(s)$ would then be replaced by $B$.
Nevertheless, in order to bound our expressions, we need   $(\re B(s)^{-1})^{-1}$ to behave
similar to $C,$ and it is not easy to compare these two operators. 

One may also use a standard cluster expansion (e.g. a Brydges-Kennedy Taylor forest formula 
or Erice type cluster expansion \cite{abdesselam-rivasseau,brydgesleshouche}) in the real covariance $C$. 
Since the propagator $G(s)C G(s)$ has an $s$-dependence, derivatives in $s$ could also 
fall on it, which complicates the algebra involved in factorizing the contributions from
different connected components.

Therefore, we use  the same ``inductive'' interpolation scheme as in \cite{disertori-pinson-spencer}
(analog to older versions of cluster expansions, cf. \cite[Chapter III.1]{rivasseau}).
\end{rem}

\begin{proof}
We construct the cluster expansion by an inductive argument.
The large volume is divided into cubes of size $W^2 (\ln W)^\alpha$. 
In the following, we want to extract the set of cubes interacting with the observable.
Therefore, we test if there exists a connection between the root cube $\triangle_0$ 
and some other cube $\triangle\subset \Lambda$.

We introduce an interpolating covariance $B(s_1)$ with $0\leq s_1\leq 1 $,
which satisfies $B(1)=B$ while $B(0)$ decouples the root cube $\triangle_0$ from the rest of the volume. 
We define $B(s_1)^{-1} \coloneqq C(s_1)^{-1} + i\sigma_E m_i^2$, where
\begin{align*}
C(s_1)_{ij} \coloneqq \left\{
\begin{array}{l l}
s_1  C_{ij} & \text{if } i \in \triangle_0 \text{ and } j \in \triangle\neq \triangle_0 \text{ or vice versa} ,\\ 
C_{ij} & \text{otherwise}.
\end{array} \right.
\end{align*}
This is equivalent to $C(s_1) = s_1 C + (1-s_1) (C_{\triangle_0\triangle_0}+C_{\triangle_0^C\triangle_0^C})$,
where $C_{\triangle\triangle}$ is the covariance $C$ restricted to the set $\triangle$. 
By this definition, $C(s_1)$ is still a positive operator because it is a convex combination of positive operators.
Define
\begin{align*}
F^{(l_0)}_\Lambda[s_1] \coloneqq\int \diff \mu_{B(s_1)}(M) \partial_{a_{l_0}} \eto^{\mathcal{V}(M)}
\end{align*}
and $F^{(l_0,\dots,l_n)}_\Lambda[s_1]$ similarly.
Note that for $s_1=1$ we have $F^{(\ast)}_\Lambda[s_1]_{|s_1=1}=F^{(\ast)}_\Lambda$.
By the fundamental theorem of calculus
\begin{align*}
F^{(\ast)}_\Lambda[s_1]_{|s_1=1} = F^{(\ast)}_\Lambda[s_1]_{|s_1=0} 
+\int_0^1 \diff s_1 \ \partial_{s_1}F^{(\ast)}_\Lambda[s_1].
\end{align*}
$F^{(\ast)}_\Lambda[s_1]_{|s_1=0}$ corresponds to 
decoupling $\triangle_0$ from the remaining volume. 
By Lemma \ref{lem:cluster:complementY}, the integral over  $\triangle_0^C$ 
yields one in the case $\ast = l_0$ or if all indices $l_0,\dots,l_n$ are in $\triangle_0$,
and zero otherwise.
The derivative is written by integration by parts as
\begin{align*}
\int \partial_{s_1} \diff \mu_{B(s_1)}(M) [\cdot]
&= \int \diff \mu_{B(s_1)}(M) \sum_{i_1j_1}\partial_{s_1}B(s_1)_{i_1j_1}
\tfrac{1}{2} \Str \partial_{M_{i_1}}\partial_{M_{j_1}}[\cdot],
\end{align*}
Moreover, the propagator $\partial_{s_1}B(s_1)_{i_1 j_1}$ gives three connections
\begin{align*}
\partial_{s_1}B(s_1)_{i_1 j_1} 
&=\sum_{\triangle_1\neq \triangle_0}\sum\limits_{\substack{k_1\in\triangle_0 \\ k_1'\in\triangle_1}} 
G(s_1)_{i_1 k_1} C_{k_1 k_1'} G(s_1)_{k_1'j_1}+ G(s_1)_{i_1 k_1'} C_{k_1' k_1} G(s_1)_{k_1j_1},
\end{align*}
where $G(s_1) = (1+i\sigma_Em_i^2 C(s_1))^{-1}$. 
Since the matrices $C$ and $G(s_1)$ are symmetric, 
one can rewrite the second summand as $G(s_1)_{j_1 k_1} C_{k_1 k_1'} G(s_1)_{k_1' i_1}$. 
To sum the two terms, note that the supertrace is invariant under changing $i_1$ and $j_1$.
Therefore we obtain
\begin{align*}
\partial_{s_1}F^{(\ast)}_\Lambda[s_1] =
\sum\limits_{\substack{(i_1,j_1)\\(k_1,k_1')}}
\left(G(s_1)_{i_1 k_1} C_{k_1 k_1'} G(s_1)_{k_1'j_1}\right)F^{(\ast)}_\Lambda[s_1]((i_1,j_1)),
\end{align*}
where
\begin{align*}
F^{(l_0)}_\Lambda[s_1]((i_1,j_1)) = 
\int\diff\mu_{B(s_1)}(M) \Str \partial_{M_{i_1}}\partial_{M_{j_1}} \partial_{a_{l_0}} \eto^{\mathcal{V}(M)}
\end{align*}
and $F^{(l_0,\dots,l_n)}_\Lambda[s_1]((i_1,j_1))$ is defined similarly.

For each pair $(k_1,k_1')$ with $k_1\in \triangle_0$ and $k_1'\in\triangle_1$, 
there is a strong connection between $\triangle_0$ and $\triangle_1$, 
but there is no corresponding derivative in the functional integral as for $i_1$ and $j_1$.
If $i_1$ or $j_1$ belong to some cube $\triangle \nsubseteq \triangle_0\cup\triangle_1$, 
they give some additional connections.
Therefore, the first step of induction extracts a link consisting of three connection 
between the four points $i_1,j_1,k_1 $ and $k_1'$. 
This link connects $\triangle_0$ to a set of one, two or three new cubes, 
which we call the generalized cube $\tilde\triangle_1$ (cf. Figure \ref{fig:generalizedcubes}).

Now, we fix $(i_1,j_1),(k_1,k_1')$ corresponding to a connection 
between $\triangle_0$ and $\tilde\triangle_1$. 
We test if there is a connection between $\tilde\triangle_{0,1} = \triangle_0\cup\tilde\triangle_1$ 
and any other cube $\triangle'$. 
For this, we define for $0\leq s_2\leq 1$ the real interpolating covariance as
\begin{align*}
C(s_1,s_2)_{ij} = \left\{
\begin{array}{l l}
s_2  C(s_1)_{ij} & 
\text{if } i \in \tilde\triangle_{0,1} \text{ and } j \notin \tilde\triangle_{0,1}
\text{ or vice versa,} \\ 
C(s_1)_{ij} & \text{otherwise}.
\end{array} \right.
\end{align*}
We can write $C(s_1,s_2)$ again as a convex combination of positive operators  
\begin{align*}
C(s_1,s_2) = s_2 C(s_1) +(1-s_2) (C_{\tilde\triangle_{0,1}\tilde\triangle_{0,1}}(s_1)
+C_{\tilde\triangle_{0,1}^C\tilde\triangle_{0,1}^C}(s_1)),
\end{align*}
thus $C(s_1,s_2)$ is still positive. 
Now, $F^{(\ast)}_\Lambda[s_1]((i_1,j_1))=F^{(\ast)}_\Lambda[s_1,s_2]((i_1,j_1))_{|s_2=1}$. 
By the fundamental theorem of calculus
\begin{align*}
F^{(\ast)}_\Lambda[s_1,s_2]((i_1,j_1))_{|s_2=1} = 
F^{(\ast)}_\Lambda[s_1,s_2]((i_1,j_1))_{|s_2=0} +
\int_0^1 \diff s_2 \ \partial_{s_2}F^{(\ast)}_\Lambda[s_1,s_2]((i_1,j_1)).
\end{align*}
As before $F^{(\ast)}_\Lambda[s_1,s_2]((i_1,j_1))_{|s_2=0}$ corresponds to 
the functional integral restricted to $\tilde\triangle_{0,1}$ 
(if all indices $l_0,\dots,l_n$ are in $\tilde{\triangle}_{0,1}$, otherwise it is zero). 
The derivative in $s_2$ of $F^{(\ast)}_\Lambda[s_1,s_2]((i_1,j_1))$ gives
\begin{align*}
\sum\limits_{\substack{(i_2,j_2),(k_2,k_2')\\k_2\in\tilde\triangle_{0,1},k_2'\notin\tilde\triangle_{0,1}}}
\left[G(s_1,s_2)_{i_2 k_2} C(s_1)_{k_2 k_2'} G(s_1,s_2)_{k_2'j_2}\right]
F^{(\ast)}_\Lambda[s_1,s_2]((i_1,j_1),(i_2,j_2)).
\end{align*}
Note that $i_2$ and $j_2$ are arbitrary but $k_2$ needs to be in $\tilde\triangle_{0,1}$ and $k_2'$ in a new cube. 

We repeat this argument until we construct all possible connected components containing the root cube. 
Note that in the second case, only generalized polymers containing all indices $l_0,\dots,l_n$ give
a non-zero contribution.
This is a finite sum for $\Lambda$ fixed. 
The $k_r-k_r'$ connections build a tree structure on the generalized cubes,
while the positions of $i_r$ and $j_r$ are arbitrary (cf. Figure \ref{fig:tree}).
\end{proof}

\subsection{Decay of \texorpdfstring{$\mathbf{G_q(s)}$}{Gp(s)} and \texorpdfstring{$\mathbf{B(s)}$}{B(s)}}
\label{subsec:decayGB}

First we determine the decay of the propagator $G_q(s)$ and the interpolated complex covariance $B(s)$.
Note that, for $C(s)$, we can simply use that $C(s)_{ij} \leq C_{ij}$. 
\begin{lemma}
\label{lem:cluster:decayBG}
The decays of $B(s)$ and 
$G_q(s) $, respectively, are given by 
\begin{align*}
|B(s)_{ij}|&\leq |C_{ij}|+\tfrac{K}{W^2}\eto^{-f\frac{m_r}{W}|i-j|} \\
|G_p(s)_{ij}|&\leq \delta_{ij} +|C_{ij}|+\tfrac{K}{W^2}\eto^{-f\frac{m_r}{W}|i-j|}
\end{align*}
where  $f= \inf [1/2,g]$ with a constant $g<1$ independent of $W$. 
\end{lemma}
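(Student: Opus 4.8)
The plan is to reduce the statement to a decay bound on $B(s)$ alone, and to prove that bound by a Combes--Thomas argument built on the exponential localization of the \emph{real} covariance $C$ recorded in \eqref{eq:preliminary:expdecayC}. First, since $C(s)$ commutes with $1+i\sigma_Em_i^2C(s)$, we have $B(s)=(1+i\sigma_Em_i^2C(s))^{-1}C(s)$; the Hermitian part of $1+i\sigma_Em_i^2C(s)$ is the identity, so $\|(1+i\sigma_Em_i^2C(s))^{-1}\|\le1$, and one obtains the two elementary identities $B(s)=C(s)-i\sigma_Em_i^2C(s)B(s)$ and $G_q(s)=1-i\sigma_Em_i^2B(s)$ (at the interpolation point defining $G_q(s)$). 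Since $m_i^2\le1$ on $\mathcal I$, the desired bound on $G_q(s)$ follows from that on $B(s)$ via $|G_q(s)_{ij}|\le\delta_{ij}+m_i^2|B(s)_{ij}|$. So it remains to show, for every interpolation point, that $|B(s)_{ij}|\le|C_{ij}|+\tfrac{K}{W^2}\eto^{-f\frac{m_r}{W}|i-j|}$.

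To extract the decay I would fix $i_0,j_0$, put $\phi_k\coloneqq\dist(k,j_0)$ (which is $1$-Lipschitz) and $\lambda\coloneqq f\,m_r/W$ with $f$ small, and conjugate: with $\tilde X\coloneqq\eto^{\lambda\phi}X\eto^{-\lambda\phi}$ the resolvent identity becomes $\tilde B=\tilde C-i\sigma_Em_i^2\tilde C\tilde B$, equivalently $(1+i\sigma_Em_i^2\tilde C)\tilde B=\tilde C$, where $|\tilde C_{kl}|\le\eto^{\lambda|k-l|}C_{kl}$ because $0\le C(s)_{kl}=s_{kl}C_{kl}\le C_{kl}$. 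The first key point is that the conjugation is a \emph{small} perturbation. Using the Schur test together with \eqref{eq:preliminary:expdecayC} (legitimate since $\lambda<m_r/W$) one finds $\|\tilde C\|\le K$, and, setting $R\coloneqq\tilde C-C(s)$ so that $|R_{kl}|\le\lambda|k-l|\,\eto^{\lambda|k-l|}C_{kl}$, the same estimate gives $\|R\|\le\lambda K W\le fK$. Factoring $1+i\sigma_Em_i^2\tilde C=(1+i\sigma_Em_i^2C(s))\bigl(1+i\sigma_Em_i^2(1+i\sigma_Em_i^2C(s))^{-1}R\bigr)$ and using $\|(1+i\sigma_Em_i^2C(s))^{-1}\|\le1$ and $m_i^2\le1$, one gets $\|m_i^2(1+i\sigma_Em_i^2C(s))^{-1}R\|\le fK<\tfrac12$ once $f\le g$ for a suitable constant $g<1$; hence $\|(1+i\sigma_Em_i^2\tilde C)^{-1}\|\le2$ and $\|\tilde B\|\le K$.

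The second step turns this operator bound into the sharp $W^{-2}$ prefactor by using the $\ell^2$ structure of $C$. From \eqref{eq:preliminary:expdecayC} and $2\lambda<m_r/W$ (this is where $f<\tfrac12$ enters) one verifies $\sum_k\eto^{2\lambda|k-l|}C_{kl}^2\le\tfrac{K}{W^2}$ for every $l$: the short-range logarithmic part and the long-range exponential part each sum to $O(m_r^{-2}W^{-2})$. Hence $\|\tilde Ce_l\|,\|\tilde C^{*}e_l\|\le\sqrt K/W$ and $\|\tilde Be_l\|\le\|(1+i\sigma_Em_i^2\tilde C)^{-1}\|\,\|\tilde Ce_l\|\le 2\sqrt K/W$. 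Writing $\tilde B_{i_0j_0}-\tilde C_{i_0j_0}=-i\sigma_Em_i^2(\tilde C\tilde B)_{i_0j_0}=-i\sigma_Em_i^2\langle\tilde C^{*}e_{i_0},\tilde Be_{j_0}\rangle$ and applying Cauchy--Schwarz gives $|\tilde B_{i_0j_0}-\tilde C_{i_0j_0}|\le m_i^2\|\tilde C^{*}e_{i_0}\|\,\|\tilde Be_{j_0}\|\le\tfrac{K}{W^2}$. Undoing the conjugation ($\tilde B_{i_0j_0}=\eto^{\lambda|i_0-j_0|}B(s)_{i_0j_0}$, $\tilde C_{i_0j_0}=\eto^{\lambda|i_0-j_0|}C(s)_{i_0j_0}$) and using $C(s)_{i_0j_0}\le C_{i_0j_0}$ yields the claim; all estimates are uniform in $s$ and in the polymer structure, and in particular hold at the point defining $G_q(s)$.

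The main obstacle is the first step. Since $C(s)$ is not a finite-difference operator — its inverse $C(s)^{-1}$ is nonlocal — one cannot apply Combes--Thomas to $C(s)^{-1}$ directly, nor expand $B(s)$ in a Neumann series in $m_i^2C(s)$, because $m_i^2/m_r^2$ can exceed $1$ on $\mathcal I$. The remedy is to treat $i\sigma_Em_i^2C(s)$ as a perturbation of the identity and to exploit that $C(s)$, hence $\tilde C$, is already exponentially localized at rate $m_r/W$, so that taking the weight rate a fixed fraction $f<1$ below $m_r/W$ makes the conjugation error $R$ small; controlling the constants in $\|R\|$ so that they stay uniform in $\Lambda$, $s$ and the polymer, and still beat $m_i^2$, is the delicate bookkeeping. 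Extracting $W^{-2}$ rather than $W^{-1}$ then forces the symmetrization through the resolvent identity, so as to gain one factor $W^{-1}$ from a column norm of $C$ at each of the two ends.
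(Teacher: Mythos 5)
Your argument is correct and is essentially the proof the paper has in mind (it defers to \cite[Lemma 15]{disertori-pinson-spencer}): conjugation by the exponential weight at rate $fm_r/W$, the resolvent identity $B(s)=C(s)-i\sigma_Em_i^2C(s)B(s)$, a perturbative operator bound on the conjugated resolvent, and Cauchy--Schwarz with the column estimate $\sum_k\big(C(s)_{jk}\,\mathrm{e}^{\mu|k-j|}\big)^2\leq KW^{-2}$, which is precisely the ``key relation'' the paper's proof singles out. No gaps.
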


\begin{rem}
Note that the decay of $B(s)$ and $G(s)$ is bounded by 
\begin{align*}
|C_{ij}|+\tfrac{K}{W^2}\eto^{-f\frac{m_r}{W}|i-j|} \leq 
\begin{cases}
\frac{K}{W^{2}} \ln\left(\frac{W}{m_r(1+|i-j|)}\right) & \text{if } |i-j|\leq \frac{W}{m_r},\\
\frac{K}{W^{2}} \eto^{-f\frac{m_r}{W}|i-j|} & \text{if } |i-j| > \frac{W}{m_r}.
\end{cases}
\end{align*}
\end{rem}

\begin{proof}
The proof works exactly like the one in \cite[Lemma 15]{disertori-pinson-spencer}, 
replacing the three dimensional decay \eqref{eq:preliminary:expdecayC3d} with the two dimensional decay
given in \eqref{eq:preliminary:expdecayC}, the key relation being
$\sum_{k\in\Lambda}(C(s)_{jk} \exp(\mu|k-j|))^2\leq KW^{-2}$
for  $\mu <  m_r/(2W)$. 
\end{proof}

\subsection{Bounding the functional integrals}
\label{subsec:funtint}
To estimate \eqref{eq:cluster:functionalintegral1}, 
we fix a generalized polymer $\tilde Y$ and indices $\{i_{q},j_{q}\}$, and we define 
$\mathcal{J} = \{i_{q},j_{q}:q=1,\dots,r\}\subseteq Y$ 
as the set of all derived indices.
Then the corresponding integrand can be written as
\begin{align*}
\prod_{p=1}^r \Str \partial_{M_{i_p}}\partial_{M_{j_{p}}}
\left[ \partial_{a_{l_{0}}}\eto^{\mathcal{V}(M)}\right]
=  \partial_{a_{l_{0}}}\prod_{j\in Y/\mathcal{J}}\eto^{\mathcal{V}(M_j)}
\sum_{d\in \mathcal{D}}\prod_{j\in\mathcal{J}}
\partial_{M_j}^{d_j}\eto^{\mathcal{V}(M_j)}
\end{align*}
where $\partial_{M_j}^{d_j} \coloneqq \partial_{a_j}^{d_j(a)}\partial_{b_j}^{d_j(b)}
\partial_{\bar\rho_j}^{d_j(\bar\rho)}\partial_{\rho_j}^{d_j(\rho)},$ and  
$\mathcal{D} = \{d = \{d_j\}_{j\in\mathcal{J}}\}$ 
is a set of multi-indices with $d_j = 
(d_j(a),d_j(b),d_j(\bar\rho),d_j(\rho))$.
Note that $d_j(\bar\rho),d_j(\rho)\in\{0,1\}$ and $|d_j| \coloneqq d_j(a)+d_j(b)+d_j(\bar\rho)+ d_j(\rho)$
equals the multiplicity of $j$ in $\mathcal{J}$.
For the case $j=l_0$, we have an additional derivative in $a_{l_0}$ 
which needs to be treated separately.
Computing the derivatives for each $j\in \mathcal{J}\cup \{l_{0} \}$ and each multi-index $d_j$,
\begin{align*}
\partial_{a_{l_0}}^{\delta_{jl_0}}\partial_{M_j}^{d_j}\eto^{\mathcal{V}(M_j)}
= \sum_{r_j} M_j^{r_j} C_{d_{j},r_{j}}(a_j,b_j)
\eto^{\mathcal{V}_{j}(a,b)}\eto^{
- \bar\rho_{j}\rho_{j} D_{j} (a,b)[1-d_j(\bar\rho)d_j(\rho)] },\  
\end{align*}
where $M_j^{r_j} \coloneqq a_j^{r_j(a)}b_j^{r_j(b)}\bar\rho_j^{r_j(\bar\rho)}\rho_j^{r_j(\rho)},$
$r_j = (r_j(a),r_j(b),r_j(\bar\rho),r_j(\rho))$ are the remaining powers of the variables in $M,$
and $C_{d_{j},r_{j}} (a_{j},b_{j})$ is a bounded function remaining after derivatives have 
been taken. Note that we use the notation $d_{j}=0$ for $j\notin \mathcal{J}$, and the same for $r_{j}$
for $j\notin\mathcal{J}\cup\{l_0\}$. Using the definitions \eqref{eq:cluster:defV(M)}, \eqref{eq:strategy:defD}, 
\eqref{eq:strategy:defV}, and the relation 
$\partial_{a}^{d}\exp ( V(a) ) =\sum_{k=1}^{d} \binom{d}{k} \partial_{a}^{d-k}\left[ (\partial_{a}V (a))^{k} \right]
 \exp ( V(a)), $
 one can see that 
\begin{align*}
|C_{d_{j},r_{j}}(a_j,b_j)| \leq K^{d_j(a)+d_j(b)}d_j(a)!d_j(b)!
\end{align*}
independent of $r_{j}$ for all $(a_{j},b_{j})$ configurations.
Note that $n_j \coloneqq |r_j| + |d_j|\geq 3$ and  $|r_j|\leq 3 |d_j|$ 
for all $j\in\mathcal{J}\backslash\{l_0\}$.
If $l_0\notin \mathcal{J}$, we have $|r_{l_0}| =2$ and if $l_0\in\mathcal{J}$, we have at least $n_{l_0} \geq 2$.

\begin{lemma}\label{lem:cluster:boundfunctint}
The functional integral \eqref{eq:cluster:functionalintegral1} is bounded by
\begin{align}
\label{eq:cluster:boundfunctint}
|F_{{T}}^{(l_0)}[s](\{i_q,j_q\})| \leq
K_1^{|Y|(\ln W)^\alpha} \sum_{d\in\mathcal{D}}\sum_{\{r_j\}_{j\in\mathcal{J}}}\prod_{\triangle\in Y}
\left[K_2^{n_\triangle}r_\triangle!
\prod_{j\in \mathcal{J}\cap \triangle}d_j! \left(\tfrac{\ln W}{W^2}\right)^{\frac{r_j}{2}}\right],
\end{align}
where $K_1$ and $K_2$ are constants, and $|Y|$ denotes the number of cubes in $Y$.
\end{lemma}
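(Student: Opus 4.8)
The plan is to start from the decomposition of the integrand derived just before the statement: after acting with all the derivatives $\prod_p \Str(\partial_{M_{i_p}}\partial_{M_{j_p}})$ and $\partial_{a_{l_0}}$ on $\eto^{\mathcal{V}(M)}$, we have, for each fixed $d\in\mathcal{D}$ and each choice of remaining powers $\{r_j\}$, a product over sites $j\in Y$ of factors $M_j^{r_j}\,C_{d_j,r_j}(a_j,b_j)\,\eto^{\mathcal{V}_j(a,b)}\eto^{-\bar\rho_j\rho_j D_j(a,b)[1-d_j(\bar\rho)d_j(\rho)]}$, times $\prod_{j\in Y\setminus\mathcal{J}}\eto^{\mathcal{V}(M_j)}$. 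First I would integrate out the Grassmann variables: the Gaussian measure $\diff\mu_{B(s)}(\bar\rho,\rho)$ together with the local exponential factors $\eto^{-\bar\rho_j\rho_j D_j}$ produces a determinant of the form $\det[1+D'B(s)]$ (with $D'$ diagonal, $|D'_j|\le K$) whenever no $\bar\rho_j,\rho_j$ factors are forced out by the $r_j$, and when some $r_j(\bar\rho),r_j(\rho)$ are nonzero it produces a minor of that matrix; in both cases Lemma \ref{lem:preliminary:determinant1+DB} (and the $\det[1+A]$ estimate \eqref{eq:preliminary:det1+A}, applied after noting $\re\tilde J^{-1}\ge 1$ and $\Tr B(s)^*B(s)\le K|Y|W^{-2}$) bounds the absolute value of this Fermionic contribution by $\eto^{K|Y|W^{-2}}=K_1^{|Y|(\ln W)^\alpha}$, using $|\triangle|=W^2(\ln W)^\alpha$. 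This is where the base constant $K_1^{|Y|(\ln W)^\alpha}$ comes from.

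Next I would handle the bosonic integral $\int\diff\mu_{B(s)}(a,b)$ against $\prod_{j}|a_j|^{r_j(a)}|b_j|^{r_j(b)}\cdot\big|\prod_j C_{d_j,r_j}\big|\cdot\big|\eto^{\sum_j\mathcal{V}_j(a,b)}\big|$. I would first replace the complex measure $\diff\mu_{B(s)}$ by the real one $\diff\mu_{C(s)}$ at the cost of another $\eto^{K|Y|W^{-2}}$ factor, exactly as in Lemma \ref{lem:preliminary:dmuB} but with $B(s),C(s)$ in place of $B,C$ (the bound $\Tr C(s)^*C(s)\le\Tr C^*C\le K|Y|W^{-2}$ still holds since $C(s)_{ij}\le C_{ij}$). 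The bounded prefactors $C_{d_j,r_j}$ are estimated pointwise by $\prod_j K^{d_j(a)+d_j(b)}d_j(a)!\,d_j(b)!$, which is $\le \prod_\triangle K_2^{n_\triangle}\prod_{j\in\mathcal J\cap\triangle}d_j!$ after regrouping by cubes and absorbing $d_j(a)!d_j(b)!\le d_j!$. The cubic potential $\re\mathcal{V}_j(a,b)$ is then absorbed into the Gaussian measure as in the $I^1$-region analysis of Lemma \ref{lem:preliminary:estimates}: introduce the Hamiltonian $\mathcal H(x)=(x,C(s)^{-1}x)-K\sum_{j\in Y}|x_j|^3$, bound the normalization change $Z(\mathcal H)/Z_0$ by $\eto^{K|Y|W^{-3}(\ln W)^{3/2}}\le K_1^{|Y|(\ln W)^\alpha}$ via \eqref{eq:preliminary:normalizationbl}, and apply the Brascamp–Lieb consequence \eqref{eq:preliminary:brascamplieb2} (with $v=0$) to get $\int\diff\mu_{\mathcal H}\prod_j|a_j|^{r_j(a)}|b_j|^{r_j(b)}\le\prod_j\sqrt{(2r_j-1)!!}\,(C_f)_{jj}^{r_j(a)/2}(C_f)_{jj}^{r_j(b)/2}$. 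Using the $d=2$ diagonal bound $(C_f)_{jj}\le KW^{-2}\ln W$ and $\sqrt{(2r_j-1)!!}\le K^{r_j}r_j!^{1/2}\le K^{r_j}r_\triangle!$ after regrouping, this yields exactly the product $\prod_\triangle[K_2^{n_\triangle}r_\triangle!\prod_{j\in\mathcal J\cap\triangle}d_j!(\tfrac{\ln W}{W^2})^{r_j/2}]$, and summing the three $\eto^{K|Y|W^{-2}}$-type factors into a single $K_1^{|Y|(\ln W)^\alpha}$ completes the bound; the sums over $d\in\mathcal D$ and $\{r_j\}$ are left explicit as in the statement.

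The main obstacle I anticipate is bookkeeping rather than analysis: one must carefully track that the site-wise factors $r_j,d_j$, which are defined per index in $\mathcal J$, reassemble into the per-cube quantities $n_\triangle=\sum_{j\in\triangle}n_j$ and $r_\triangle!$ appearing on the right-hand side, and that the Grassmann integration interacts correctly with the forced-out $\bar\rho_j,\rho_j$ factors (i.e. distinguishing the case $d_j(\bar\rho)d_j(\rho)=1$, which kills the $D_j$ term, from the cases producing minors of $1+D'B(s)$). A secondary subtlety is justifying the interchange of Grassmann and bosonic integration and the positivity $\re\tilde J^{-1}\ge1$ for the restricted covariance, both of which are supplied by Lemma \ref{lem:appcov:mass} and Lemma \ref{lem:cluster:complementY}. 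Everything else is a direct transcription of the finite-volume region-$I^1$ estimate to the polymer $Y$, the only $d=2$-specific inputs being the logarithmic diagonal bounds $(C_f)_{jj},|B(s)_{jj}|\le KW^{-2}\ln W$ and the trace bound $\Tr C^*C\le K|Y|W^{-2}$ from \eqref{eq:preliminary:expdecayC}.
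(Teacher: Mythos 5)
Your overall architecture (Fermionic integration first, then a bosonic estimate modelled on the finite-volume analysis) matches the paper's, but there are two genuine gaps. First, the Grassmann part: the statement requires a factor $(\ln W/W^2)^{r_j/2}$ for the \emph{full} multiplicity $r_j=r_j(a)+r_j(b)+r_j(\bar\rho)+r_j(\rho)$, while your bosonic Brascamp--Lieb step only produces $(C_f)_{jj}^{(r_j(a)+r_j(b))/2}$. The missing factors $(\sqrt{\ln W}/W)^{r_j(\bar\rho)+r_j(\rho)}$ must come out of the Fermionic integral, and they are not obtained by crudely bounding the minor of $1+D'B(s)$ via \eqref{eq:preliminary:det1+A}: the paper (Lemma \ref{lem:cluster:detM}) has to extend the minor to a full matrix, multiply by $B(s)$, and explicitly rescale the rows and columns indexed by sites with $r_j(\rho)=1$ or $r_j(\bar\rho)=1$ by $r_\triangle\sqrt{\ln W}/W$ before applying the determinant estimate (Lemma \ref{lem:cluster:detMbound}); this rescaling is also what generates the $r_\triangle!$ combinatorics in the stated bound. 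Without it your estimate carries the wrong power of $W$ whenever Grassmann fields are forced out.

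Second, you apply the region-$I^1$ analysis globally: you absorb $\re\mathcal V_j$ into a Hamiltonian $\mathcal H(x)=(x,C(s)^{-1}x)-K\sum_j|x_j|^3$ on all of $\mathbb{R}^Y$ and invoke Brascamp--Lieb. But the bound $\re V(x)\le K|x|^3$ holds only for $|x|\le\delta$, and a global Hamiltonian with a $-|x|^3$ term is neither convex nor bounded below, so Theorem \ref{theo:preliminary:brascamplieb} does not apply. The paper instead inserts the partition $1=\prod_\triangle\sum_{s_\triangle}\chi[I^{s_\triangle}_\triangle]$ cube by cube; in the regions $I^2,\dots,I^5$ the field factors are bounded by constants and the small factors $(\ln W/W^2)^{r_j/2}$ are extracted from the exponential suppression of those regions. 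In particular, in $I^2$ that suppression comes precisely from $\re\Tr(B(s)D)\le -c|\triangle|W^{-2}\ln W$, i.e.\ from the factor $\eto^{\re\Tr(B(s)D)}$ that you discard when you bound the whole Fermionic contribution by $\eto^{K|Y|W^{-2}}$ alone; without it the second-saddle region provides no decay and the bound fails there.
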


\begin{proof}
We first compute the Fermionic integral and estimate the resulting determinant (see the two lemmas below).
We obtain
\begin{align}
\label{eq:cluster:fermint}
&\left|\int \diff\mu_{B(s)}(\bar\rho,\rho) 
\eto^{-\sum_{j\in J_4}\bar\rho_j\rho_j D_j}
\left(\prod_{j\in J_1}\rho_j\bar\rho_j\right)
\left(\prod_{j\in J_2}\rho_j\right)
\left(\prod_{j\in J_3}\bar\rho_j\right)
\right|\\
\leq& \prod_{\triangle\in Y} 
r_\triangle(\rho)^{r_\triangle(\rho)}r_\triangle(\bar\rho)^{r_\triangle(\bar\rho)}
\left(\tfrac{\sqrt{\ln W}}{W}\right)^{r_\triangle(\rho) + r_\triangle(\bar\rho)}
\eto^{\re \Tr (B(s)D)_{J_3\cup J_4,J_3\cup J_4} } \eto^{K(|Y|(\ln W)^\alpha+|\mathcal{J}|)},\nonumber
\end{align}
where  $Y = \bigcup_{j = 1}^5 J_j$ is a partition given by 
\begin{align*}
J_1 &= \{j\in Y : r_j(\rho) = r_j(\bar\rho) = 1\}, \\
J_2 &= \{j\in Y : r_j(\rho) = 1, r_j(\bar\rho) = 0\},  \\
J_3 &= \{j\in Y : r_j(\rho) = 0, r_j(\bar\rho) = 1\} ,\\
J_4 &= \{j\in Y : r_j(\rho) = r_j(\bar\rho) = 0, d_j(\rho)=d_j(\bar\rho)=0\}, \\
J_5 &= \{j\in Y : r_j(\rho) = r_j(\bar\rho) = 0, d_j(\rho)=d_j(\bar\rho)=1\}.
\end{align*} 

The remaining Bosonic functional integral is bounded by generalizing the results of the finite volume case.
We first insert the partition into the different domains of integration for each cube separately.
In $I^1$ we need the factors $|a_j|^{r_j(a)}|b_j|^{r_j(b)}$ of $M_j^{r_j}$
since they give additional small factors of order $W^{-1}(\ln W)^{1/2}$.
In the other regions these factors can be bounded by $\exp(\mathcal{V}_j(a,b))$ and
we include them into $C(a_j)C(b_j)$.

Summarizing the above procedure,
we estimate $|F^{(l_0)}_{T}[s](\{i_{q},j_{q}\})|$ by
\begin{align*}
&\eto^{K(|Y|(\ln W)^\alpha+|\mathcal{J}|)} 
\int\left|\diff\mu_{B(s)}(a,b)\right|
\eto^{\re \Tr (B(s)D)_{J_3J_3} + \re \Tr (B(s)D)_{J_4J_4}}
\prod_{j\in Y} \left|\eto^{\mathcal{V}_j(a,b)} \right|
\\&\times
\sum_{d\in\mathcal{D}}\sum_{\{r_j\}_{j\in \mathcal{J}}}
\prod_{\triangle\in Y}\frac{r_\triangle(\rho)^{r_\triangle(\rho)} r_\triangle(\bar\rho)^{r_\triangle(\bar\rho)}}{W^{r_\triangle(\rho)+r_\triangle(\bar\rho)}}
\\&\times
\prod_{j\in\mathcal{J}\cap \triangle} \left(
\chi[I_\triangle^1] |a_j|^{r_j(a)} |b_j|^{r_j(b)} + \chi[(I_\triangle^1)^C] \right)
  K^{d_j(a)+d_j(b)} d_j(a)! d_j(b)!
\end{align*}
We first apply Lemma \ref{lem:preliminary:dmuB} which also holds for $B(s)$ and $C(s)$.
Proceeding as in the proof of Theorem \ref{theo:preliminary:largevolume} 
we insert the bounds of Lemma \ref{lem:preliminary:estimates}. As a
result we obtain \eqref{eq:preliminary:prooflargevolume1} and 
\eqref{eq:preliminary:prooflargevolume2} 
with $C$ replaced by $C(s)$ and $C_{f}$ replaced by $C_f(s)= (C(s)^{-1}-fm_r^2)^{-1}>0.$ 
Now we have $C(s) < C_N = (-W^2\Delta_N+m_r^2)^{-1}$ since $C(s)$ can be represented as 
a quadratic form of block diagonal pieces of $C$ and each of these is smaller (as a quadratic form)
than $C_N$ by the arguments of Lemma \ref{lem:appcov:mass}.
This decouples the different cubes.
In $I^1$ we obtain for each $j\in\mathcal{J}$ a factor
\begin{align*}
r_j(a)!r_j(b)!\left(\tfrac{K(\ln W)^{1/2}}{W}\right)^{r_j(a)+r_j(b)}. 
\end{align*}
In the other regions we extract this factor from the exponential decay. 
The factorials in $d_j(a)$ and $d_j(b)$ are bounded by $d_j!$ and 
the factors in $r_\triangle$ are bounded by $K^{r_\triangle}r_\triangle!$.
Finally, we end up with \eqref{eq:cluster:boundfunctint}.
Note that we obtain an additional factor $W^{-2}\ln W$ in the case, where $l_0\notin\mathcal{J}$.
In the other case, we extract the precision later.
\end{proof}

\begin{lemma}\label{lem:cluster:detM}
The Fermionic integral (\ref{eq:cluster:fermint}) can be written as 
\begin{align}\label{eq:cluster:fermintdet}
\prod_{\triangle\in Y} 
r_\triangle(\rho)^{r_\triangle(\rho)}r_\triangle(\bar\rho)^{r_\triangle(\bar\rho)}
\left(\tfrac{\sqrt{\ln W}}{W}\right)^{r_\triangle(\rho) + r_\triangle(\bar\rho)}
\sigma \det M,
\end{align}
where $\sigma$ is a sign and the matrix $M$ is given by $M = (M_{J_i J_j})_{i,j = 1}^5$ with blocks
\begin{align*}
(M_{J_i J_j})_{\alpha\beta} &= 
\left(\tfrac{W^2}{r_{\triangle}(\rho_\alpha) r_{\triangle}(\bar\rho_\beta)\ln W} B(s)_{\alpha\beta}\right)_{\substack{\alpha\in J_i, \beta\in J_{j'}}}
&& \text{for } i,j \in \{1,2\},\\
(M_{J_i J_j})_{\alpha\beta} &= 
\left(\tfrac{W}{r_{\triangle}(\rho_\alpha) \sqrt{\ln W}} (DB(s))_{\alpha\beta}\right)_{\substack{\alpha\in J_i, \beta\in J_j}} 
&& \text{for } i\in\{1,2\} \text{ and } j\in\{3,4\},\\
(M_{J_i J_j})_{\alpha\beta} &=
\left(\tfrac{W}{r_{\triangle}(\bar\rho_\beta) \sqrt{\ln W}} B(s)_{\alpha\beta}\right)_{\substack{\alpha\in J_i, \beta\in J_{j'}}} 
&& \text{for } i\in\{3,4,5\} \text{ and } j\in\{1,2\},\\
(M_{J_i J_j})_{\alpha\beta} &= 
\left((1+DB(s))_{\alpha\beta}\right)_{\substack{\alpha\in J_i, \beta\in J_j}} 
&& \text{for } i \in \{3,4,5\}, j\in \{3,4\},\\
(M_{J_i J_j})_{\alpha\beta} &= 0 && \text{for } i \in \{1,2,3,4\}, j=5,\\
(M_{J_i J_j})_{\alpha\beta} &= \delta_{\alpha\beta} && \text{for } i=j=5,
\end{align*}
where $j'= 1$ for $j = 1$ and $j' =3 $ for $j = 2$.
\end{lemma}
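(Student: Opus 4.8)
The plan is to reduce the identity to the elementary fact that a \emph{chiral} Grassmann Gaussian integral --- one whose covariance pairs only the $\bar\rho$'s with the $\rho$'s --- of a product of field monomials is a determinant of two-point functions, and then to reorganise that determinant into the block form of the statement. First I would note that the integrand of \eqref{eq:cluster:fermint} factorises over sites as $\prod_{j\in Y}g_j(\bar\rho_j,\rho_j)$, with $g_j=\rho_j\bar\rho_j$ for $j\in J_1$, $g_j=\rho_j$ for $j\in J_2$, $g_j=\bar\rho_j$ for $j\in J_3$, $g_j=\eto^{-\bar\rho_j\rho_j D_j}=1-D_j\bar\rho_j\rho_j$ for $j\in J_4$, and $g_j=1$ for $j\in J_5$. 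Expanding each $g_j$ in the monomial basis $\{1,\rho_j,\bar\rho_j,\bar\rho_j\rho_j\}$, using $\int\diff\mu_{B(s)}=1$ and Wick's theorem, the integral becomes a signed sum over choices, at each site, of one monomial occurring in $g_j$: each choice contributes the product of the corresponding coefficients times the Gaussian expectation of the chosen monomials, and the latter is $\pm\det\big(\langle\bar\rho_a\rho_b\rangle\big)=\pm\det\big(\pm B(s)_{ab}\big)$ with $a$ running over the selected $\bar\rho$-slots and $b$ over the selected $\rho$-slots. For $j\in J_1,J_2,J_3,J_5$ the choice is forced; for $j\in J_4$ one picks either $1$ (coefficient $1$) or $\bar\rho_j\rho_j$ (coefficient $-D_j$), so the sum over choices becomes a sum over subsets $S\subseteq J_4$.

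Next I would resum over $S$ using the elementary block-determinant identity: with the forced $\bar\rho$- and $\rho$-slots as index sets $A$ (from $J_1,J_3$) and $B$ (from $J_1,J_2$), the set $T=J_4$, the matrix $P=B(s)$ and scalars $d_j=\pm D_j$,
\begin{equation*}
\sum_{S\subseteq T}\Big(\prod_{j\in S}d_j\Big)\det\big(P_{xy}\big)_{x\in A\cup S,\ y\in B\cup S}
=\det\begin{pmatrix}(P_{xy})_{x\in A,\,y\in B}&(P_{xk})_{x\in A,\,k\in T}\\(d_jP_{jy})_{j\in T,\,y\in B}&(\delta_{jk}+d_jP_{jk})_{j,k\in T}\end{pmatrix},
\end{equation*}
verified by expanding the right-hand determinant along the $T$-block rows. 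This fuses the forced contractions --- the $B(s)$-blocks coupling the $J_1,J_2$-slots to the $J_1,J_3$-slots --- with the $J_4$-vertex contributions, which produce the off-diagonal $(DB(s))$-blocks and the diagonal $(1+DB(s))$-block on $J_4$; the sign $\pm D_j$, hence an eventual global $\sigma$, is fixed by the Grassmann normalisation conventions. The $J_5$ sites contract with nothing, so marginalising the Gaussian over their unused variables is trivial and, after padding the matrix to size $|Y|$, they contribute the $\delta_{\alpha\beta}$ block and the vanishing $j=5$ column; the auxiliary slots needed to make the whole matrix square (the book-keeping relating $J_2$ and $J_3$) are inserted in the same innocuous way.

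Finally I would rescale rows and columns: in each cube $\triangle$, pull the factor $W/(r_\triangle\sqrt{\ln W})$ out of every row attached to a $\bar\rho$-field and every column attached to a $\rho$-field coming from $J_1\cup J_2\cup J_3$ --- there are $r_\triangle(\bar\rho)$ and $r_\triangle(\rho)$ of these, respectively --- so that what remains is precisely the matrix $M$ of the statement (its $B(s)$-entries then carrying the normalisation $W^2/(r_\triangle r_\triangle\ln W)$, its $(DB(s))$-entries the residual $W/(r_\triangle\sqrt{\ln W})$, its $(1+DB(s))$- and $\delta$-entries no factor), while the extracted scalars assemble into $\prod_{\triangle}r_\triangle(\rho)^{r_\triangle(\rho)}r_\triangle(\bar\rho)^{r_\triangle(\bar\rho)}(\sqrt{\ln W}/W)^{r_\triangle(\rho)+r_\triangle(\bar\rho)}$, exactly as in \eqref{eq:cluster:fermintdet}; the permutation bringing the Grassmann variables into the standard position for Wick's theorem, and the determinant normalisation, are absorbed with the $\pm$ above into $\sigma$.

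I expect the one genuinely delicate point to be the book-keeping: tracking all Grassmann signs, checking that the padded square matrix has the same determinant as the honest contraction matrix, and matching the row/column rescalings to the prefactor exactly as written. The two structural inputs --- the chiral Wick formula and the block-determinant resummation --- are standard and dimension-independent, so nothing in this lemma is special to $d=2$; the $W$-dependence enters only through the sizes of the entries of $B(s)$ and $D$, which is what dictates the particular rescaling.
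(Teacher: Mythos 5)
Your computation is correct in substance, but it follows a genuinely different route from the paper's. The paper does not expand the $J_4$ exponentials at all: it absorbs $\sum_{j\in J_4}\bar\rho_j\rho_jD_j$ into the quadratic form, applies the minor formula \eqref{eq:appsusy:intminor} to get $\sigma\det B(s)\,{\det}_{J_1\cup J_3,\,J_1\cup J_2}(B(s)^{-1}+\tilde D)$ in one stroke, then extends the minor to a full $|Y|\times|Y|$ matrix $M_1$ by inserting identity blocks and \emph{freely chosen} off-diagonal blocks (taken equal to $(B(s)^{-1}+D)_{J_iJ_j}$), and finally multiplies by $B(s)$ on the left so that every entry becomes $B(s)$, $B(s)D$ or $1+B(s)D$. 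Your route — chiral Wick expansion plus the resummation identity over $S\subseteq J_4$ — is a valid dual of this (your toy check $1+D_1B_{11}=\det(1+DB)$ goes through in general), and it buys a more self-contained argument that does not rely on choosing auxiliary blocks. What the paper's detour buys is precisely the specific matrix $M$ appearing in the statement, whose uniform block structure is what Lemma \ref{lem:cluster:detMbound} exploits.

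This points to the one real shortfall of your argument: your resummed matrix has rows $J_1\cup J_3\cup J_4$ and columns $J_1\cup J_2\cup J_4$, with the $(1+DB(s))$ structure only on the $J_4$ diagonal block, and "innocuous padding" by identity blocks and zeros cannot produce the $M$ of the statement, which carries non-trivial $B(s)$ entries in the rows indexed by $J_2$ and $J_5$ and a $(1+DB(s))_{J_3J_3}$ diagonal block — these come exactly from the paper's non-trivial choice of the free blocks in $M_1$ before multiplying by $B(s)$. Your matrix has the same determinant (both equal the same integral divided by the same prefactor), and it would serve equally well downstream (the loss of the $e^{\re\Tr(B(s)D)_{J_3J_3}}$ factor in the analogue of Lemma \ref{lem:cluster:detMbound} is absorbed by the $e^{K|\mathcal{J}|}$ already present), but as a proof of the lemma as literally stated you must either carry out the paper's padding-then-multiply step to land on the stated $M$, or restate the lemma with your matrix. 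The row/column rescaling and the bookkeeping of the prefactor in your last step are correct as written.
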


\begin{proof}
Note that the integral is zero unless $|J_2| = |J_3|$ 
because of the symmetry of the Fermionic Gaussian integral. 
Computing the Fermionic integral (\ref{eq:cluster:fermint}) by \eqref{eq:appsusy:intminor}, it is equal to
\begin{align*}
\sigma \det B(s) {\det}_{J_1\cup J_3, J_1\cup J_2}(B(s)^{-1} + \tilde D),
\end{align*}
where $\sigma$ is a sign, $\tilde D$ is a diagonal matrix with $\tilde D_j = D_j [1-d_j(\bar\rho)d_j(\rho)]$
and $\det_{IJ} A $ is the determinant of the minor of $A$, 
where the rows with indices in $I$ and the columns with indices in $J$ are crossed out. 
Since we estimate the absolute value in the next step, we do not need the precise sign. 
We assume without loss of generality that the indices $j\in\Lambda$ are ordered such that
$B(s)$ is a block matrix of the form $B(s)= (B(s)_{J_iJ_j})_{i,j=1}^5$ and 
$B(s)_{J_i,J_j} = (B(s)_{\alpha\beta})_{\alpha\in J_i, \beta\in J_j}$.

To simplify this expression the minor is extended to a $|\Lambda|\times|\Lambda|$ matrix 
without changing the determinant up to a sign in the following way:
\begin{align*}
M_1 = \left[
\begin{matrix}
A &
\ast &
\ast &
\ast &
\ast \\
0 &
0 &
(B(s)^{-1})_{J_2J_3} &
(B(s)^{-1})_{J_2J_4} &
(B(s)^{-1})_{J_2J_5} \\
0 &
A' &
\ast &
\ast &
\ast \\
0 &
0 &
(B(s)^{-1})_{J_4J_3} &
(B(s)^{-1}+D)_{J_4J_4} &
(B(s)^{-1})_{J_4J_5} \\
0 &
0 &
(B(s)^{-1})_{J_5J_3} &
(B(s)^{-1})_{J_5J_4} &
(B(s)^{-1})_{J_5J_5} \\
\end{matrix}
\right] ,
\end{align*}
where the blocks $A$ and $A'$ have determinant one. The blocks $\ast$ can be chosen arbitrarily. 
We choose $A$ and $A'$ as the identity, $(M_1)_{J_1 J_2} = 0$ and 
the other freely selectable blocks $(M_1)_{J_i J_j}$ as $ (B(s)^{-1} + D)_{J_i J_j}$. 
By multiplying with $B(s)$ from the left, we obtain
\begin{align*}
\tilde M = B(s) M_1 = \left[
\begin{matrix}
B(s)_{J_1 J_1} &
B(s)_{J_1 J_3} &
(B(s)D)_{J_1J_3} &
(B(s)D)_{J_1J_4} &
0\\
B(s)_{J_2 J_1} &
B(s)_{J_2 J_3} &
(B(s)D)_{J_2J_3} &
(B(s)D)_{J_2J_4} &
0\\
B(s)_{J_3 J_1} &
B(s)_{J_3 J_3} &
(1+B(s)D)_{J_3J_3} &
(B(s)D)_{J_3J_4} &
0\\
B(s)_{J_4 J_1} &
B(s)_{J_4 J_3} &
(B(s)D)_{J_4J_3} &
(1+B(s)D)_{J_4J_4} &
0\\
B(s)_{J_5 J_1} &
B(s)_{J_5 J_3} &
(B(s)D)_{J_5J_3} &
(B(s)D)_{J_5J_4} &
(1)_{J_5J_5}\\
\end{matrix}\right]
\end{align*}
Extracting a factor $r_{\triangle_j}(\rho)\sqrt{\ln W}/W$ for each $j \in J_1\cup J_2$ from lines $J_1$ and $J_2$
and a factor $r_{\triangle_j}(\bar\rho)\sqrt{\ln W}/W$ for each $j \in J_1\cup J_3$ from columns $J_1$ and $J_2$, 
we obtain (\ref{eq:cluster:fermintdet}). 
Note that columns of $B(s)$ with indices in $J_3$ become columns with indices in $J_2$ in $\tilde M$ 
such that we need to extract the factors from column $J_2$. 
\end{proof}

\begin{lemma}
\label{lem:cluster:detMbound}
The determinant of the matrix $M$ can be bounded by 
\begin{align*}
|\det M| \leq K
\eto^{\re \Tr (B(s)D)_{J_3\cup J_4,J_3\cup J_4} } \eto^{K(|Y|(\ln W)^\alpha+|\mathcal{J}|)}.
\end{align*}
\end{lemma}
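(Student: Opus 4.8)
The plan is to split $M$ according to the Fermionic pattern $J_1,\dots ,J_5$ into a ``bulk'' part that is a near-identity perturbation (so that \eqref{eq:preliminary:det1+A} applies and produces the trace factor) and a part of dimension $O(|\mathcal J|)$ (handled by Hadamard-type bounds), tracking carefully the diagonal weights $W/(r_\triangle\sqrt{\ln W})$ and using the two-dimensional estimates $\sum_i|B(s)_{ij}|^2\le KW^{-2}$ and $|B(s)_{jj}|\le KW^{-2}\ln W$. First reduce the size: the column block of $M$ indexed by $J_5$ vanishes except for the identity block $(M_{J_5J_5})=\delta_{\alpha\beta}$, so expanding along those columns gives $\det M=\pm\det\hat M$ with $\hat M$ the principal submatrix on $J_1\cup J_2\cup J_3\cup J_4$. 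Writing $P:=J_1\cup J_2$ and $Q:=J_3\cup J_4$, one reads off from Lemma \ref{lem:cluster:detM} that $\hat M_{QQ}=1_Q+(DB(s))_{QQ}$, that the blocks $\hat M_{PP},\hat M_{PQ},\hat M_{QP}$ are $B(s)$- or $DB(s)$-blocks with their $P$-legs rescaled by $W/(r_\triangle\sqrt{\ln W})$ (twice in $\hat M_{PP}$), and that $|P|\le 2(|\mathcal J|+1)$ with every site of $P$ lying in $\mathcal J\cup\{l_0\}$ (if $j\notin\mathcal J\cup\{l_0\}$ then $r_j=d_j=0$, so $j\in J_4\subset Q$); in particular $|Q|$ may be as large as $|Y|\,W^2(\ln W)^\alpha$ while $P$ has bounded size.

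Next I would extract the trace. Applying \eqref{eq:preliminary:det1+A} to $1_Q+A$ with $A=(DB(s))_{QQ}$ gives $|\det(1_Q+A)|\le e^{\re\Tr(B(s)D)_Q}\,e^{\frac12\Tr A^*A}$, and
\[
\Tr A^*A=\sum_{i,j\in Q}|D_j|^2|B(s)_{ij}|^2\ \le\ K\,|Q|\,W^{-2}\ \le\ K\,|Y|(\ln W)^\alpha ,
\]
using $\|D\|_\infty\le K$ and $\sum_i|B(s)_{ij}|^2\le KW^{-2}$ (which follows from the decay \eqref{eq:preliminary:expdecayC} and the remark in Section \ref{subsec:decayGB}, exactly as $\Tr C^*C$ was estimated in Lemma \ref{lem:preliminary:dmuB}). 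This already produces the required factor $e^{\re\Tr(B(s)D)_{J_3\cup J_4,J_3\cup J_4}}\,e^{K|Y|(\ln W)^\alpha}$. To pass from $\det(1_Q+A)$ to $\det\hat M$ I would use a block-Schur factorization $\det\hat M=\det(1_Q+A)\cdot\det\Sigma_P$, valid wherever $1_Q+A$ is invertible; since $\|D\|_\infty$ is only bounded by a constant (depending on $\eta$, not by $m_r^2$) this may fail at some configurations, but it holds on a dense set — the non-vanishing set of the polynomial $D\mapsto\det(1_Q+A)$, which contains $D=0$ — and both sides of the desired inequality are continuous in $(a,b)$, so the estimate extends by continuity (first replacing $B(s)$ by $B(s)+\varepsilon$ and letting $\varepsilon\downarrow0$ if the combinatorics force a systematic degeneracy).

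Finally, the $O(|\mathcal J|)$-dimensional factor $\Sigma_P$ is estimated by Hadamard's inequality. After the rescaling, a row of any of $\hat M_{PP},\hat M_{PQ},\hat M_{QP}$ has $\ell^2$-norm bounded by a constant: this uses $\sum_i|B(s)_{ij}|^2\le KW^{-2}$ and $|B(s)_{jj}|\le KW^{-2}\ln W$ together with the identities $|P\cap\triangle|=r_\triangle(\rho)$, $|(J_1\cup J_3)\cap\triangle|=r_\triangle(\bar\rho)$, which make the weights $W/(r_\triangle\sqrt{\ln W})$ cancel against the $B(s)$-decay cube by cube; near the degenerate set the factor $\|(1_Q+A)^{-1}\|$ appearing in $\Sigma_P$ must be kept paired with $\det(1_Q+A)$ (via the adjugate), so that no spurious blow-up occurs. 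This yields $|\det\Sigma_P|\le K^{|P|}\le e^{K|\mathcal J|}$, and, noting that the $|P|$ diagonal entries omitted from $\Tr(B(s)D)_Q$ contribute at most $\sum_{j\in P}|B(s)_{jj}|\,\|D\|_\infty\le K|\mathcal J|$ to the exponent, one obtains $|\det M|\le K\,e^{\re\Tr(B(s)D)_{J_3\cup J_4,J_3\cup J_4}}\,e^{K(|Y|(\ln W)^\alpha+|\mathcal J|)}$.

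The hard part is exactly the interplay between the non-invertibility of $1_Q+(DB(s))_{QQ}$ and the fact that $\hat M_{PP}$ is only $O(1)$ rather than close to the identity: $\det\hat M$ does not split into two separately bounded factors (where one is small the other is large), so the argument has to be arranged so that the factor $e^{\re\Tr(B(s)D)}$ is produced purely by the submultiplicative bound \eqref{eq:preliminary:det1+A}, which never requires invertibility, while the remaining factor is a genuinely finite-dimensional, $W$-independent combinatorial quantity; keeping the weights $W/(r_\triangle\sqrt{\ln W})$ explicit throughout is indispensable to cancel the $\ln W$ factors that a naive Hadamard estimate on the $P$-block would otherwise leave behind.
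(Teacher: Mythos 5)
Your reduction along the $J_5$ columns, your estimate $\Tr A^*A\leq K|Y|(\ln W)^\alpha$ on the $Q=J_3\cup J_4$ block, and your use of the weights $W/(r_\triangle\sqrt{\ln W})$ to get $O(|\mathcal J|)$ control on the $P$-legs are all sound and close to what the paper does. The gap is in the assembly: the Schur factorization $\det\hat M=\det(1_Q+A)\cdot\det\Sigma_P$ does not split into two \emph{separately} bounded factors, and your proposed remedies do not repair this. The bound $|\det\Sigma_P|\leq K^{|P|}$ requires control of $\|(1_Q+(DB(s))_{QQ})^{-1}\|$, which is simply not available: $\|D\|_\infty\leq K$ with $K$ of order one (not small in $W$) and $\|B(s)\|\leq\|C\|\leq m_r^{-2}$, so the operator norm of $(DB(s))_{QQ}$ need not be less than $1$ and $1_Q+(DB(s))_{QQ}$ can be singular or nearly singular on a set of configurations of positive measure — this is not a removable, measure-zero degeneracy that continuity can handle, because near such configurations $\det\Sigma_P$ genuinely blows up while the product stays finite. ``Pairing via the adjugate'' produces $\det(1_Q+A)^{1-|P|}\det\bigl(\det(1_Q+A)\hat M_{PP}-\hat M_{PQ}\,\mathrm{adj}(1_Q+A)\hat M_{QP}\bigr)$, which carries a negative power of the small determinant for $|P|\geq 2$ and reintroduces the same problem through the minors in the adjugate.

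The fix — and the paper's route — is to avoid factorizing at all: apply \eqref{eq:preliminary:det1+A} once to the \emph{whole} matrix with $A=M-1$, and exploit that both $\Tr A$ and $\Tr A^*A$ are additive over the blocks $(J_i,J_j)$. The $Q$-diagonal blocks then contribute exactly $\re\Tr(B(s)D)_{J_3\cup J_4,J_3\cup J_4}$ to $\re\Tr A$ plus your $K|Y|(\ln W)^\alpha$ to $\Tr A^*A$; the off-diagonal rescaled blocks contribute $K|\mathcal J|/\ln W$; and the only delicate piece, the $P\times P$ block where $M_{PP}$ is $O(1)$ rather than near the identity, is handled by the algebraic identity
\begin{align*}
\tfrac12\Tr (M_{PP}-1)^*(M_{PP}-1)+\re\Tr(M_{PP}-1)\;\leq\;\tfrac12\Tr M_{PP}^*M_{PP},
\end{align*}
in which the linear terms cancel, so that only the quantity $\Tr M_{PP}^*M_{PP}\leq K|\mathcal J|$ — precisely the cube-by-cube computation with the $r_\triangle$ weights that you already set up for your Hadamard step — is needed. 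This produces the trace factor and the $e^{K(|Y|(\ln W)^\alpha+|\mathcal J|)}$ error without ever inverting anything.
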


\begin{proof}
We use the usual bound for determinants (\ref{eq:preliminary:det1+A}) with $A= M-1$. Since 
\begin{align*}
\Tr A^* A = \sum_{i,j= 1}^5  \Tr (A_{J_i J_j})^* A_{J_i J_j} \text{ and } 
\Tr (A_{J_i J_j})^* A_{J_i J_j} = \sum_{\alpha\in J_i, \beta\in J_j} \bar{A}_{\alpha \beta} A_{\alpha\beta},
\end{align*}
each block can be bounded separately. 
For $j=5$ and all $i$ the trace above is zero.
For $i= 3,4,5$ and $j=3,4$, we have $A_{J_i J_j} = (DB(s))_{J_i J_j}$ and we estimate
\begin{align*}
\Tr (A_{J_i J_j})^* A_{J_i J_j} 
= \sum_{\alpha\in J_i,\beta\in J_j}  |B(s)_{\alpha\beta}|^2|D_\beta|^2
\leq K\sum_{\alpha\in Y}  \frac{1}{W^2} 
\leq K |Y|(\ln W)^{\alpha},
\end{align*}
where we used $|D_\beta|\leq K$ and the decay of $B(s)$ (Lemma \ref{lem:cluster:decayBG}).
For $i= 1,2$ and $ j= 3,4$, we only have off-diagonal terms and 
$A_{J_i J_j} = \left(\frac{W}{r_{\triangle}(\rho)\sqrt{\ln W}} B(s)D\right)_{J_i J_j}$. 
Using $r_\triangle \geq 1$ and $|D_\beta| \leq K$, we estimate
\begin{align*}
\Tr (A_{J_i J_j})^* A_{J_i J_j} 
\leq K\tfrac{W^2}{\ln W}\sum_{\alpha\in J_i,\beta\in J_j} |B(s)_{\alpha\beta}|^2 
\leq K\tfrac{W^2}{\ln W} \sum_{\alpha\in J_i} \tfrac{1}{W^2} \leq K \tfrac{|\mathcal{J}|}{\ln W},
\end{align*}
where the sum over $\beta\in J_j$ is extended to $\beta \in \Lambda$.
We bound the trace similarly for the case $i= 3,4,5$ and $ j= 1,2$,
where $A_{J_i J_j} = \left(\frac{W}{r_\triangle(\bar{\rho})\sqrt{\ln W}}B(s)\right)_{J_i J_j'}$. 
Extending the sum over $\alpha\in J_i$ to $\alpha \in \Lambda$, we end up with
\begin{align*}
\Tr (A_{J_i J_j})^* A_{J_i J_j} \leq
\tfrac{W^2}{\ln W}\sum_{\alpha\in J_i,\beta\in J_j} |B(s)_{\alpha\beta}|^2 
\leq  K\tfrac{W^2}{\ln W}\sum_{\beta J_j} \tfrac{1}{W^2} \leq K \tfrac{|\mathcal{J}|}{\ln W},
\end{align*}
For $i,j = 1,2$, we have 
$A_{J_i J_j}=\left( \frac{W^2}{r_\triangle(\rho)r_\triangle(\bar{\rho})\ln W}B(s)-1\right)_{J_i J_j'}$. 
Summing the trace of the quadratic term and the corresponding block of $\re \Tr A$, 
terms linear in $M$ cancel and we end up with
\begin{align*}
&\tfrac{1}{2} \Tr ((M_{J_iJ_j})^* -1 )(M_{J_i J_j}-1) + \re \Tr (M_{J_i J_j} -1)
\leq\tfrac{1}{2}\Tr (M_{J_i J_j})^* M_{J_i J_j}.
\end{align*} 
Now, the term $\Tr  (M_{J_i J_j})^* M_{J_i J_j}$ is bounded by using the factor $r_\triangle$ explicitly.
Rewriting the sum over $\alpha\in J_i$ and $\beta\in J_j$ into a sum over cubes, we obtain 
\begin{align*}
\Tr (M_{J_i J_j})^* M_{J_i J_j} &
\leq \sum_{\alpha\in J_i,\beta\in J_j} 
\frac{W^4}{r_{\triangle_\alpha}(\rho)^2 r_{\triangle_\beta}(\bar\rho)^2(\ln W)^2}|B(s)_{\alpha\beta}|^2\\
& \leq \sum\limits_{\substack{\triangle: \triangle\cap J_i \neq \emptyset \\ \triangle': \triangle'\cap J_j \neq \emptyset}}
\sum\limits_{\substack{\alpha\in\triangle\cap J_i \\ \beta\in\triangle'\cap J_j}}
\frac{W^4}{r_{\triangle}(\rho)^2 r_{\triangle'}(\bar\rho)^2(\ln W)^2}|B(s)_{\alpha\beta}|^2.
\end{align*}
The number of summands in the second sum is bounded by $r_\triangle(\rho) r_{\triangle'}(\bar\rho)$.
Applying the estimate $|\alpha-\beta| \geq W(\dist(\triangle_\alpha,\triangle_\beta)-1)$ 
and the decay of $B(s)$, we end up with
\begin{align*}
\begin{aligned}
\Tr (M_{J_i J_j})^* M_{J_i J_j} 
\leq &\sum\limits_{\substack{\triangle: \triangle\cap J_i \neq \emptyset \\ \triangle': \triangle'\cap J_j \neq \emptyset}}
\tfrac{K}{r_{\triangle}(\rho) r_{\triangle'}(\bar\rho)(\ln W)^2}  
\left( \delta_{\dist(\triangle,\triangle')-1<\frac{1}{m_r}}
\ln^2\left[\tfrac{W}{m_r}\tfrac{1}{W(\dist(\triangle,\triangle')-1)+1}\right]  \right.\\
&+ \left.\delta_{\dist(\triangle,\triangle')-1\geq\frac{1}{m_r}}
\eto^{-m_r(\dist(\triangle,\triangle')-1)}\right),
\end{aligned}
\end{align*}
where the factor $W^4$ cancels. The sum over $\triangle'$ is bounded by a constant independent of $W$ 
because of the exponential decay.
Therefore,
\begin{align*}
\Tr (M_{J_i J_j})^* M_{J_i J_j}  
\leq K\sum_{\triangle: \triangle\cap J_i \neq \emptyset} \frac{1}{r_\triangle(\rho)} \leq K |\mathcal{J}|.
\end{align*}
Combining these estimates, we end up with the result.
\end{proof}

\subsection{Summing up}
\label{subsec:summing}
In this section we will put together the estimates above to complete the proof.
The large factorials and combinatoric factors arising from  the bound of the 
functional integral and the sum over the cube positions will be controlled 
by  fractions of the exponential decay of  $G_q(s)CG_q(s),$ while
the  non-exponential part 
will allow to sum  over the vertex positions $i,j,k,k'$ inside 
each fixed cube.  Finally the sum over the tree structure
will be achieved by a standard argument.

\paragraph*{Reorganizing $\mathbf{W}$ factors.}
Before performing the estimates, we extract additional $W$ factors from  $G$
as follows:  
\begin{align}\label{eq:cluster:Wfactors}
&\prod_{q=1}^r |G_q(s)_{i_qk_q}| \ |C_{k_qk_q'}| \ |G_q(s)_{k_q'j_q}| 
\prod_{\triangle\in Y}\left(\tfrac{(\ln W)^{1/2}}{W}\right)^{r_\triangle}\\
=&
\prod_{q=1}^r \left|\tfrac{W^2}{\ln W}G_q(s)_{i_qk_q}\right| \ 
\left|\tfrac{\ln W}{W^2}C_{k_qk_q'}\right| \ 
\left|\tfrac{W^2}{\ln W}G_q(s)_{k_q'j_q}\right|
\prod_{j\in \mathcal{J}\cup\{l_{0} \} }\left(\tfrac{(\ln W)^{1/2}}{W}\right)^{n_j},\nonumber
\end{align}
where we remember that $d_{l_0}= 0$ and $r_{l_0}=2$ if $l_0\notin \mathcal{J}$.

\paragraph*{Factorials.}
We extract a small fraction of the exponential to control finite powers of factorials $d_\triangle!^p$.
The number $d_\triangle$ counts the number of multi-link starting points  $i_{q}$ and endpoints $j_{q}$
inside $\triangle.$ Denoting by $q_{0}$ the first (smallest) index in this family, one can see 
that  for all $q>q_{0},$ the cubes containing the vertex $k'_{q}$ are pairwise disjoint 
and different from $\triangle.$  For $d_\triangle$ large, more than half of these cubes
have distance of order $W(\ln W)^{\alpha/2}d_\triangle^{1/2}$ 
from $\triangle$ since we are in a finite dimensional space. 
Therefore we gain a factor of order $\exp(-(\ln W)^{\alpha/2 } d_\triangle^{3/2})$ from the exponential decay of $GCG$.
A small fraction from this beats finite powers of factorials in $d_\triangle$ and $n_\triangle$:
\begin{align}
\label{eq:cluster:factorials}
\prod_{q=1}^r \eto^{-\varepsilon|i_q-k_q|/W}\eto^{-\varepsilon|k_q-k_q'|/W}\eto^{-\varepsilon|k_q'-j_q|/W}
\leq \prod_\triangle\tfrac{K}{d_\triangle!^p}\leq \prod_\triangle\tfrac{K^{n_\triangle}}{n_\triangle!^{p/4}},
\end{align}
where we used $n_{\triangle}\leq 4 d_{\triangle}$ in the last step.
\vspace{0,2cm}

Applying Lemma \ref{lem:cluster:boundfunctint}, the bound of the factorials \eqref{eq:cluster:factorials}
and the reorganization  of the $W$ factors in \eqref{eq:cluster:Wfactors}, we have
\begin{align}\label{eq:cluster:puttingtogether}
&\prod_{q=1}^r\left(|G_q(s)_{i_qk_q}| |C_{k_qk_q'}| |G_q(s)_{k_q'j_q}| \right)|F_T[s](\{i_q,j_q\})|\\
 \leq&  \tfrac{\ln W}{W^2}K^{(\ln W)^{\alpha}}
\left[\prod_{q=1}^r \eto^{-f'd(\triangle'_{q},\triangle_{q})/W}
\eto^{-f'd(\triangle_{q},\triangle_{\mathcal{A}(q)})/W}\eto^{-f'd(\triangle_{\mathcal{A}(q)},\triangle''_{q})/W}\right]\nonumber\\
& \times
\sum_{d\in\mathcal{D}}\sum_{\{r_j\}_{j\in\mathcal{J}}}
\left(\tfrac{K^{(\ln W)^\alpha}(\ln W)^{1/2}}{W}\right)^{n_{l_0}-2}
\prod_{j\in\mathcal{J}\backslash\{l_0\}} \left(\tfrac{K^{(\ln W)^\alpha}(\ln W)^{1/2}}{W}\right)^{n_j} 
\prod_{q=1}^r \tilde{G}_{i_qk_q} \tilde{C}_{k_qk_q'} \tilde{G}_{k_q'j_q}\nonumber
\end{align}
where $f'=fm_r-\varepsilon$ is the remaining mass, $d(\triangle,\triangle')$ is the distance
between the centers of the cubes $\triangle$ and $\triangle'$, and $\tilde G$ and $\tilde C$ are
the prefactors of the exponential decay of $G$ and $C$ given by
\begin{align*}
\tilde{C}_{ij} =& 
\delta_{|i-j|\leq\tfrac{W}{m_r}} \tfrac{K\ln W}{W^4} \ln \left(\tfrac{W}{m_r}\tfrac{1}{|i-j|+1}\right)
 + \delta_{|i-j|>\tfrac{W}{m_r}} \tfrac{K\ln W}{W^{7/2}|i-j|^{1/2}}\\
\tilde{G}_{ij} =& \delta_{ij} \tfrac{W^2}{\ln W} 
+ \delta_{|i-j|\leq\tfrac{W}{m_r}} \tfrac{K}{\ln W} \ln \left(\tfrac{W}{m_r}\tfrac{1}{|i-j|+1}\right)
+ \delta_{|i-j|>\tfrac{W}{m_r}} \tfrac{1}{\ln W} .
\end{align*}

\paragraph*{Sum over the vertex position inside each cube.}
Remember that $\tilde{\triangle}_{q}=(\triangle_{q},\triangle'_{q},\triangle''_{q}),$ $q=0,\dots r.$ 
For each $\tilde{\triangle}_{q}$ we call $\tilde{\triangle}_{\mathcal{A} (q)}$ the ancestor
of  $\tilde{\triangle}_{q}$ in the tree, and  $\triangle_{\mathcal{A} (q)}$ the cube in 
 $\tilde{\triangle}_{\mathcal{A} (q)}$ containing $k_{q}.$
Let us now fix the tree structure $T,$ the position of the above cubes, and the multiplicities 
$d\in \mathcal{D}.$
\begin{lemma} 
The sum over the vertex positions $i_{q},j_{q},k_{q},k'_{q}$  compatible with the above 
constraints is bounded by
\begin{equation}\label{eq:cluster:sumpropagator}
\sum^{(T,d)}\limits_{\substack{i_q\in\triangle'_{q},j_q\in\triangle''_{q}\\
k_q\in\triangle_{q},k_q'\in\triangle_{\mathcal{A}(q)}}}
\prod_{q=1}^r \tilde{G}_{i_qk_q} \tilde{C}_{k_qk_q'} \tilde{G}_{k_q'j_q}
\leq K^{d_{l_0}}(\ln W)^{\alpha d_{l_0}}\prod_{j\in\mathcal{J}\backslash\{l_0\}} K^{d_j} W^{2} (\ln W)^{5d_{j}/2} 
\end{equation}
\end{lemma}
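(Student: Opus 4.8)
The plan is to estimate the left-hand side of \eqref{eq:cluster:sumpropagator} by summing over the vertex positions one cube at a time, exploiting the decay of $\tilde G$ and $\tilde C$ inside a fixed cube of volume $W^2(\ln W)^\alpha$. The key local estimates are the following: for a fixed $k\in\triangle$, summing $\tilde C_{kk'}$ over $k'$ in a single cube $\triangle'$ (of volume $W^2(\ln W)^\alpha$) gives, using the two-dimensional bound $\tilde C_{ij}\leq \tfrac{K\ln W}{W^4}\ln(\tfrac{W}{m_r(|i-j|+1)})$ for $|i-j|\leq W/m_r$ (which covers all pairs inside one cube since its linear size is $\sim W(\ln W)^{\alpha/2}\ll W/m_r$), a bound of order $\tfrac{\ln W}{W^4}\cdot W^2(\ln W)^\alpha\cdot\ln W=\tfrac{(\ln W)^{2+\alpha}}{W^2}$, where the $\ln$-factor is integrable against $\diff^2x$; similarly $\sum_{k'\in\triangle'}\tilde G_{kk'}\leq K(\ln W)^\alpha$ for $k\neq k'$ from the off-diagonal part, and $\tilde G_{kk}=W^2/\ln W$ for the diagonal part. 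First I would treat the ``worst'' diagonal terms: when $i_q=k_q$ or $k_q'=j_q$ the factor $\tilde G$ contributes $W^2/\ln W$ instead of a summable kernel, but then there is no free sum over that vertex, so each such coincidence trades a sum (worth $\lesssim W^2(\ln W)^{2+\alpha}$) for a factor $W^2/\ln W$, which is smaller; hence it suffices to bound the generic case where all four vertices of each multi-link are summed freely and check the diagonal cases give no larger contribution.

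Next I would organize the sum as an iterated sum following the tree order, from the leaves inward. Fix the tree $T$ and the multiplicities $d$. Summing over $j_q\in\triangle_q''$ produces $\sum_{j_q}\tilde G_{k_q'j_q}\leq K(\ln W)^\alpha$ (or $W^2/\ln W$ if $j_q=k_q'$, absorbed as above); summing over $i_q\in\triangle_q'$ produces $\sum_{i_q}\tilde G_{i_qk_q}\leq K(\ln W)^\alpha$; summing over $k_q'\in\triangle_{\mathcal{A}(q)}$ against $\tilde C_{k_qk_q'}$ — note $k_q'$ lies in the \emph{ancestor} cube and $k_q$ lies in $\triangle_q$ which might be far, but here $k_q'$ runs over a cube and by the logarithmic integrability we get $\sum_{k_q'\in\triangle_{\mathcal{A}(q)}}\tilde C_{k_qk_q'}\leq \tfrac{K(\ln W)^{2+\alpha}}{W^2}$ uniformly in $k_q$; finally the remaining sum over $k_q\in\triangle_q$ contributes a volume factor $W^2(\ln W)^\alpha$. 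Multiplying a single multi-link's contribution: $(\ln W)^\alpha\cdot(\ln W)^\alpha\cdot\tfrac{(\ln W)^{2+\alpha}}{W^2}\cdot W^2(\ln W)^\alpha\sim(\ln W)^{2+4\alpha}\leq (\ln W)^{5}$ for $\alpha<1$... actually more carefully the target is $W^2(\ln W)^{5d_j/2}$ per vertex $j$, and one should group the $(\ln W)$-powers per derived index $j\in\mathcal{J}$, each carrying multiplicity $|d_j|$ worth of multi-link endpoints; so I would distribute the $(\ln W)^{O(1)}$ and the constants $K$ combinatorially so that each $j\in\mathcal{J}\setminus\{l_0\}$ collects $K^{d_j}W^2(\ln W)^{5d_j/2}$ and the root $l_0$ collects $K^{d_{l_0}}(\ln W)^{\alpha d_{l_0}}$ (the root cube is already fixed, so no $W^2$ volume sum there for $k$ in $\triangle_0$; this is why $l_0$ is special).

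I expect the main obstacle to be bookkeeping the $(\ln W)$-powers and the constants cleanly so they land exactly on the right indices, in particular distinguishing the root contribution (no volume factor, since $\triangle_0$ is fixed and contains $l_0$) from the other derived indices, and handling multiplicities $|d_j|>1$ where several multi-link endpoints sit in the same cube. The subtlety is that a vertex $k_q'$ lying in an ancestor cube $\triangle_{\mathcal{A}(q)}$ can be shared by several multi-links; one must be careful that summing $k_q'$ once does not ``use up'' a factor that another link also needs. The resolution is that the inward summation respects the tree order: each $k_q'$ is summed in the step associated with link $q$, and the cube $\triangle_q$ (hence its vertex $k_q$) is summed exactly once, contributing its volume $W^2(\ln W)^\alpha$ exactly once; since each derived index $j$ appears as $i_q$ or $j_q$ in $|d_j|$-many links, it collects the $(\ln W)^\alpha$-type factor $|d_j|$ times. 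Finally I would note that a small fraction of the exponential decay $\eto^{-f'd(\triangle,\triangle')/W}$ was already extracted in \eqref{eq:cluster:factorials} to kill factorials, so the remaining exponential factors present in \eqref{eq:cluster:puttingtogether} are untouched here — the present lemma only handles the \emph{prefactor} sums, and the cube-distance exponentials will be summed separately in the tree-structure step that follows.
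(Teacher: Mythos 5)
Your overall strategy is the right one and matches the paper's: sum the vertex positions cube by cube following the tree order, use local per-cube estimates for $\sum\tilde C$ and $\sum\tilde G$, distribute the resulting factors onto the derived indices $j\in\mathcal{J}$ according to their multiplicities $d_j$, sum a given derived vertex only at its first (``new'') appearance, and treat $l_0$ as always fixed (``old'') so that it never acquires a volume factor. The paper implements exactly this via a five-case analysis (both endpoints new; one new, one old; both old; $i_q=j_q$ new; $i_q=j_q$ old).

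There is, however, a concrete quantitative error in your key local estimate that breaks the bookkeeping. You claim $\sum_{k'\in\triangle'}\tilde G_{kk'}\leq K(\ln W)^\alpha$. This is off by a factor of order $W^2$: the off-diagonal entries of $\tilde G$ are of size $K/\ln W$ (times a $\ln$), not $K/W^2$, since $\tilde G=\tfrac{W^2}{\ln W}G$ and $G_{ij}=O(W^{-2}\ln W)$; summing over the $W^2(\ln W)^\alpha$ sites of a cube therefore gives $\sum_{i\in\triangle}\tilde G_{ik}\leq K W^2(\ln W)^{\alpha-1}$ (and the diagonal term alone is already $W^2/\ln W$, which is larger than your claimed bound --- note that your own aside, where you call the summed kernel ``worth $\lesssim W^2(\ln W)^{2+\alpha}$'', contradicts the $(\ln W)^\alpha$ you use in the final tally). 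As a consequence your per-link product $(\ln W)^{2+4\alpha}$ should be $W^4(\ln W)^{O(1)}$ in the generic case where both $i_q$ and $j_q$ are new, which is precisely what the paper finds (case (a): $KW^4(\ln W)^{9\alpha/2-1}$, split as $W^2(\ln W)^{9\alpha/4-1/2}$ per new endpoint). This matters beyond arithmetic: the factor $W^2$ attached to each $j\in\mathcal{J}\setminus\{l_0\}$ in \eqref{eq:cluster:sumpropagator} originates exactly from the one $\tilde G$-sum performed when that vertex is new, and it is this $W^2$ that must later be beaten by $(W^{-1}(\ln W)^{1/2})^{n_j}$ with $n_j\geq 3$ in \eqref{eq:cluster:bound1}. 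With your local estimate there is only one $W^2$ per link (from the $k_q$-sum, which belongs to no derived index), so the distribution step ``each $j$ collects $W^2(\ln W)^{5d_j/2}$'' has no source to draw from, and the whole charging scheme is unsupported. Fixing the $\tilde G$-sum to $KW^2(\ln W)^{\alpha}$ and then running your case analysis (new/new, new/old, old/old, coincident) restores the paper's proof.
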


\begin{proof}
Each multi-link consists of four vertices $i_q,j_q,k_q,k_q'$, where $k_q$ and $k_q'$ must 
belong to different cubes while $i_q$ and $j_q$ are arbitrary. For $j=i_q$ or $j=j_q$, we say
\begin{itemize}[leftmargin=0.5cm]
\item $j$ is new in step $q$ if the $q$th multi-link extracts $j$ and $j$ was never extracted before.
\item $j$ is old in step $q$ if the $q$th multi-link extracts $j$ and $j$ was already extracted.
\end{itemize}
Since the multi-indices $d$ and the tree structure  are fixed,  the fact that
 $j$ is old or new is preserved when summing over its position inside the cubes.
We consider the different cases. Note that we only sum over $i_q$ and $j_q$ if they are new. 
If both $i_q$ and $j_q$ are of the same type (old or new), we distribute the resulting factor to both indices.
If one is old and one is new, the resulting factor counts only for the new index. 

\textit{a) $i_q \neq j_q$ and both $i_q$ and $j_q$ are new.} We sum over $i_q$ and $j_q$:
\begin{align*}
\sum_{k_q\in\triangle_{k_q}}\sum_{i_q\in\triangle_{i_q}} \tilde{G}_{i_qk_q}\sum_{k_q'\in\triangle_{k_q'}}
\tilde{C}_{k_qk_q'} \sum_{j_q\in\triangle_{j_q}} \tilde{G}_{k_q'j_q}
\leq K W^4 (\ln W)^{9\alpha/2-1} 
\end{align*}
Therefore, we pay a factor $W^2 (\ln W)^{9\alpha/4-1/2}$ for $i_q$ and the same factor for $j_q$.

\textit{b) $i_q\neq j_q$ and $i_q$ is new and $j_q$ is old. }
The same estimate holds for $i_q$ old and $j_q$ new. Then we sum only over $i_q$. 
\begin{align*}
\sum_{k_q'\in\triangle_{k_q'}}\tilde{G}_{k_q'j_q}\sum_{k_q\in\triangle_{k_q}} \tilde{C}_{k_qk_q'}
\sum_{i_q\in\triangle_{i_q}} \tilde{G}_{i_qk_q}\leq K W^2 (\ln W)^{7\alpha/2-1}.
\end{align*}
Hence, we need to bound a factor $W^2 (\ln W)^{7\alpha/2-1}$ for $i_q$ and no factor for $j_q$.

\textit{c) $i_q \neq j_q$ and $i_q$ and $j_q$ are old.} Then, $i_q$ and $j_q$ are both fixed and
\begin{align*}
\sum_{k_q\in\triangle_{k_q}}\tilde{G}_{i_qk_q}W^{-4} \ln^2 W\sum_{k_q'\in\triangle_{k_q'}}
\tilde{G}_{k_q'j_q}\leq K (\ln W)^{2\alpha} ,
\end{align*}
where we bound $\tilde{C}_{k_qk_q'}\leq W^{-4}\ln^2 W$. 
For both $i_q$ and $j_q$ we collect a factor $K (\ln W)^\alpha$.

\textit{d) $i_q = j_q$ and $i_q$ is new.} Then, we sum over $i_q$
\begin{align*}
\sum_{k_q\in\triangle_{k_q}}\sum_{k_q'\in\triangle_{k_q'}}
\tilde{C}_{k_qk_q'}
\sum_{i_q\in\triangle_{i_q}} \tilde{G}_{i_qk_q}\tilde{G}_{k_q'i_q}\leq K W^2(\ln W)^{5\alpha/2}
\end{align*}
and obtain a factor $W^2(\ln W)^{5\alpha/2}$ for $i_q$ and no factor for $j_{q}.$ 

\textit{e) $i_q= j_q$ and $i_q$ is old.} Then, $i_q$ is fixed and
\begin{align*}
\sum_{k_q\in\triangle_{k_q}}\tilde{G}_{i_qk_q}W^{-4}\ln^2 W\sum_{k_q'\in\triangle_{k_q'}}
\tilde{G}_{k_q'j_q}\leq K (\ln W)^{2\alpha}.
\end{align*}
We gain a factor $(\ln W)^\alpha$ for $i_q$ and $j_q$. Note that for $l_0 \in \mathcal{J}$, $l_0$
is always old. 
\end{proof}
Combining the products over $j\in \mathcal{J}$ in  \eqref{eq:cluster:puttingtogether} and
 \eqref{eq:cluster:sumpropagator} we obtain
\begin{align}\label{eq:cluster:bound1}
\prod_{j\in\mathcal{J}\backslash \{ l_{0}\}}
\hspace{-0,3cm} 
\left(\tfrac{K^{(\ln W)^{\alpha}}(\ln W)^{1/2}}{W}\right)^{n_j}K^{d_j}W^{2}(\ln W)^{5d_j/2}\leq
\hspace{-0,3cm}
\prod_{j\in\mathcal{J}\backslash \{ l_{0}\}} 
\left(\tfrac{K^{(\ln W)^\alpha}(\ln W)^{3}}{W^{1/3}}\right)^{n_j},
\end{align}
where we used  $n_j\geq 3$ for all $j\in\mathcal{J}, j\neq l_{0}.$ 
The point $j=l_0$ is special since $n_{l_0} \geq 2$. 
But since the position $l_0$ is fixed, $l_0$ is always ``old'' and we obtain 
\begin{align}\label{eq:cluster:bound2}
\left(K^{(\ln W)^\alpha}\tfrac{(\ln W)^{1/2}}{W}\right)^{n_{l_0}-2}(K(\ln W)^\alpha)^{d_{l_0}}.
\end{align}
Finally, we perform the sum over the multi-indices $d$ and $r,$ compatibles with the fixed
tree structure.  The sum over $r_j= (r_j(a),r_j(b),r_j(\bar\rho),r_j(\rho))$ gives a factor $K^r$ since
$|r_j|\leq 3|d_j|$ and $\sum_{j\in\mathcal{J}}|d_j| = 2r$.
The sum over  $d_j = (d_j(a),d_j(b),d_j(\bar\rho),d_j(\rho))$ can be estimated by an integral over a simplex of  length
$r$, giving an additional  factor $K^r.$ 

Combining these factors with \eqref{eq:cluster:bound1} and \eqref{eq:cluster:bound2}   
we obtain the bound  $ g^{r-1},$ where $g=K^{(\ln W)^\alpha}W^{-1/3+\varepsilon}$ with
$0<\varepsilon\ll 1/3,$ hence $g\ll 1$ for $W$ large.

\paragraph*{Sum over the cube position and the tree structure.}
For a fixed tree structure we use the remaining exponential decay of $GCG$ to sum over the positions of the
all cubes inside $\tilde{\triangle}_q$ for all $1\leq q \leq r$,
starting from the leafs (i.e. vertices with degree 1) and going towards the root $\triangle_0$. 
For each multi-link connecting a generalized cube $\tilde{\triangle}_{q}$ to its ancestor 
$\tilde{\triangle}_{\mathcal{A}(q)}$ the position of   $\triangle_{q}, \triangle'_{q}$ and $\triangle_{q}''$
is summed over using  the exponential decay  of $GCG.$  This costs only a constant factor for each cube.
Finally we pay a factor $3$ to choose the position of the ancestor in $\tilde{\triangle}_{\mathcal{A}(q)}$.
We end up with
\begin{align*}
\left|F^{(l_0)}_\Lambda\right| = \tfrac{\ln W}{W^2}
\eto^{K(\ln W)^\alpha} \left[1+ \sum_{r\geq 1} \sum_{T \text{ unordered}}
\sum_{\text{orders}}
\int_{[0,1]^r} \prod_{q=1}^r \diff s_q |M_T(s)|  \ g^{r-1}\right].
\end{align*}
Integrating over the interpolating factors $s$ cancels the last sum over the orders of the trees 
(cf. \cite[Lemma III.1.1]{rivasseau}):
$\sum_{\text{orders}}\int_{[0,1]^r} \prod_{q=1}^r \diff s_q |M_T(s)| = 1.$
The remaining sum is written as
\begin{align*}
\sum_{r\geq 1}\sum_{T} g^{r-1} \leq 1 +\sum_{r\geq 2}\sum_{T} \sqrt{g^{r}}= 
\sum_{\deg_{\tilde\triangle_0}\geq 1} \prod_{i_0=1}^{\deg_{\tilde\triangle_0}}\sqrt{g}
\left[\sum_{\deg_{\tilde\triangle_{i_0}}\geq 1} \prod_{i_1=1}^{\deg_{\tilde\triangle_{i_0}}-1} \sqrt{g} \sum \dots \right],
\end{align*}
where  $\deg_{\tilde\triangle}$ denotes the degree of the generalized cube $\tilde\triangle$ in the tree $T$. 
Since $g\ll  1$, we can sum from the leaves  towards the root
using a standard procedure (cf. \cite[Section 6.3.4]{disertori-pinson-spencer})
and bound the sum above by a constant.
It suffices to assume $\sqrt{g}< 1/4$ to make this procedure work. Hence $W_0(\alpha)$ need 
to be chosen large enough that $K^{(\ln W)^\alpha}W^{-1/3+\varepsilon}<1/16$ for all $W\geq W_0(\alpha)$. 
Finally we estimate the sum over $l_0$ using the exponential decay of $|B_{0l_0}|$. As a result
\begin{align*}
\left|\int \diff \mu_B(M) \eto^{\mathcal{V} (M)} a_0\right|\leq
\sum_{l_0\in\Lambda} |B_{0l_0}|
\left|F^{(l_0)}_\Lambda\right| \leq \tfrac{\ln W}{W^2}\eto^{K(\ln W)^\alpha}.
\end{align*}
This proves the first part of Theorem \ref{theo:strategy:reducedresult}.

\subsection{Derivatives}
\label{subsec:cluster:derivatives}
Bounding the derivative is similar to the procedure above. Our starting point is
\begin{align*}
\sum_{j_1,\dots,j_n}\sum_{l_0,\dots,l_n} B_{0 l_0}\prod_{m=1}^n B_{j_m l_m}
F_\Lambda^{(l_0,\dots,l_n)}.
\end{align*}
Since the $B$ factors control the sums over $l_0$ and over the $j_m$'s, we have only $n$ remaining sums
of $l_1,\dots,l_n$ over the volume $\Lambda$.
We observe that a cluster expansion of $F_\Lambda^{(l_0,\dots,l_n)}$ extracts only trees such that  all 
indices $l_0,\dots,l_n$ are in the connected cluster.  
We can extract a fraction of the exponential decay of $GCG$ 
to sum over the 'coarse' position of the $l_0,\dots,l_n$, i.e. the position
of the cubes containing the indices. Finally, to sum over the index position inside each cube, we 
need to extract at least a factor $(W^2 (\ln W)^\alpha)^{-1}$ for each
$l_1,\dots,l_n$.

As mentioned above, applying the cluster expansion directly to $F_\Lambda^{(l_0,\dots,l_n)}$ is not enough
to  extract this fine structure. Problems arise when two or more of the $l_j$'s coincide and
we have contributions from $\prod_{m=1}^n \Str\partial_{M_{l_m}} \partial_{a_{l_0}} \exp(\mathcal{V}(M))$
of the form $(\Str \partial_{M_l})^n\mathcal{V}(M_l)$ with $n\geq 2$. 
Since the lowest order contribution of $\mathcal{V}(M)$ is cubic, we obtain linear or constant terms in $ M$.
Note that constant terms vanish, since $(\Str \partial_{M_i})^n\Str M_j^n = \delta_{ij} n! \Str 1 = 0$.
Linear terms may display a problem if a derivative of the cluster expansion falls on them.
In this case, we have no field factor left and we gain only
only $W^{-1} (\ln W)^{1/2}$ from the derivative (cf. eq. \eqref{eq:cluster:Wfactors})
of the cluster expansion, which is not enough for the fine structure estimates.

In the special case $l_0$, problems arise for terms of the form
 $(\Str \partial_{M_{l_0}})^n\partial_{a_{l_0}}\mathcal{V}(M_{l_0})$,
with $n\geq 1$, since we obtain again linear or constant terms in $a_{l_0}$.
For linear terms we have the same problem as above.
Note that also in this case the constant term vanishes since the whole integral 
corresponds to the derivative of a constant (cf. proof of Lemma \ref{lem:strategy:dualrep}
and Lemma \ref{lem:cluster:complementY}) except in the special case  when $n\geq 2$ and all $l_k$ coincide.
Indeed, in this case the integral coincides with \eqref{eq:strategy:avarage1} and hence yields one, but 
this is no problem since we can 
sum over the remaining indices using the $B$ factors.

To solve these problems, we apply integration by parts on the linear contributions of the form $\Str M_{l_j}$
with $l_j\neq l_0$ as in \eqref{eq:appproofsusy:intbyparts}
\emph{before} performing the cluster expansion. Each new $B$ factor that we obtain ensures summation over at least one old index,
while a new index to be summed, coupled with an $\Str\partial_{M}$, appears. 
Again, the derivative may fall on the exponential $\exp(\mathcal{V}(M))$ (extracting a new term $\Str M^2$ at lowest order)
or a prefactor $\Str M^n$ for $n\geq 1$. If $n=1$ the integral vanishes by the same arguments as above.
For $n=2$ we obtain a new linear term, where we need to perform integration by parts.
In all other cases we obtain enough fine structure.
Note that the procedure ends after at most $2n$ steps.
 
Again a derivative falling on another linear
contribution vanishes since by the same arguments as above.
Therefore we end up with functional integrals of the form
\begin{align*}
\sum_{k\in\Lambda \forall k\in\mathcal{K}}
\int \diff\mu_B(M) a_{k_0}^{m_{k_0}}\prod_{k\in\mathcal{K}}\Str M_k^{m_k}  \eto^{\mathcal{V}(M)}
\end{align*}
where $m_{k_0}\geq 1$, $m_k\geq 2$ and $|\mathcal{K}|\leq n$.
Note that again the index $k_0$ is special and a constant term i.e. $m_{k_0}=0$ means the 
integral corresponds to the derivative of a constant.

Applying here the cluster expansion yields a connected tree containing all indices
$k_0,\dots,k_n$. We obtain a functional integral of the form
\begin{align*}
F_T^{(k_0,\dots,k_n)}[s](\{i_q,j_q\}) =\int \diff\mu_{B(s)}(M) 
 \prod_{q=1}^r\Str(\partial_{M_{i_q}}\partial_{ M_{j_q}})
\left[a_{k_0}^{m_{k_0}}\prod_{k\in\mathcal{K}}\Str M_k^{m_k}  \eto^{\mathcal{V}(M)}\right]
\end{align*}
and bound it similar to Lemma \ref{lem:cluster:boundfunctint}.
Note that the indices $k_0,\dots,k_n$ need to be treated 
separately as $l_0$ before.
We obtain $n_{k_0}\geq 1$ and $n_{k}\geq 2$ for $k\in\mathcal{K}$.
Since we sum later over these indices, they are 'old' and 
hence we obtain at least a total factor $W^{-(2n+1)}(\ln W)^n.$
Collecting all $W$ contributions we get
\begin{align*}
W^{2n} (\ln W)^{n\alpha } W^{-(2n+1)} (\ln W)^n \eto^{K  (\ln W)^{\alpha }} =W^{-1}  (\ln W)^{n (\alpha+1) } \eto^{K  (\ln W)^{\alpha }}\leq 1
\end{align*}
for $W$ large enough (depending on $n$). 
Note that the first factor comes from the sum over the index position inside each cube,
and the last from the contribution of the root cube (see end of Section \ref{subsec:summing}).

\begin{appendix}
\section{Supersymmetric Formalism}
\label{sec:appsusy}
We will summarize the main ideas of the supersymmetric formalism 
(see \cite{efetov} for an easy-to-read introduction and
\cite{berezin} for a detailed description).
\begin{defi}[Grassmann algebra]
Let $N\in \mathbb{N}$ and let $V$ be a vector space over a field $\mathbb{K}$
with basis $(\alpha_1,...,\alpha_N)$ and denote the antisymmetric tensor product by 
\begin{align*}
\wedge : V \times V &\to V\otimes_{as} V,\\
(v,w) &\mapsto v \wedge w = vw = -wv.
\end{align*}
The corresponding \emph{Grassmann algebra} is defined by 
\begin{align*}
\mathcal{A} \coloneqq \bigoplus_{k\geq 0} V^k,
\end{align*}
where $V^0 = \mathbb{K}$, $V^1 = V$ and $V^k = V^{k-1} \otimes_{as} V$ for $k\geq 2$. 
This is an associative algebra with unit.
We distinguish between the subsets of even elements 
$\mathcal{A}^0\coloneqq \bigoplus_{k\geq 0}V^{2k}$ and odd elements 
$\mathcal{A}^1 \coloneqq \bigoplus_{k\geq 0} V^{2k+1}$.
While the even elements form an algebra again, this is not true for $\mathcal{A}^1$.
Even elements commute with all elements in the Grassmann algebra
and are called \emph{Bosonic variables}.
On the other hand, two odd elements anticommute and are called \emph{Fermionic (or
Grassmann) variables}. 
\end{defi}
The generators $(\alpha_1,...,\alpha_N)$ of $\mathcal{A}$
are  Grassmann variables, and are provided with the anticommutation property
$
\alpha_i \alpha_j = - \alpha_j \alpha_i 
$
for all $i,j = 1,...,N$. Note that this directly implies $\alpha_i^2 = 0$.
Hence, any element in $\mathcal{A}$ is a finite polynomial of the form
\begin{align*}
f(\alpha_1,...,\alpha_N) = 
f_0 + \sum_{k = 1}^N \sum_{i_1 < ... < i_k} f_{i_1,...,i_k} \alpha_{i_1} \cdots \alpha_{i_k},
\end{align*}
where $f_0,f_{i_1,...,i_k}\in\mathbb{K}$ and $f_0$ is called
\emph{spectrum} of $f(\alpha_1,...,\alpha_N)$. 

\begin{defi}[Grassmann integration]
\label{def:appsusy:grassmannintegral}
As a formal symbol, we define the integral over a Grassmann variable as
$\int \diff \alpha_i \ 1 = 0$ and $\int \diff \alpha_i \ \alpha_i = \tfrac{1}{\sqrt{2\pi}}.$
To define integration of multiple variables, we assume Fubini's theorem applies, 
but the differentials anticommute.
\end{defi}

\begin{notation}
To keep the notation as short as possible, we write for any two families $(\zeta_i)_{i\in I}$
and $(\xi_i)_{i\in I}$ of Bosonic and/or Fermionic variables,
the sum over the corresponding index set $I$ as $(\zeta,\xi) = \sum_{i\in I} \zeta_i \xi_i$.
\end{notation}
\paragraph*{Gaussian integral.}
We will often use the following Gaussian integral formulas. 
Let $x\in \mathbb{R}^n$ and $z\in \mathbb{C}^n$.
For $M\in \mathbb{C}^{n\times n}$ with positive definite Hermitian part, 
\begin{align*}
\int \diff x \eto^{-\frac{1}{2}(x,Mx)}\eto^{(x,y)} 
&=\tfrac{(2\pi)^{n/2}}{\sqrt{\det M}} \eto^{\frac{1}{2}(y,M^{-1}y)},\\
\int \diff \bar z\diff z \eto^{-(\bar z, Mz)}\eto^{(\bar v,z) + (\bar z, w)}
&= \tfrac{(2\pi)^n}{\det M}\eto^{(\bar v,M^{-1} w)},
\end{align*}
where $y,v,w\in\mathbb{C}^n$ and the measures are usual Lebesgue product measures, i.e.
$\diff x = \prod_{i=1}^n \diff x_i$,
$\diff \bar z \diff z =\prod_{i=1}^n \diff  \bar z_i  \diff  z_i$ and
$\diff\bar z_i \diff z_i = 2 \diff \re z_i \diff \im z_i$. 
Note that the formulas remain valid if we replace $y_i,v_i$ and $w_i$ by even elements of $\mathcal{A}$.
A direct consequence of the latter are the following identities
\begin{align}
\label{eq:appsusy:idea}
\int  \diff \bar z \diff z \ \eto^{(\bar z, Mz)} =\tfrac{(2\pi)^n}{\det M}\quad \text{and}\quad
\int \diff \bar z \diff z \ z_k  \ \bar z_l \ \eto^{- (\bar z, Mz)} = M^{-1}_{kl}\tfrac{(2\pi)^n}{\det M}.
\end{align}
Using Definition \ref{def:appsusy:grassmannintegral} above, we obtain similar Fermionic formulas.
Let $(\chi_i)_{i=1}^n$ and $(\bar\chi_i)_{i=1}^n\subset\mathcal{A}_1$ be two families
of the Grassmann variables, where the $\bar\chi_i$'s are independent of the $\chi_i$'s.
For an arbitrary $M\in \mathbb{C}^{n\times n}$, we have
\begin{align}
\label{eq:appsusy:gaus}
\int \diff\bar\chi\diff \chi \eto^{- (\bar\chi, M \chi)} &= (2\pi)^{-n}\det M,\\
\nonumber
\int \diff\bar\chi\diff\chi\eto^{-(\bar\chi,M\chi)}\eto^{(\bar\rho,\chi)+(\bar\chi,\rho)} 
&= (2\pi)^{-n}\det M\eto^{(\bar\rho,M^{-1}\rho)},\\
\label{eq:appsusy:intminor}
\int \diff\bar\chi\diff \chi \eto^{- (\bar\chi, M \chi)} \prod_{i\in I}\chi_i \prod_{j\in J} \bar\chi_j
&= \sigma_{IJ}\delta_{|I|=|J|}(2\pi)^{-n}{\det}_{JI} M,
\end{align}
where $\diff\bar\chi\diff \chi = \prod_{i=1}^n \diff\bar\chi_i\diff\chi_i$ and 
$(\rho_i)_{i=1}^n$ and $(\bar\rho_i)_{i=1}^n\subset\mathcal{A}_1$ are two families of
Grassmann variables. Moreover, $\sigma_{IJ}$ is a sign,
$I,J\subset\{1,\dots,n\}$ are two index sets and 
$\det_{JI} M$ is the determinant of the minor of $M$ where the rows with indices in $J$ and 
the columns with indices of $I$ are crossed out.

\paragraph*{Supervectors and Supermatrices}
To combine real or complex variables with Grassmann ones,
we introduce the notation of a \emph{supervector} $\Phi$ consisting of
$p$ Bosonic variable $X=(X_i)_{i=1}^p\in(\mathcal{A}^0)^p$ 
and $q$ Fermionic variable $\alpha= (\alpha_j)_{j=1}^q\in(\mathcal{A}^1)^q$ by
\begin{align*}
\Phi =
\begin{pmatrix}
X \\
\alpha 
\end{pmatrix}.
\end{align*}
A \emph{supermatrix} is a linear transformation between supervectors, i.e.
\begin{align}
\label{eq:appsusy:supermatrix}
\Phi' = \mathbf{M} \Phi, \quad \mathbf{M} = 
\begin{pmatrix}
a & \sigma \\
\rho & b\\
\end{pmatrix},
\end{align}
where $a,b$ are $p\times p$ and $q\times q$ matrices in $\mathcal{A}_0$ 
and $\sigma,\rho$ are $p\times q$ and $q \times p$ matrices in $ \mathcal{A}_1$.
We denote supermatrices by bold face capital letters.
For the supermatrix $\mathbf{M}$, we define the notation of a \emph{supertrace}
and a \emph{superdeterminant} as
\begin{align}\label{eq:appsusy:strace}
\Str \mathbf{M} \coloneqq \Tr a -  \Tr b 
\quad \text{and} \quad
\Sdet \mathbf{M} \coloneqq \det[a-\sigma b^{-1} \rho] \det [b^{-1}].
\end{align}
Finally, the inverse of the supermatrix $\mathbf{M}$ is given by
\begin{align*}
\mathbf{M}^{-1} =
\begin{pmatrix}
(a-\sigma b^{-1}\rho)^{-1} &
-(a-\sigma b^{-1}\rho)^{-1}\sigma b^{-1}\\
-b^{-1}\rho (a-\sigma b^{-1}\rho)^{-1} &
b^{-1} + b^{-1}\rho (a-\sigma b^{-1}\rho)^{-1}\sigma b^{-1}
\end{pmatrix}.
\end{align*}
Let $\mathbf{M}$ be a supermatrix of the form \eqref{eq:appsusy:supermatrix} and $\Phi$ a supervector 
and  $\Phi^*$ its adjoint
\begin{align}\label{eq:appsusy:supervector}
\Phi =
\begin{pmatrix}
z \\
\chi 
\end{pmatrix}
\quad\text{ and } \quad
\Phi^* = ( \bar z,\bar\chi),
\end{align}
where $z\in\mathbb{C}^p,$  $\chi=(\chi_j)_{j=1}^q$ and $\bar\chi=(\bar\chi_j)_{j=1}^q$
are again independent families of Grassmann variables.
We can write the superdeterminant as a Gaussian integral 
\begin{align}
\label{eq:appsusy:sdetintegral}
\int \diff\Phi^*\diff\Phi \eto^{-(\bar\Phi, \mathbf{M} \Phi)} = \Sdet \mathbf{M^{-1}},
\end{align}
where $\diff\Phi^* \diff\Phi = \diff \bar\chi \diff \chi \diff \bar z\diff z$.
Below, we consider only the special case $p=q=1$.

\section{Proof of Lemma \ref{lem:strategy:susy}}
\label{sec:appproofsusy}
We combine \eqref{eq:appsusy:idea} and \eqref{eq:appsusy:gaus}
to rewrite the Green's function as a Gaussian integral.
Let  $\chi = (\chi_i)_{i\in\Lambda}$ and
$\bar\chi = (\bar\chi_i)_{i\in\Lambda}$ be two families of Grassmann variables,
 $z = (z_j)_{j\in\Lambda}\in\mathbb{C}^\Lambda$ and $\Phi = (\Phi_j)_{j\in\Lambda}$ and
$\Phi^* = (\Phi^*_j)_{j\in\Lambda}$ two sets of supervectors defined as in \eqref{eq:appsusy:supervector}.
Using the fact that $(-i(E_\varepsilon-H))^{-1}$ has positive definite Hermitian part, we write
\begin{align}
\sum_{k\in\Lambda}G^+_\Lambda (E_\varepsilon)_{kk}&=
-i (2 \pi)^{-|\Lambda|}\det[-i(E_\varepsilon - H)]
\int \diff  \bar z \diff  z
\eto^{i ( \bar z, (E_\varepsilon - H) z)}
\sum_{k\in\Lambda}z_k \bar{z}_k
\nonumber\\
&=
-i \int \diff \Phi^* \diff  \Phi 
\eto^{i \sum_{i,j\in\Lambda} (\Phi_i,(\delta_{ij}\mathbf{E}_\varepsilon-\mathbf{H}_{ij}) \Phi_j)} 
\sum_{k\in\Lambda}z_k \bar{z}_k,\label{eq:appproofsusy:help1}
\end{align}
where the product measure 
is defined
as in \eqref{eq:appsusy:sdetintegral}. Note that the bold face printed $\mathbf{E}_\varepsilon$ and
$\mathbf{H}_{ij}$ are $2\times 2$ supermatrices with diagonal entries $E_\varepsilon$ and $H_{ij}$,
respectively, and vanishing off-diagonal entries.
Since the contribution of the random matrix $H$ appears only in the exponential,
using a Hubbard-Stratonovitch transformation as in \cite[Lemma 1]{disertori-pinson-spencer}, 
we can rewrite the average over $H$ as
\begin{align}
\label{eq:appproofsusy:averageH}
\mathbb{E}\left[ \eto^{-i \sum_{i,j\in\Lambda}(\bar\Phi_i, \mathbf{H}_{ij} \Phi_j)} \right] =& 
\eto^{-\frac{1}{2} \sum_{i,j\in\Lambda} J_{ij} (\Phi_i^{*} \Phi_j^{}) (\Phi_j^{*} \Phi_i^{})}=
\eto^{-\frac{1}{2} \sum_{i,j\in\Lambda} J_{ij} \Str (\Phi_i^{} \Phi_i^{*}) (\Phi_j^{}\Phi_j^{*})}\\
=& \int  \prod_{j\in\Lambda} \diff \mathbf{M}_j
\eto^{- \frac{1}{2}\sum_{i,j\in\Lambda} J_{ij}^{-1}\Str[\mathbf{M}_i\mathbf{M}_j]}
\eto^{- i \sum_{j\in\Lambda}(\bar\Phi_j, \mathbf{M}_j \Phi_j)},\label{eq:appproofsusy:averageH1}
\end{align}
where
\begin{align*}
\mathbf{M}_j = \begin{pmatrix} a_j & \bar{\rho}_j \\ \rho_j & i b_j \end{pmatrix} 
\quad \text{and} \quad 
\diff \mathbf{M}_j \coloneqq \diff a_j \diff b_j \diff\bar\rho_j \diff \rho_j,
\end{align*}
$(a_j)_{j\in\Lambda}$ and $(b_j)_{j\in\Lambda}$ are families of real variables and
$(\rho_j)_{j\in\Lambda}$ and $(\bar{\rho}_j)_{j\in\Lambda}$ are two families
of Grassmann variables.

The expression for the observable  $-i\sum_{k\in\Lambda} \bar z_k z_k$ can be written as the derivative
$\sum_{k\in\Lambda} \partial_{a_k}$ of the second exponential in \eqref{eq:appproofsusy:averageH1}. Hence, 
 applying integration by parts in the variables $a_j,$  the derivative falls on the first exponential
in \eqref{eq:appproofsusy:averageH1} which yields 
$\sum_{k\in\Lambda}\sum_{l\in\Lambda} J^{-1}_{kl}a_l=\sum_{l\in\Lambda}a_{l}$.
Since the integral expression is still translation invariant in  $\Lambda$, by relabeling the indices
we can now substitute the sum $|\Lambda|^{-1}\sum_{l\in\Lambda} a_l $ by $a_0$. 
This step simplifies the integral compared to \cite[eq.(3.1)]{disertori-pinson-spencer}.

By \eqref{eq:appsusy:sdetintegral}, the integral over the supervector yields
\begin{align*}
\frac{1}{|\Lambda|}\sum_{k\in\Lambda}\mathbb{E}[ G^+_\Lambda(E_\varepsilon)_{kk} ] = &
\int \prod_{j\in\Lambda} \diff \mathbf{M}_j 
\eto^{- \sum_{i,j\in\Lambda}J_{ij}^{-1}\Str[\mathbf{M}_i  \mathbf{M}_j]}
\prod_{j\in\Lambda}\Sdet[\mathbf{E}_\varepsilon - \mathbf{M}_j]^{-1} a_0.
\end{align*}
Finally, we insert the expressions for the supermatrix $\mathbf{M}$ and perform the integration
over the Grassmann variables applying \eqref{eq:appsusy:gaus}. This proves \eqref{eq:strategy:dualrepsusy1}.

For  \eqref{eq:strategy:dualrepsusy2}, note that each $E$-derivative 
of \eqref{eq:appproofsusy:help1} results into 
$i\sum_{k\in\Lambda}\bar z_k z_k + \bar\chi_k \chi_k,$ which can be replaced by
$\sum_{k\in\Lambda}\Str \mathbf{M}_k$. Now, using the definition \eqref{eq:cluster:susygausmeasure}
with $B$ replaced by $J$, integration by parts in $\Str \mathbf{M}_k$ yields
\begin{align}\label{eq:appproofsusy:intbyparts}
\int \diff\mu_{J}(\mathbf{M}) \Str \mathbf{M}_j \mathcal{F}(\mathbf{M}) =
\sum_{k} J_{kj} \int\diff\mu_{J}(\mathbf{M})\Str \partial_{\mathbf{M}_j} \mathcal{F}(\mathbf{M}), 
\end{align}
where
\begin{align}\label{eq:appproofsusy:partialM}
\partial_{\mathbf{M}_j} = \begin{pmatrix} \partial_{a_j} &-\partial_{\rho_j}\\ 
\partial_{\bar\rho_j} & i\partial_{b_j}\end{pmatrix}
\end{align}
and $\mathcal{F}(\mathbf{M})$ is any smooth function
such that the integral above exists.

\section{Estimates of the covariance}
\label{sec:appcov}
Let $d=2$,  $\Lambda\subset \mathbb{Z}^2$  a finite cube,
$-\Delta_\Lambda^P$ the discrete Laplacian on $\Lambda$ with periodic boundary conditions and 
$-\Delta$ the discrete Laplacian on $\mathbb{Z}^2$. 
We consider the two covariances $C_{m}^{\Lambda}:= (-\Delta_\Lambda^P + m^2)^{-1}$ 
for the finite cube 
$\Lambda$ and $C^\infty_{m} = (-\Delta + m^{2})^{-1}$ for $\mathbb{Z}^2$, 
 with $m>0$.
We will prove the following result.
\begin{lemma} \label{lem:appcov:covariance}
The finite volume covariance $C_{m}^{\Lambda}$ satisfies
\begin{align}
\label{eq:appcov:upperbound}
0< (C_{m}^{\Lambda})_{ij} \leq
\begin{cases}
K \ln\left(\frac{1}{m(1+|i-j|)}\right) & \text{if } |i-j|\leq \frac{1}{m}\\
\frac{K}{(|i-j|m)^{1/2}} \eto^{-m|i-j|} & \text{if } |i-j| > \frac{1}{m},
\end{cases}
\end{align}
provided the mass is small  $0<m\ll 1$ and $m|\Lambda|^{1/2}>1$. 
Moreover for all $m<1$ the diagonal part satisfies
\begin{align*}
 (C_{m}^{\Lambda})_{jj} \geq (C^\infty_{m})_{jj} \geq K_{1} \ln (m^{-1})+ K_{2}
\end{align*}
for some constants $K_{1},K_{2}>0$ uniformly in $\Lambda$. 
\end{lemma}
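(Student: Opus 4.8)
The plan is to pass to the heat-semigroup representation and to reduce the statement to classical Gaussian heat-kernel bounds for the continuous-time random walk on $\mathbb{Z}^2$, combined with a periodization identity between finite and infinite volume.

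First I would write $(C_m^\Lambda)_{ij}=\int_0^\infty \eto^{-sm^2}\,p_s^\Lambda(i,j)\,\diff s$, where $p_s^\Lambda(i,j)=(\eto^{s\Delta_\Lambda^P})_{ij}$ is the kernel of the continuous-time simple random walk on the torus, and likewise $(C_m^\infty)_{ij}=\int_0^\infty \eto^{-sm^2}\,p_s^\infty(i,j)\,\diff s$ on $\mathbb{Z}^2$. Expanding $\eto^{s\Delta_\Lambda^P}=\eto^{-4s}\sum_{k\ge0}\tfrac{s^k}{k!}A^k$ with $A$ the entrywise nonnegative adjacency operator of a connected graph gives $p_s^\Lambda(i,j)>0$ for all $s>0$, hence the strict positivity $(C_m^\Lambda)_{ij}>0$. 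By the method of images (Poisson summation on the torus) one has $p_s^\Lambda(i,j)=\sum_{n\in L\mathbb{Z}^2}p_s^\infty(0,\,i-j+n)$ with $L=|\Lambda|^{1/2}$, hence $(C_m^\Lambda)_{ij}=\sum_{n\in L\mathbb{Z}^2}(C_m^\infty)_{0,\,i-j+n}$. Keeping the $n=0$ term alone gives $(C_m^\Lambda)_{jj}\ge(C_m^\infty)_{jj}$, and the bound $(C_m^\infty)_{jj}\ge K_1\ln(1/m)+K_2$ for $m<1$ follows from the Fourier representation $(C_m^\infty)_{jj}=\int_{[-\pi,\pi]^2}\tfrac{\diff k}{(2\pi)^2(\hat E(k)+m^2)}$ together with $\hat E(k)=2\sum_l(1-\cos k_l)\le|k|^2$, which reduces it to $\int_{|k|<1}\tfrac{\diff k}{(2\pi)^2(|k|^2+m^2)}=\tfrac1{4\pi}\ln\tfrac{1+m^2}{m^2}\ge\tfrac1{2\pi}\ln\tfrac1m$ plus a strictly positive $O(1)$ contribution from $|k|>1$.

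For the off-diagonal upper bound I would insert into $\int_0^\infty\eto^{-sm^2}p_s^\infty(0,x)\,\diff s$ the standard lattice heat-kernel bound $p_s^\infty(0,x)\le\tfrac{K}{1+s}\,\eto^{-c|x|^2/(s+|x|)}$ for all $s>0$ (Carne--Varopoulos and Nash/Davies type, or a direct Fourier saddle-point estimate, with sharp constant $c=\tfrac14$ in the deep diffusive range $s\gg|x|$). Splitting at $s\simeq|x|$: the range $s<|x|$ contributes at most $K|x|\eto^{-c|x|/2}$, negligible against both claimed bounds since there $|x|^2/(s+|x|)\ge|x|/2$; the range $s>|x|$ is bounded by $K\int_{|x|}^\infty\eto^{-sm^2-c|x|^2/(2s)}\tfrac{\diff s}{s}$, whose exponent is extremized at $s^\ast\asymp|x|/m$ with value $\asymp m|x|$. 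For $|x|>1/m$ the extremum is in range and Laplace evaluation yields $K(m|x|)^{-1/2}\eto^{-m|x|}$; for $|x|\le1/m$ the exponent stays $O(1)$ on the plateau $s\in(|x|^2,\,1/m^2)$ where $p_s^\infty\asymp 1/s$, so the integral is $\asymp\int_{|x|^2}^{1/m^2}\tfrac{\diff s}{s}\asymp\ln\tfrac{1}{m(1+|x|)}$ up to an $O(1)$ absorbed in $K$, with $|x|\lesssim1$ being the familiar logarithmic behaviour of the $2$D massive Green's function. To return to finite volume I place $i-j$ in the fundamental domain, $|i-j|\lesssim L$; since $mL>1$, every image $n\ne0$ has $|i-j+n|\gtrsim L>1/m$, so $\sum_{n\ne0}(C_m^\infty)_{0,\,i-j+n}\le K\sum_{n\ne0}(m|n|)^{-1/2}\eto^{-cm|n|/2}\le K$, which is at worst a bounded multiple of the $n=0$ term and is absorbed in the constant, giving \eqref{eq:appcov:upperbound}.

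The main obstacle is the intermediate regime $1\ll|i-j|\le1/m$: extracting the sharp logarithm $\ln\tfrac{1}{m(1+|i-j|)}$ rather than the cruder $\ln(1/m)$ forces one to keep the Gaussian factor in the heat-kernel bound and carry out the plateau split carefully, and then to verify that the periodization correction (controlled by $mL>1$) does not spoil this as $|i-j|$ approaches its maximal value $\asymp L$. A self-contained alternative is to work straight from $(C_m^\infty)_{0x}=\int\tfrac{\eto^{ik\cdot x}}{\hat E(k)+m^2}\tfrac{\diff k}{(2\pi)^2}$, shifting the contour $k\mapsto k+i\mu\,x/|x|$ for $\mu$ just below $m$ to produce the exponential factor and estimating the low-momentum region separately for the logarithm; the remaining steps are then routine.
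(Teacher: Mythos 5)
Your argument is correct and reaches all three conclusions of the lemma, but the central step goes by a genuinely different route than the paper's. The periodization identity and the positivity are the same content as in the paper's Step 1 — there they come from the random-walk Neumann series $C=\sum_{k\ge0}D^{-1}(ND^{-1})^k$ rather than from the heat semigroup, but these are two presentations of the same expansion, and the diagonal lower bound via the Fourier integral is identical. The real divergence is in the decay of $(C_m^\infty)_{0x}$: the paper rescales the Fourier integral, performs the $k_1$-integration exactly by residues at the pole $k_1=\pm i\,r(k_2)$, and then estimates the remaining one-dimensional $k_2$-integral (with the change of variables $s=r(k_2)t$ to extract the logarithm when $m|x|\lesssim1$), following Salmhofer; you instead integrate a Gaussian heat-kernel bound against $\eto^{-sm^2}$ and evaluate by Laplace's method in the time variable. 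Your route makes the origin of the three regimes (plateau logarithm, Ornstein--Zernike prefactor $(m|x|)^{-1/2}$, exponential rate $\asymp m$) more transparent; its cost is that the input bound $p_s^\infty(0,x)\le K(1+s)^{-1}\eto^{-c|x|^2/(s+|x|)}$ must itself be proved or cited with an explicit constant, which is comparable in length to the paper's residue computation.

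Two points to tighten. First, recovering the rate $\eto^{-m|x|}$ \emph{exactly} requires the sharp constant $c=\tfrac14$ and not giving any of it away: your replacement $s+|x|\le 2s$ already degrades the Laplace value from $m|x|$ to $m|x|/\sqrt2$. This is harmless in context (the paper's own computation only yields $\eto^{-m|x|/2}$ since it works with $t=m|i-j|/2$, and downstream only fractions of the decay are ever used, cf.\ Lemma \ref{lem:cluster:decayBG}), but as stated your sketch does not quite deliver the literal exponent of \eqref{eq:appcov:upperbound}. Second, the image sum is bounded by an absolute constant, not by ``a bounded multiple of the $n=0$ term'' — that term can be exponentially small when $|i-j|\asymp|\Lambda|^{1/2}$. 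What one actually needs is that every image satisfies $|i-j+n|\ge|i-j|_P$ by minimality of the periodic distance, so a fraction of the exponential decay dominates the claimed bound at $|i-j|_P$ while the remaining fraction, together with $m|\Lambda|^{1/2}>1$, sums the images; this is precisely the paper's split into $|n|<2$ and $|n|\ge2$.
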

\begin{rem} The decay for $J$ and $C$ in  
\eqref{eq:preliminary:expdecayC} follow directly from this result.
The same holds for the complex covariance $B$ since  $|B_{ij}|\leq C_{ij}$
(see \eqref{eq:appcov:Bestimate} below).
\end{rem}

\begin{proof}
First we establish a series expansion and write $C_m^\Lambda$ as a series in $C_m^\infty$.
Using the ideas of Salmhofer \cite{salmhofer} in the continuous case, we prove the 
desired decay for $C_m^\infty$. Finally we conclude that $C_m^\Lambda$ has the same decay.
\begin{description}[leftmargin=0cm]
\item[Step 1: Series expansion]
To compare $C_{m}^{\Lambda }$ and $C^\infty_{m}$, we can write the two Laplacians as 
$-\Delta_\Lambda^P= 4 \mathds{1}_\Lambda - N_\Lambda^P$, and $-\Delta= 4\mathds{1}_{\mathbb{Z}^2} - N_{\mathbb{Z}^2},$
where $\mathds{1}$ is the identity matrix on $\Lambda$ and $\mathbb{Z}^2$, respectively,
and $N$ is the matrix with entries $N_{ij}=1$ if $|i-j|=1$ and $N_{ij} = 0$ otherwise. 
Note that one uses the periodic distance $|\cdot|_P$ in the torus $\Lambda$ 
in the case of periodic boundary conditions.
The covariances can then be written as a series
\begin{align}
\label{eq:appcov:seriesexpansion}
 C_{m}^{\Lambda } =(D-N_\Lambda^P)^{-1} = \sum_{k=0}^\infty D^{-1} (N_\Lambda^P D^{-1})^k,
\quad
  C^\infty_{m} 
  = \sum_{k=0}^\infty 
 D^{-1}_{\mathbb{Z}^2} (N_{\mathbb{Z}^2} D^{-1}_{\mathbb{Z}^2})^k  
\end{align}
where $D=(4+m^2)\mathds{1}_\Lambda$ and $D_{\mathbb{Z}^2}= (4+m^{2})\mathds{1}_{\mathbb{Z}^2}$ 
are diagonal matrices.
 This representation is obtained by iterating the identity
  \begin{align*}
  (A+B)^{-1} - A^{-1} = - A^{-1} B (A+B)^{-1},
 \end{align*}
for matrices $A$ and $B$ with $A$ and $A+B$ invertible.
 To prove convergence, we use the structure of $N_\Lambda^P$ 
 and rewrite the sum as a sum over paths
 \begin{align*}
  \sum_{k=0}^\infty \left (D^{-1} (N_\Lambda^P D^{-1})^k\right )_{ij} 
  &= \sum_{k=0}^\infty \sum_{\gamma\in \Gamma_{ij}^\Lambda, |\gamma| = k} \lambda^{k+1}
 \leq \sum_{k=0}^\infty 4^k \lambda ^{k+1}
  = \frac{\lambda}{1-4\lambda} = \frac{1}{m^2}<\infty,
 \end{align*}
 where $\lambda = (4 + m^2)^{-1}$ and $\Gamma_{ij}^\Lambda$ is 
 the set of all paths from $i$ to $j$ in the torus $\Lambda$. 
 In the second step, we bound the number of paths from $i$ to $j$ of length $k$
 by the number of all paths of length $k$ starting in $i$, i.e. $4^k$.
 Therefore, the sum converges and is well-defined.
  By the same arguments, we can write
\begin{align*}
  (C^\infty_{m})_{ij}  ={ \sum}_{\gamma \in \Gamma_{ij}^{\mathbb{Z}^2}} \ \lambda^{|\gamma|+1}<\infty,
 \end{align*}
 where $\Gamma_{ij}^{\mathbb{Z}^2}$ is the set of all paths from $i$ to $j$ in $\mathbb{Z}^2$.
The formulas above imply directly that $(C_m^\Lambda),(C_m^\infty)_{ij}>0$. 
Now, we can identify each point $x\in\mathbb{Z}^2$ with a point $\tilde x = x + n |\Lambda|^{1/2}$ 
in $\Lambda$, where $n\in\mathbb{Z}^2$. 
By identifying each path $\gamma\in\Gamma_{ij}^{\mathbb{Z}^2}$ in $\mathbb{Z}^2$
with the corresponding path in $\tilde\gamma\in\Gamma_{ij}^\Lambda$ in the torus, 
we easily obtain $(C_{m}^\infty)_{ij} \leq (C^\Lambda_m)_{ij},$ hence
the first part of Lemma \ref{lem:appcov:covariance}. In order to get the inverse relation
we need to characterize the paths in $\Gamma^\Lambda_{ij}$, which cannot be identified
with paths in $\Gamma_{ij}^{\mathbb{Z}^2}$.
These are exactly the paths in $\Gamma^\Lambda_{ij}$ such that their corresponding paths in $\mathbb{Z}^2$ 
do not end in $j$ but rather in $j_{n} = j + n |\Lambda|^{1/2}$, with $n\in\mathbb{Z}^2\backslash \{0\}$.
These paths cross the boundary of the cube $\Lambda$ in such a way
that for at least one of the $2$ space dimensions the differences $n_1$ and/or $n_2$ of the 
number of crossings in positive and negative direction, respectively, is non-vanishing. 
We write
\begin{align}
\label{eq:appcov:expansionC}
(C_m^\Lambda)_{ij} 
= \sum_{n\in\mathbb{Z}^2}{\sum}_{\tilde\gamma\in\Gamma_{ij_{n}}^{\mathbb{Z}^2}} \ \lambda^{|\gamma|+1}
= \sum_{n\in\mathbb{Z}^2} (C_m^\infty)_{i j_{n}},
\end{align}
where $ j_{n} = j + n |\Lambda|^{1/2}$ as before.

\item[Step 2: Decay of $\mathbf{C_m^\infty}$]
We prove that $C_m^\infty$ has the desired decay \eqref{eq:appcov:upperbound}
The proof is similar to the proof in the continuous case in \cite[Lemma 1.10]{salmhofer}, 
but the expressions become more complicated in the discrete case. 
We give a sketch of the main steps.
By its Fourier representations, the covariance for $\mathbb{Z}^2$ can be  written as 
\begin{align}\label{eq:appcov:fourierCinfty}
(C_m^\infty)_{ij}  
= \int_{[-\pi,\pi]^2}
\frac{\eto^{i(k,i-j)}}{2\sum_{l=1}^2(1-\cos k_l)+m^2}\diff^2k,
\end{align}
where  $k = (2\pi)/|\Lambda|^{1/d} n$.
First, by rescaling $k\to mk$, we obtain 
\begin{align*}
(C_m^\infty)_{ij}  
=  \int_{\left[-\pi/m,\pi/m\right]^2}
\frac{\eto^{im(k,i-j)}}{2m^{-2}\sum_{l=1}^2
(1-\cos(mk_l))+1}\diff  ^2k.
\end{align*}
We can assume that $i_1-j_1\geq |i_{2}-j_{2}|>0$. Considering the integrand as a function in $k_1$, there are two poles 
\begin{align*}
k_1^\pm = \pm i \tfrac{1}{m}\arcosh 
\left(\tfrac{m^2}{2}+2-\cos \left(mk_2\right)\right) 
= \pm i r(k_2)
\end{align*}
of order one in the complex plain. Closing the integration contour for $k_1$ to the rectangle
with vertices $- m^{-1}\pi, m^{-1}\pi, m^{-1}\pi+iy$ and $- m^{-1}\pi+iy$ such that 
the sign $\sgn y (i_1-j_1) = 1$ and send $|y|\to \infty$, we can apply the residue theorem and 
the following integral in $k_2$ remains
\begin{align}
\label{eq:appcov:salmhofer}
(C_m^\infty)_{ij}  
= 2\pi \int_{-\pi/m}^{\pi/m}
\frac{\eto^{imk_2 (i_2-j_2)}\eto^{-m (i_{1}-j_{1}) r(k_2)}}{\frac{2}{m}
\sinh\arcosh\left(\frac{m^2}{2}+2-\cos (mk_2)\right)}\diff  k_2.
\end{align}
Using $\sinh\arcosh z = \sqrt{z^2-1}$ for $z\geq 1$, 
the absolute value of the integral can be bounded by
\begin{align*}
|(C_m^\infty)_{ij}  |\leq2\pi \int_{0}^{\pi/m}
\frac{\eto^{-r(k_2)t}}{\frac{1}{m}\sqrt{ \left(\frac{m^2}{2}
+2-\cos(m k_2)\right)^2-1}} \diff  k_2,
\end{align*}
where $t= m|i-j|/2$. Note that we showed above that $(C_m^\infty)_{ij}\geq 0$.
Let us assume $t\geq 1$ first.
The residual $r(k_2)$ is monotone increasing for $k_2\in\left[0,m^{-1}\pi\right]$ 
and bounded by
\begin{align*}
r(k_2)\geq 
\begin{cases}
r(0) + ck_2^2= 1 + O (m)+ ck_2^2 & \text{if } k_2 \leq 1,\\
r(0) + ck_2 = 1 + O (m) + ck_2 &\text{if } k_2 > 1,
\end{cases}
\end{align*}
where $c$ is independent of $m$ and $k_2$.
One can bound the square root in the denominator for all $k_2\in\left[0,m^{-1}\pi\right]$ by
\begin{align*}
\sqrt{ \left(\tfrac{m^2}{2}+2-\cos\left(m k_2\right)\right)^2-1}
\geq m.
\end{align*}
Therefore, the integral is bounded by
\begin{align*}
(C_m^\infty)_{ij}\leq& 2\pi \eto^{-t} \left(\int_0^1 \eto^{-tck_2^2} \diff  k_2 +
\int_1^{\pi/m}\hspace{-0,3cm}\eto^{-tck_2} \diff  k_2\right)\\\leq& 
  2\pi \eto^{-t}\left(\frac{1}{\sqrt{t}}\int_0^{\sqrt{t}} \hspace{-0,3cm}\eto^{-ck_2^2}\diff  k_2 +
 \frac{1}{t}\int_{t}^{\pi t/m}\hspace{-0,3cm} \eto^{-ck_2} \diff  k_2\right)\\
\leq & \frac{2\pi }{ \sqrt{t}} \eto^{-t} \left(\int_0^\infty \eto^{-ck_2^2} \diff  k_2 + 
\int_1^{\infty} \eto^{-ck_2} \diff  k_2\right)\leq  \frac{K}{\sqrt{t}}\eto^{-t},
\end{align*}
where in the last line we used $t\geq 1$.
This proves the second part of \eqref{eq:appcov:upperbound}.
In the case  $0<t\leq 1$, we perform first in \eqref{eq:appcov:salmhofer}
the change of variables   
\begin{align*}
s=r(k_2)t\quad \equiv \quad k_2 (s) = \tfrac{1}{m}
\arccos \left(2+\tfrac{m^2}{2}-\cosh \left(\tfrac{ms}{t}\right)\right).
\end{align*}
Inserting the Jacobian  
\begin{align*}
\tfrac{\diff  k_2}{\diff  s} = 
\tfrac{\sinh\left(\frac{ms}{t}\right)}{t\sqrt{1-\left(2+\frac{m^2}{2}-
\cosh\left(\frac{ms}{t}\right)\right)^2}},
\end{align*}
and repeating the arguments after \eqref{eq:appcov:salmhofer}, we obtain
\begin{align*}
(C_m^\infty)_{ij}& \leq  K \int_{s_0}^{s_1} 
\frac{\eto^{-s}}{\frac{t}{m}\sqrt{1-\left(2+\frac{m^2}{2}-
\cosh\left(\frac{ms}{t}\right)\right)^2}}\diff  s 
\sim K\int_t^\infty \hspace{-3pt}\frac{\eto^{-s}}{\sqrt{s^2-t^2}}\diff  s 
\sim K \ln t^{-1},
\end{align*}
where $s_0 = r(0)t$ and $s_1 =  r(\pi/m)t$, and we used again $m\ll 1$. 
It remains to consider the case $i=j$. 
Using the Fourier integral representation one can see  that  $(C^\infty_m)_{jj} \leq K \ln m^{-1}$,
hence we can change the bound for small distances to 
\begin{align*}
(C^\infty_m)_{ij} &\leq K \ln\left(\tfrac{1}{m(1+|i-j|)}\right) 
& \text{if } |i-j|\leq \tfrac{1}{m}
\end{align*}

\item[Step 3: Conclusion]
In order to estimate \eqref{eq:appcov:expansionC}, we divide the sum into two pieces:
\begin{align*}
(C_m^\Lambda)_{ij} = 
\sum_{n\in\mathbb{Z}^2} (C_m^\infty)_{i j_{n}} = \sum_{n\in\mathbb{Z}^2: |n|<2} (C_m^\infty)_{ij_{n}} + 
\sum_{n\in\mathbb{Z}^2:|n|\geq 2} (C_m^\infty)_{i j_{n}}.
\end{align*}
For the first sum, note that
$(C_m^\Lambda)_{ij}$ depends only on the distance $|i-j|_P$ and we can assume that the periodic distance 
is reached inside the cube, i.e. $|i-j|_P = |i-j|$. Then we can estimate 
$|i- j_{n}| \geq |i-j|$ and therefore each covariance $(C_m^\infty)_{i j_{n}}\leq (C_m^\infty)_{ij}$. 
Since the sum contains finitely terms, the first sum decays as $C_m^\infty$ with a modified constant
$K$ in front.
To control the second sum note that
$|i- j_{n}| \geq \max_k (n_k -1) |\Lambda|^{1/2}\geq|\Lambda|^{1/2}\geq m^{-1}$ for $|n|\geq 2$.
Extracting the desired decay from each $(C_m^\infty)_{ij_n}$, a fraction of the exponential decay
\eqref{eq:appcov:upperbound} remains in the sum that allows to perform the sum and yields
a constant. 

Finally, to prove the second part of Lemma \ref{lem:appcov:covariance},
we partition the integration region of \eqref{eq:appcov:fourierCinfty} into $\|k\|\leq 1,$ and   $\|k\|> 1.$ 
The integral over the second region is bounded below by a constant, while the 
the integral over the first region generates the $\ln m^{-1}$ contribution.
\qedhere
\end{description}
\end{proof}

\begin{rem}
For the case of a complex mass as in $B$, note that we can apply the same series expansion as in 
\eqref{eq:appcov:seriesexpansion}  and estimate the absolute value by
\begin{align}
\label{eq:appcov:Bestimate}
|B_{ij}| \leq  W^{-2}\sum_{k=0}^ \infty | (\tilde D^{-1}(N_\Lambda^P \tilde D^{-1})^k)_{ij}| \leq
W^{-2}\sum_{k=0}^ \infty  ((D^{-1}(N_\Lambda^P  D^{-1})^k)_{ij} = C_{ij},
\end{align}
where $\tilde D$ is a diagonal matrix with entries $4+(m_r^2+i m_i ^2)/W^2$ and $\re \tilde D = D$.
\end{rem}

\begin{lemma}
\label{lem:appcov:mass}
Let $C\in\mathbb{R}^{N\times N}$ be a real symmetric matrix such that $C^{-1}\geq c\Id$ as a quadratic 
form, for some $c>0$. Let $B = (C^{-1}+im\Id)^{-1},$ with $m\in \mathbb{R}.$  
Then, the  restriction of $B$ to any subset  $Y\subset \{1,\dots,N\}$, satisfies
$\re (B_Y)^{-1}\geq c \Id_Y$ for any choice of $m$.
\end{lemma}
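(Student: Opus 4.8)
The plan is to reduce the statement to a Schur-complement identity. Fixing the subset $Y$ and writing $Y^{c} = \{1,\dots,N\}\setminus Y$, I would split every matrix into $2\times 2$ blocks indexed by $(Y,Y^{c})$. Since $B^{-1} = C^{-1} + im\,\Id$ and the perturbation $im\,\Id$ is block diagonal, the lower-right block of $B^{-1}$ is $R := (C^{-1})_{Y^{c}Y^{c}} + im\,\Id_{Y^{c}}$, whose Hermitian part is the principal submatrix $(C^{-1})_{Y^{c}Y^{c}} \ge c\,\Id_{Y^{c}} > 0$; in particular $R$ is invertible, hence so is $B_{Y}$, and the block-inversion formula gives
\[
(B_{Y})^{-1} = (C^{-1})_{YY} + im\,\Id_{Y} - (C^{-1})_{YY^{c}}\, R^{-1}\, (C^{-1})_{Y^{c}Y}.
\]
(The degenerate case $Y^{c}=\emptyset$ is immediate, since then $(B_{Y})^{-1} = C^{-1}+im\,\Id$ has real part $C^{-1}\ge c\,\Id$.)

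Next I would take real parts. As $B_{Y}$ is complex symmetric, its Hermitian part coincides with its entrywise real part, and since the off-diagonal blocks $(C^{-1})_{YY^{c}}$ are real, conjugation by them commutes with $\re$; thus
\[
\re\,(B_{Y})^{-1} = (C^{-1})_{YY} - (C^{-1})_{YY^{c}}\,\re(R^{-1})\,(C^{-1})_{Y^{c}Y}.
\]
The one genuine computation is the bound $\re(R^{-1}) \le S^{-1}$ with $S := (C^{-1})_{Y^{c}Y^{c}}$: since $R^{-1}$ and its conjugate are both functions of the real symmetric matrix $S$, they commute, so $\re(R^{-1}) = S\,(S^{2}+m^{2}\Id)^{-1}$ and $S^{-1} - \re(R^{-1}) = m^{2}S^{-1}(S^{2}+m^{2}\Id)^{-1} \ge 0$. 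Inserting this (and using again that $(C^{-1})_{YY^{c}}$ is real, so the conjugation preserves the order) yields
\[
\re\,(B_{Y})^{-1} \ge (C^{-1})_{YY} - (C^{-1})_{YY^{c}}\,[(C^{-1})_{Y^{c}Y^{c}}]^{-1}\,(C^{-1})_{Y^{c}Y}.
\]

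Finally I would identify the right-hand side as the Schur complement of the block $(C^{-1})_{Y^{c}Y^{c}}$ in $C^{-1}$, which by the block-inversion formula applied to the positive-definite matrix $C^{-1}$ equals $(C_{YY})^{-1}$, the inverse of the principal submatrix of $C$. Since $C^{-1}\ge c\,\Id$ is equivalent to $C\le c^{-1}\Id$, restriction gives $C_{YY}\le c^{-1}\Id_{Y}$, hence $(C_{YY})^{-1}\ge c\,\Id_{Y}$, and the claim follows. I do not expect a real obstacle: the only points needing care are to keep the principal submatrix $(C^{-1})_{YY}$ distinct from the inverse $(C_{YY})^{-1}$ of a principal submatrix, and to note that all the submatrices involved are invertible (resp.\ positive) because $C^{-1}\ge c\,\Id>0$; the inequality $\re(R^{-1})\le S^{-1}$ is the one place where an actual estimate is used.
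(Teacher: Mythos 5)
Your proof is correct and follows essentially the same route as the paper: both compute $\re (B_Y)^{-1}$ via the Schur complement of the block $(C^{-1})_{Y^{c}Y^{c}}+im\Id$ in $B^{-1}$, arriving at the same expression $(C^{-1})_{YY}-(C^{-1})_{YY^{c}}\,S(S^{2}+m^{2})^{-1}(C^{-1})_{Y^{c}Y}$ with $S=(C^{-1})_{Y^{c}Y^{c}}$. The only divergence is the final step: the paper tests the quadratic-form hypothesis on $C^{-1}$ with the choice $w=-S(S^{2}+m^{2})^{-1}(C^{-1})_{Y^{c}Y}v$, whereas you bound $S(S^{2}+m^{2})^{-1}\le S^{-1}$ and identify the resulting Schur complement of $C^{-1}$ with $(C_{YY})^{-1}\ge c\,\Id_Y$; both are valid one-line conclusions of the same argument.
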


\begin{proof}
Using Schur's complement, we can write 
\begin{align*}
\re B_Y^{-1} 
= C^{-1}_{YY} - C^{-1}_{YY^C}C^{-1}_{Y^CY^C}((C^{-1}_{Y^CY^C})^2+m^2)^{-1}C^{-1}_{Y^CY}.
\end{align*}
By assumption we have for all $v\in\mathbb{R}^{Y}$ and $w\in\mathbb{R}^{Y^C}$
\begin{align*}
(v, C^{-1}_{YY}v) + (v, C^{-1}_{YY^C}w) + (w,C^{-1}_{Y^CY}v) + (w, C^{-1}_{Y^CY^C}w)\geq c((v,v)+ (w,w)).
\end{align*}
Choosing $w= -C^{-1}_{Y^CY^C}((C^{-1}_{Y^CY^C})^2+m^2)^{-1}C^{-1}_{Y^CY}v$,
we obtain an even better bound than the desired result for $\re (B_Y)^{-1}$.
\end{proof}

\end{appendix}

\section*{List of symbols}
\begin{longtable}{ll}
$\Lambda$ & $\subset \mathbb{Z}^2$, discrete cube.\\
$H$ & $:\Lambda\times\Lambda\to\mathbb{C}$ random band matrix.\\
$W$ & band width. \\ 
$\bar\rho_\Lambda(E) $ & averaged density of states in finite volume $\Lambda$.\\
$G^+_{\Lambda}(z)$ & Green's function, $z\in\mathbb{C}$. \\
$\rho_{SC}(E)$ & Wigner's semicircle law.\\
$E_\varepsilon$ & $=E+i\varepsilon$ energy with imaginary part. \\
$J$ & initial covariance. \\
$\mathcal{I}$ & energy interval. \\
$\alpha$ & $\in (0,1)$, parameter entering in the definition of the reference volume \\& in the cluster expansion. \\

$a,b$ & $\in \mathbb{R}^\Lambda$ integration variables.\\
$a_s^\pm,b_s^\pm$ & saddle points. \\
$\mathcal{E}$ & $=\mathcal{E}_r-\mathcal{E}_i = \tfrac{E}{2}-i\sqrt{1-\tfrac{E^2}{4}}$, value of saddle point $a_s^+$.\\ 
$B$ & new complex covariance, obtained after contour deformation.\\
$C$ & new real covariance.\\
$\diff\mu_J(a,b)$ & Gaussian measure with covariance $J$. \\
$\mathcal{R}(a,b)$ & remainder in the functional integral after contour deformation.\\
$D$ & diagonal matrix depending on $a,b$.\\
$\mathcal{V}(a,b)$ & effective potential after contour deformation.\\
$\ensuremath{V(x)}$ & cubic Taylor remainder.\\
$\mathcal{O}(a,b)$ & local observable, later $\mathcal{O}_{m,n}(a,b)$.\\
$m_r,m_i$ & real and imaginary part of complex mass term $1-\mathcal{E}^2$ of $C$.\\
$I^s$ & $\subset \mathbb{R}^\Lambda \times \mathbb{R}^\Lambda$, partition of integration domain, $s=1,...5$.\\
$F^{m,n}_s$ & functional integral with local observable $\mathcal{O}_{m.m}$ restricted to $I^s$.\\

$M$ & $=(M_j)_{j\in\Lambda}$ set of $2\times 2$ supermatrices.\\
$\bar\rho_j,\rho_j)_{j\in\Lambda}$ & set of Grassmann variables.\\
$\diff\mu_B(M)$ &  Gaussian measure in both complex and Grassmann variables.\\
$\mathcal{V}(M)$ & effective potential depending on the supermatrix $M$.\\

$s_p$ & inductively introduced interpolation parameters.\\
$C(s)$ & interpolated real covariance $C(s)_{ij}=s_{ij}C_{ij}$.\\
$B(s)$ & interpolated complex covariance $(C(s)^{-1}+i\sigma_Em_i^2)^{-1}$.\\
$G_q(s)$ & propagator depending only on $s_1,\dots,s_q$.\\
$\tilde Y$ & $(\triangle_0,\tilde\triangle_1,\dots,\tilde\triangle_r)$ generalized polymer.\\
$T$ & ordered tree on generalized polymer $\tilde Y$.\\
$\triangle$ & cube in $\mathbb{Z}^2$ of size $W^2(\ln W)^\alpha$.\\
$\triangle_0$ & root cube containing $0$.\\
$\tilde\triangle$ & $=(\triangle,\triangle',\triangle'')$ generalized cube.\\
$i,k,k',j$ & $\in\mathbb{Z}^2$ indices summed over  $i\in\triangle',j\in\triangle'',k'\in\triangle,k\in$ ``old'' cubes.
\end{longtable}

\end{document}